\tikzstyle{rednode} = [shape=rectangle, fill=red!50, line width=3]
\tikzstyle{bluenode} = [shape=rectangle, fill=blue!50, line width=3]
\tikzstyle{yellownode} = [shape=rectangle, fill=yellow!50, line width=3]
\tikzstyle{dotnode} = [dashed, pattern={Lines[angle=90,distance=3pt]}, pattern color=gray!150]
\tikzstyle{backslashnode} = [dashed, pattern={Lines[angle=45,distance=3pt]}, pattern color=gray!150]
\tikzstyle{slashnode} = [dashed, pattern={Lines[angle=-45,distance=3pt]}, pattern color=gray!150]
\newtheorem{theorem}{Theorem}[section]
\newtheorem{lemma}[theorem]{Lemma}
\newtheorem{proposition}[theorem]{Proposition}
\newtheorem{assumption}{Assumption}[section]   % 独立番号
\newtheorem{definition}{Definition}[section]   % 独立番号
\newtheorem{remark}{Remark}[section]
\newtheorem{example}{Example}[section]
\newcommand{\MF}{\mathcal{F}}
\newcommand{\MX}{\mathcal{X}}
\newcommand{\MY}{\mathcal{Y}}
\newcommand{\MC}{\mathcal{C}}
\newcommand{\E}{\mathbb{E}}
\newcommand{\BP}{\mathbb{P}}
\newcommand{\I}{\mathbf{1}}
\newcommand{\Real}{\mathbb{R}}
\newcommand{\indep}{\perp \!\!\! \perp}
\newcommand{\Exp}{\mathrm{exp}}
\newcommand{\Obs}{\mathrm{obs}}
\begin{document}

\setstretch{1.25} 
\title{The Identification Power\\ of
Combining Experimental and Observational Data\\ for Distributional Treatment Effect Parameters
%\thanks{I thank Brian Gaines, Marc Henry, Hidehiko Ichimura, Teppei Yamamoto, and participants at various seminars and conferences for their comments and suggestions. Masaki Suzuki provided excellent research assistance.  I gratefully acknowledge financial support from JSPS KAKENHI Grant (number 24K16342).}
} 
%\thanks{The full version of the paper is available at \url{https://arxiv.org/abs/2106.12886}.}

\author{Shosei Sakaguchi\thanks{Faculty of Economics, The University of Tokyo, 7-3-1 Hongo, Bunkyo-ku, Tokyo 113-0033, Japan. Email:~sakaguchi@e.u-tokyo.ac.jp.} }

\date{\today}
%\date{\today \medskip \\ \textit{Preliminary Draft}}

\vspace{-0.8cm}

\maketitle

\begin{abstract}
This study investigates the identification power gained by combining experimental data, in which treatment is randomized, with observational data, in which treatment is self-selected, for distributional treatment effect (DTE) parameters. While experimental data identify average treatment effects, many DTE parameters---such as the distribution of individual treatment effects---are only partially identified. We examine whether and how combining these two data sources tightens the identified set for such parameters. For broad classes of DTE parameters, we derive nonparametric sharp bounds under the combined data and clarify the mechanism through which data combination improves identification relative to using experimental data alone. Our analysis highlights that self-selection in observational data is a key source of identification power. 
We establish necessary and sufficient conditions under which the combined data strictly shrink the identified set, and show that such gains arise generically unless selection-on-observables holds in the observational data. We also propose a linear programming approach to compute sharp bounds that can incorporate additional structural restrictions, such as positive dependence between potential outcomes and the generalized Roy selection model. An empirical application using data on negative campaign advertisements in the 2008 U.S. presidential election illustrates the practical relevance of the proposed approach.

\bigskip
\medskip

\begin{spacing}{1.2}
\noindent
\textbf{Keywords:} Data combination; Distributional treatment effect; Heterogeneous treatment effect; Partial identification; Self-selection. \\
%\textbf{JEL codes:} C22, C44, C54
\end{spacing}
\end{abstract}

\setstretch{1.7}

\newpage

\section{Introduction}
    
Researchers often have access to both experimental and observational data when evaluating public policies and medical interventions. Experimental studies, such as randomized controlled trials, are regarded as the gold standard for causal inference, whereas observational data are far more prevalent in practice and often contain rich behavioral variation. This study investigates the identification gains that result from combining these two data sources, with a focus on distributional treatment effects (DTEs).

We consider a general setting in which treatment receipt and outcomes are observed from two data sources: experimental data, where the treatment is randomly assigned, and observational data, where the treatment is self-selected. This structure arises naturally in various policy and medical contexts. In development policy, for example, the efficacy of public health interventions such as mosquito nets is evaluated through field experiments (where nets are randomly assigned) and observational data (where nets are self-purchased). In education, randomized class-size assignments (e.g., Project STAR) coexist with parental or administrative decisions regarding classroom placement. In clinical medicine, drugs may be randomly assigned in clinical trials; however, once approved, they are prescribed based on a physician’s judgment.

Experimental data are valuable because the random assignment of treatment enables identification of key causal parameters such as the average treatment effect (ATE). However, beyond the ATE, DTE parameters are also crucial for policy and medical evaluation, as well as for understanding treatment effect heterogeneity. Let $Y_1$ and $Y_0$ denote potential outcomes under treatment and no-treatment, respectively. Examples of such parameters include: 
(i) the fraction of individuals who benefit from treatment, $\BP(Y_1 > Y_0)$;
(ii) the distribution function of the individual treatment effect, $F_{Y_1 - Y_0}(\delta) \equiv \BP(Y_1 - Y_0 \leq \delta)$;
(iii) the ATE for disadvantaged individuals, $\E[Y_1 - Y_0 \mid Y_0 \leq c]$, where the subpopulation with $Y_0 \leq c$ represents individuals whose baseline outcomes fall below a threshold level $c$ (e.g., a poverty line); and (iv) the correlation between $Y_1$ and $Y_0$.

Although these parameters are highly relevant for policy and medical evaluations, they are generally not point-identified from experimental data alone because their identification requires knowledge of the joint distribution of the two potential outcomes. Experimental data identify only the marginal distributions of $Y_1$ and $Y_0$ and hence yield partial identification (e.g., \citet{heckman1997making,fan2010sharp}). However, the resulting identified sets are often wide and not very informative.

This study explores whether and how combining experimental and observational data can improve the identification of distributional parameters. Specifically, we address the following questions:
(i) Does combining the two data sources shrink the identified set for DTE parameters?
(ii) If so, what mechanisms drive this shrinkage, and under what conditions does it occur?
(iii) How can the identified set under the combined data be characterized and computed?

To address these questions, we build on a framework that incorporates both experimental and observational data. In this setting, researchers observe individuals' outcomes $Y$, treatment status $D \in \{0,1\}$, and a data source indicator $G \in \{\Exp,\Obs\}$, indicating whether an observation comes from an experimental study ($G=\Exp$) or an observational study ($G=\Obs$). A key element of the framework is the latent self-selection type $S \in \{0,1\}$, which represents the treatment choice an individual would make under self-selection. This variable is observed in the observational data (where $S = D$) but unobserved in the experimental data.

We define the identified sets under experimental data alone and under the combined data, and use a copula-based approach to characterize and compare them.
Under standard assumptions---random treatment assignment in the experimental data and external validity across data sources---this framework enables a systematic comparison of what can be learned from experimental data alone versus combining both data sources.

Our central theoretical results show that the identified set under the combined data is generally smaller than that under the experimental data alone, except in special cases where the latent self-selection variable $S$ is independent of the potential outcomes. Whereas the experimental data identify only the marginal distributions $F_{Y_1}$ and $F_{Y_0}$ of the potential outcomes, we show that combining the two data sources enables identification of the joint distributions $(F_{Y_1S}, F_{Y_0S})$ of each potential outcome and the self-selection type. We further show that the distribution pairs $(F_{Y_1}, F_{Y_0})$ and $(F_{Y_1S}, F_{Y_0S})$ fully characterize the identified sets under the experimental and combined data, respectively, thereby revealing a new source of identifying power arising from self-selection. 

This additional identifying power arises because the latent self-selection variable $S$ encodes information about the dependence structure between the potential outcomes. For example, when selection depends on potential gains from treatment, $S$ is systematically related to $(Y_1, Y_0)$ and therefore provides information beyond the marginal distributions. As a result, knowledge of $(F_{Y_1S}, F_{Y_0S})$ imposes additional restrictions on the feasible joint distributions of $(Y_1, Y_0)$, thereby tightening the identified set.

To quantify this identification gain and derive nonparametric sharp bounds, we apply copula bound analysis that builds on and extends the work of \citet{fan2017partial}, who study the identifying power of covariates, to settings with the self-selection variable $S$. For broad classes of DTE parameters represented by supermodular functions or $\varphi$-indicator functions (defined later), we derive analytical expressions for the sharp bounds and establish necessary and sufficient conditions under which combining the two data sources yields tighter identified sets. We further show that identification improves when the dependence structure between the potential outcomes varies across latent self-selection types.

To further tighten the identified sets, we also consider additional structural restrictions that are plausible in many empirical contexts. In particular, we focus on two such restrictions: positive dependence between the potential outcomes \citep{joe2014dependence,frandsen2021partial} and a generalized Roy selection model. Though not required for our main results, these restrictions can substantially narrow the identified sets. To incorporate these restrictions under the combined data, we develop a linear programming approach that efficiently computes sharp bounds for a broad class of DTE parameters.

Our analysis also extends to data from doubly randomized preference trials (DRPTs) \citep{rucker1989two, long2008causal}, a unique yet increasingly prevalent design. In DRPTs, individuals are randomly assigned to one of three groups: a treatment group, a control (no-treatment) group, or a self-selection group in which individuals choose between treatment and no-treatment. This design has been used in a wide range of studies across the social sciences \citep{gaines2011experimental, arceneaux2012polarized, DeBenedictis_2019, ida2024choosing} and medical sciences \citep{king2005impact}. An established advantage of DRPTs is that they enable identification of the ATE for each self-selection group, $\E[Y_1 - Y_0 \mid S = s]$ for $s \in \{0,1\}$ \citep{long2008causal}. Our results highlight a novel advantage of this design: by combining the random-assignment and self-selection samples, it sharpens identification of DTE parameters.

We illustrate the empirical relevance of our approach using DRPT data from \citet{gaines2011experimental}, who study the effects of negative campaign advertisements on individuals’ attitudes toward presidential candidates during the 2008 U.S. presidential election. We find that incorporating self-selection data substantially tightens the identified sets for DTE parameters such as $\mathbb{P}(Y_1 < Y_0)$, the fraction of individuals negatively affected by the advertisements. The results also indicate that the advertisements have substantial but heterogeneous effects. These findings underscore the empirical value of combining random-assignment and self-selection data to estimate DTE parameters.

\subsection*{Related Literature}

This study relates to two strands of literature: (i) combining random-assignment and self-selection data for causal inference, and (ii) partial or point identification of DTE parameters.

The first strand examines the benefits of combining data from random-assignment and self-selection sources. This literature typically pursues two objectives: (1) improving the precision of estimates when experimental data already suffice for identification, and (2) identifying causal parameters that are not identifiable from experimental data alone. For the first objective, \citet{rosenman2023combining}, \citet{yang2023elastic}, and \citet{gui2024combining} propose methods that improve the efficiency of treatment effect estimation by combining the two data sources. For the second, \citet{long2008causal} show that the ATE for each self-selection group, $\E[Y_1-Y_0 \mid S=s]$ for $s \in \{0,1\}$, can be identified using data from a DRPT. \citet{knox2019design} extend this to multiple treatment settings and derive partial identification results. In a different context, where experimental data contain secondary outcomes and observational data contain primary outcomes, \citet{athey2020combining} study identification of the ATE for the primary outcome.\footnote{Related contributions include \citet{athey2025surrogate} and \citet{rambachan2024program}, which study data combination settings with surrogate, secondary, or missing outcomes.} Our setting differs from theirs in that both data sources contain the same type of outcome.

Our contribution to this strand is to uncover a previously unrecognized advantage of combining these two data sources: it can tighten the identified sets for DTE parameters. Unlike prior studies, which focus on point identification of new parameters or gains in estimation efficiency, we show that data combination can also improve informativeness by tightening bounds in partially identified settings.

The second strand studies the partial or point identification of DTE parameters under various assumptions and structural restrictions.\footnote{Notable contributions include \citet{heckman1997making}, \citet{manski1997monotone}, \citeauthor{fan2010sharp} (\citeyear{fan2010sharp}), \citet{fan2010partial}, \citet{fan2014identifying}, \citet{fan2017partial}, \citet{vuong2017counterfactual}, \citet{kim2018identification}, \citet{firpo2019partial}, \citet{callaway2021bounds}, \citet{frandsen2021partial}, \citet{russell2021sharp}, \citet{cui2025policy}, and \citet{kaji2023assessing}, among others.}
In particular, \citeauthor{fan2010sharp} (\citeyear{fan2010sharp,fan2012confidence}), \citet{fan2017partial}, and \citet{firpo2019partial} develop nonparametric bounds for the joint distribution of the potential outcomes and its functionals under minimal assumptions such as random treatment assignment. To tighten these bounds, subsequent studies have imposed additional structure, including time-dependence restrictions in panel settings \citep{callaway2021bounds} and mutual stochastic monotonicity of the potential outcomes \citep{frandsen2021partial}.

This study contributes to this literature by clarifying the identifying power that arises from combining experimental and observational data, a structure that has not previously been explored in this context, and by providing computable characterizations of the sharp bounds under data combination.

Finally, this study contributes to the broader causal inference literature by offering a new perspective on the role of observational data in identification. Although observational data are often viewed as redundant for identification once experimental data are available, we show that they can still sharpen identification of distributional parameters.

\subsection*{Structure of the Paper}
The remainder of the paper is organized as follows. Section~\ref{sec:setup} introduces the data combination framework, formalizes the DTE parameters, and defines their identified sets under experimental data alone and under the combined data. Section~\ref{sec:characterization_identified_sets} characterizes the identified sets under each data scenario and highlights the sources of identification gains from the combined data. Section~\ref{sec:bound_analysis} derives sharp bounds for broad classes of DTE parameters, specifically those represented by supermodular or $\varphi$-indicator functions, and establishes necessary and sufficient conditions under which data combination improves identification. Section~\ref{sec:numerical_example} provides numerical examples illustrating the identifying power of data combination. Section~\ref{sec:computational_approach} introduces additional restrictions and develops a linear programming approach for computing sharp bounds. Section~\ref{sec:empirical_illustration} presents an empirical illustration using DRPT data from \citet{gaines2011experimental}. Section~\ref{sec:conclusion} concludes. All proofs are provided in the appendix.

\section{Setup}\label{sec:setup}

We begin by outlining the framework. Section~\ref{subsec:combined_data} introduces the data combination setting and fundamental assumptions. Section~\ref{subsec:DTE_parameters} defines the DTE parameters and provides illustrative examples. Section~\ref{subsec:identified_set} defines the identified sets under experimental and combined data.

\subsection{Combined Experimental and Observational Data}\label{subsec:combined_data}

We consider observed data consisting of the quadruple $(Y, D, G, X)$, where $Y$ is the outcome; $D \in \{0,1\}$ is a binary treatment indicator; $G \in \{\Exp, \Obs\}$ denotes the data source; and $X$ is a vector of pre-treatment covariates with its support denoted by $\MX$. Specifically, $G$ indicates whether an observation comes from an experimental study ($G = \Exp$) or an observational study ($G = \Obs$). The treatment indicator $D$ equals $1$ if the individual receives treatment and $0$ otherwise. In the experimental data ($G = \Exp$), $D$ is randomly assigned, whereas in the observational data ($G = \Obs$), it is determined through self-selection. Let $Y_1$ and $Y_0$ denote the potential outcomes under treatment and no-treatment, respectively, both of which are assumed to be continuous. The observed outcome is defined as $Y \equiv DY_1 + (1 - D)Y_0$. 

We introduce a latent self-selection variable $S \in \{0,1\}$, defined as the treatment an individual would choose under self-selection. In the observational data ($G = \Obs$), $S$ coincides with the observed treatment, i.e., $S = D$, whereas in the experimental data ($G = \Exp$), $S$ is unobserved. The variable $S$ can also be interpreted as a latent preference for treatment versus no-treatment.

Let $F^{\ast}$ denote the true cumulative distribution function (CDF) of all defined variables $(Y_1, Y_0, Y, D, S, G, X)$, including both observed and unobserved ones. For a generic CDF $F$, we denote the probability and expectation under $F$ by $\BP_{F}(\cdot)$ and $\E_{F}[\cdot]$, respectively. For the true CDF $F^{\ast}$, we use the shorthand $\BP(\cdot)$ and $\E[\cdot]$ in place of $\BP_{F^{\ast}}(\cdot)$ and $\E_{F^{\ast}}[\cdot]$.
 
Throughout the paper, we suppose that $F^{\ast}$ satisfies the following assumptions.

\smallskip

\begin{assumption}[Observed Outcomes]\label{asm:PO}
    $Y = DY_{1} + (1-D)Y_{0}$ a.s.
\end{assumption}

\begin{assumption}[Self-selection]\label{asm:SS}
    $\mathbb{P}(S = D \mid G = \Obs,X) = 1$ a.s.
\end{assumption}

\begin{assumption}[Random Assignment]\label{asm:RA}
    $(Y_1, Y_0) \indep D \mid  X,G = \Exp$.
\end{assumption}

\begin{assumption}[External Validity]\label{asm:EV}
    $(Y_1, Y_0, S) \indep G | X$.
\end{assumption}

\begin{assumption}[Overlap]\label{asm:OL}
    $\mathbb{P}(D = d, G = g|X) > 0$ a.s. for all $d \in \{0,1\}$ and $g \in \{\Exp, \Obs\}$.
\end{assumption}

\smallskip

Assumption~\ref{asm:SS} states that self-selection corresponds to the received treatment in the observational data, which is trivially satisfied by the definition of $S$. Assumption~\ref{asm:RA} requires that treatment is randomly assigned in the experimental study, possibly conditional on $X$.

Assumption~\ref{asm:EV} concerns the external validity of the data sources and requires that any systematic differences between the populations in the experimental and observational studies be captured by $X$. Although this assumption is automatically satisfied under a DRPT design, where $G$ is randomly assigned, it can also be plausible in empirical settings involving data combination. For example, in education research, experimental data from randomized interventions such as Project STAR are often combined with administrative or survey data from the same school system \citep[e.g.,][]{chetty2011does,athey2020combining}, where differences between samples are largely driven by observable characteristics such as demographics, prior achievement, or school characteristics. Conditioning on observed covariates can therefore help make the populations comparable across data sources.%\footnote{Similar considerations apply to policy evaluations that combine randomized pilot programs with administrative data and to field experiments embedded in larger observational datasets, where selection into the experimental sample is governed by observable criteria such as eligibility rules, location, or recruitment procedures.}

Assumption~\ref{asm:OL} imposes an overlap condition on both the treatment status and data source.\footnote{This overlap condition can be relaxed at the cost of yielding wider identified sets under the combined data. For example, if a covariate value $x$ does not satisfy $\mathbb{P}(D = d, G = g \mid X = x) > 0$ for some $g$, then experimental and observational data cannot be combined at that value of $x$. In this case, one may still rely on either the experimental or the observational data alone to construct an identified set conditional on $x$.}

\subsection{Distributional Treatment Effect Parameters}\label{subsec:DTE_parameters}

We consider a parameter of interest $\theta_{o} \in \Real$ that has the following form:
\begin{align}
    \theta_o \equiv \mathbb{E}\left[\psi(Y_1, Y_0)\right], \label{eq:DTE_parameter}
\end{align}
where $\psi : \mathbb{R}^2 \to \mathbb{R}$ is a given function. With various specifications of $\psi$, this formulation encompasses a wide range of DTE parameters, as illustrated below.

\smallskip

\begin{example}[Fraction of Positive Treatment Effects]\label{exm:frac_positive_effect}
When $\psi(y_1, y_0) = \I\{y_1 > y_0\}$, we have $\theta_o = \BP(Y_1 > Y_0)$, the fraction of individuals who benefit from treatment. This parameter captures the distributional impact of the treatment. Importantly, even if the ATE $\E[Y_1 - Y_0]$ is positive, the fraction $\BP(Y_1 > Y_0)$ may still be small, indicating that only a small number of individuals experience large gains, while many experience no gains or even losses. Focusing on $\BP(Y_1 > Y_0)$ is therefore important for avoiding interventions whose benefits are concentrated among only a limited share of individuals.
\end{example}

\smallskip

\begin{example}[Distribution Function of Treatment Effects]\label{exm:cdf_effect}
More generally, when $\psi(y_1,y_0) = \I\{y_1 - y_0 \leq \delta\}$ for a fixed $\delta$, the parameter $\theta_{o}$ corresponds to the distribution function of the individual treatment effect: $F_{Y_1-Y_0}^{\ast}(\delta) \equiv \BP(Y_1 - Y_0 \leq \delta)$.   
\end{example}

\smallskip

\begin{example}[ATE for the Disadvantaged]\label{exm:ate_disadvantaged}
Consider the ATE for disadvantaged individuals, defined as $\E[Y_1 - Y_0 \mid Y_0 \leq c]$ for a fixed threshold $c$. This parameter can be expressed in the form of equation~(\ref{eq:DTE_parameter}) with $\psi(y_1, y_0) = (y_1 - y_0) \cdot \mathbf{1}\{y_0 \leq c\} / \BP(Y_0 \leq c)$. The subpopulation with $Y_0 \leq c$ represents individuals whose outcomes would fall below the threshold $c$ (e.g., a poverty line) in the absence of treatment. The nuisance parameter $\BP(Y_0 \leq c)$ in $\psi(y_1,y_0)$ is point-identified from the experimental data as $\BP(Y_0 \leq c) = \BP(Y \leq c \mid D = 0, G = \Exp)$.
\end{example}

\smallskip

\begin{example}[Probability of Upward Mobility]\label{exm:upward_mobility}
When $\psi(y_1, y_0) = \I\{y_1 > c,\ y_0 \leq c\} / \BP(Y_0 \leq c)$ for a fixed threshold $c$, we have $\theta_o = \BP(Y_1 > c \mid Y_0 \leq c)$. This parameter represents the probability of upward mobility among individuals whose outcomes would fall below $c$ in the absence of treatment. Specifically, it captures the likelihood that individuals with baseline outcomes below a given threshold—such as a poverty line—exceed that threshold when treated. Unlike the ATE, this parameter focuses on distributional gains among the disadvantaged subpopulation.
\end{example}

\smallskip

\begin{example}[Correlation between Two Potential Outcomes]\label{exm:correlation}
When $\psi(y_1,y_0) = (y_1 - \E[Y_1])(y_0 - \E[Y_0])/\sqrt{\text{Var}(Y_1)\text{Var}(Y_0)}$, we have $\theta_{o} = \text{Cor}(Y_1, Y_0)$, the correlation between the two potential outcomes. This parameter reflects the degree of similarity in individual-level responses to the two interventions, offering insights into treatment effect heterogeneity. The nuisance components $\E[Y_d]$ and $\text{Var}(Y_d)$ for $d \in \{0,1\}$ are point-identified from the experimental data.
\end{example}

\smallskip

See also \citet{fan2017partial} for additional examples.

Note that none of the DTE parameters introduced above can be point-identified even with experimental data. Their identification generally requires knowledge of the joint distribution of the two potential outcomes, $F_{Y_1 Y_0}^{\ast}$, whereas experimental data identify only the marginal distributions, $(F_{Y_1}^{\ast}, F_{Y_0}^{\ast})$.

We can also define DTE parameters for various subpopulations: by self-selection type, $\theta_{o,s} \equiv \E[\psi(Y_1, Y_0) \mid S = s]$; conditional on covariates, $\theta_{o,x} \equiv \E[\psi(Y_1, Y_0) \mid X = x]$; and by data source, $\theta_{o,g} \equiv \E[\psi(Y_1, Y_0) \mid G = g]$ for $g \in \{\Exp, \Obs\}$. In particular, $\theta_{o,s}$ captures treatment effects for individuals who would self-select into treatment ($s=1$) or no-treatment ($s=0$), a quantity of interest in many empirical applications.

These subpopulation parameters can be incorporated into our framework with slight modifications. For example, the parameter $\theta_{o,g}$ can be expressed as $$\theta_{o,g} = \E\left[\psi(Y_1,Y_0)\cdot \frac{\I\{G=g\}}{\BP(G=g)}\right],$$ where $\BP(G=g)$ is point-identified from the observed data. Defining $\tilde{\psi}(y_1,y_0) = \psi(y_1,y_0)\cdot \I\{G=g\}/\BP(G=g)$, we can write $\theta_{o,g} = \E[\tilde{\psi}(Y_1,Y_0)]$ and thus analyze it in the same way as $\theta_o$. As for $\theta_{o,s}$, Lemma \ref{lem:appendix_lemma_1} in the appendix shows that $$\theta_{o,s} = \E[\psi(Y_1,Y_0)| D=s, G= \Obs] = \E\left[\psi(Y_1,Y_0)\cdot \frac{\I\{D=s,G=\Obs\}}{\BP(D=s,G=\Obs)}\right].$$ Hence, $\theta_{o,s}$ can also be handled in the same manner as $\theta_{o}$.

In what follows, we study the partial identification of $\theta_o$ using experimental data alone and combined experimental and observational data.

\subsection{Identified Sets for DTE Parameters under Experimental and Combined Data}\label{subsec:identified_set}

We formally define the identified sets for the DTE parameter $\theta_o$ under two data scenarios: (i) experimental data alone and (ii) combined experimental and observational data. We begin with the case of experimental data alone.

Let $\MF^{\dagger}$ denote the class of CDFs for all defined variables $(Y_1, Y_0, Y, D, S, G, X)$ that satisfy Assumptions \ref{asm:PO}--\ref{asm:OL}.\footnote{Formally, $\MF^{\dagger}$ is the set of all CDFs $F$ of $(Y_1, Y_0, Y, D, S, G, X)$ that satisfy Assumptions \ref{asm:PO}--\ref{asm:OL} with $F^{\ast}$ replaced by $F$.}
We begin by defining the identified set for the joint CDF $F^{\ast}$ under experimental data as
\begin{align*}
\MF_{\Exp}^{\ast} \equiv \left\{F \in \MF^{\dagger}: F_{YDX|G}(\cdot,\cdot,\cdot|\Exp) = F_{YDX|G}^{\ast}(\cdot,\cdot,\cdot|\Exp),\ F_{X}(\cdot) = F_{X}^{\ast}(\cdot) \right\},
\end{align*}
where $F_{YDX|G}^{\ast}(\cdot,\cdot,\cdot|\Exp)$ is the distribution of the observables $(Y, D, X)$ in the experimental data, and $F_X^{\ast}$ is the marginal distribution of $X$ in the entire population.
Thus, $\MF_{\Exp}^{\ast}$ consists of all CDFs that satisfy the maintained assumptions and are consistent with the distribution of the observed experimental data and the population distribution of $X$.

We assume that $F_X^{\ast}$ is known, as this is needed to account for potential imbalances in the covariate distributions between the experimental and observational data and to enable meaningful comparisons of the identified sets. When the two data sources are drawn from the same population (i.e., covariates are balanced), this assumption is unnecessary, since $F_X^{\ast}$ can be identified from the experimental data as $F_X^{\ast}(\cdot) = F_{X \mid G}^{\ast}(\cdot \mid \Exp)$. In other cases, covariate information for the entire population is often available from external sources such as demographic datasets. Moreover, when the parameter of interest is the covariate-conditional effect $\theta_{o,x}$ or the effect $\theta_{o,\Exp}$ defined for the experimental population, knowledge of $F_X^{\ast}$ is not required.

Given the identified set of CDFs $\MF_{\Exp}^{\ast}$, the identified set for $\theta_o$ under experimental data is defined as
\begin{align}
    \Theta_{I} \equiv \left\{\E_{F}[\psi(Y_1,Y_0)]: F \in \MF_{\Exp}^{\ast}\right\}. \label{eq:def_Theta_I}
\end{align}
This set consists of all parameter values that are attainable under some distribution in $\MF_{\Exp}^{\ast}$. Any parameter value outside $\Theta_{I}$ is incompatible with the experimental data or the maintained assumptions.

We next consider the case in which both experimental and observational data are available. We first define the identified set for the joint CDF $F^{\ast}$ under the combined data, analogously to $\MF_{\Exp}^{\ast}$, as
\begin{align*}
    \MF^{\ast} \equiv \left\{F \in \MF^{\dagger} : F_{YDGX} = F_{YDGX}^{\ast}\right\},
\end{align*}
where $F_{YDGX}^{\ast}$ is the joint distribution of the observables $(Y, D, G, X)$ in the combined data. This set consists of all CDFs that satisfy the maintained assumptions and are consistent with the observed distribution $F_{YDGX}^{\ast}$.
By construction, $\MF^{\ast} \subseteq \MF_{\Exp}^{\ast}$, since the condition $F_{YDGX} = F_{YDGX}^{\ast}$ in $\MF^{\ast}$ implies both $F_{YDX|G}(\cdot,\cdot,\cdot \mid \Exp) = F_{YDX|G}^{\ast}(\cdot,\cdot,\cdot \mid \Exp)$ and $F_{X}=F_{X}^{\ast}$.

Given the identified set of CDFs $\MF^{\ast}$, the identified set for $\theta_o$ under the combined data is defined as
\begin{align}
    \Theta_{IC} \equiv \left\{\E_{F}[\psi(Y_1,Y_0)]: F \in \MF^{\ast}\right\}. \label{eq:def_Theta_IC}
\end{align}
This set consists of all parameter values that are attainable under some CDF in $\MF^{\ast}$. Any value outside $\Theta_{IC}$ contradicts the maintained assumptions or the observed combined data.

Since $\MF^{\ast} \subseteq \MF_{\Exp}^{\ast}$, it follows that $\Theta_{IC} \subseteq \Theta_{I}$; that is, the identified set under the combined data is no larger than that under experimental data alone. Our primary interest, however, is whether the strict inclusion $\Theta_{IC} \subset \Theta_{I}$ holds. This is equivalent to asking whether combining the two data sources strictly tightens the identified set for $\theta_o$.

This question is nontrivial because a strict inclusion $\MF^{\ast} \subset \MF_{\Exp}^{\ast}$ at the level of CDFs does not necessarily imply a strict inclusion $\Theta_{IC} \subset \Theta_{I}$ at the parameter level. We investigate this question in the following sections.

\section{Characterization of the Identified Sets}\label{sec:characterization_identified_sets}

To investigate whether the strict inclusion $\Theta_{IC} \subset \Theta_{I}$ holds, the definitions of $\Theta_{I}$ and $\Theta_{IC}$ in equations~(\ref{eq:def_Theta_I}) and~(\ref{eq:def_Theta_IC}) are too abstract to yield direct insight. We therefore seek a more interpretable characterization of these identified sets. To this end, we employ the concept of bivariate copulas \citep{sklar1959fonctions}, which serves as a central tool in this and subsequent sections.

\subsection{Characterization of $\Theta_{I}$ for Experimental Data}

We begin with the case of using experimental data alone. Under randomized treatment assignment (Assumption~\ref{asm:RA}) and external validity (Assumption~\ref{asm:EV}), the conditional marginal distributions of the potential outcomes given $X$, $(F_{Y_1 \mid X}^{\ast}, F_{Y_0 \mid X}^{\ast})$, are identified as $F_{Y_d \mid X}^{\ast}(\cdot \mid x) = F_{Y \mid DGX}^{\ast}(\cdot \mid d, \Exp, x)$ for $d = 0,1$. Since $F_X^{\ast}$ is assumed to be known, the joint distributions $(F_{Y_1 X}^{\ast}, F_{Y_0 X}^{\ast})$ are also identified.

Let $\MC$ denote the class of all bivariate copula functions. For each $x \in \MX$, let $C^{\ast}(\cdot, \cdot \mid x) \in \MC$ denote the true conditional copula given $X=x$, which reproduces the true conditional joint distribution $F_{Y_1 Y_0 \mid X}^{\ast}(y_1, y_0 \mid x)$ from the marginals $(F_{Y_1 \mid X}^{\ast}(y_1 \mid x), F_{Y_0 \mid X}^{\ast}(y_0 \mid x))$ as
\[
F_{Y_1 Y_0 \mid X}^{\ast}(y_1, y_0 \mid x)
=
C^{\ast}\!\left(F_{Y_1 \mid X}^{\ast}(y_1 \mid x), F_{Y_0 \mid X}^{\ast}(y_0 \mid x) \mid x\right),
\]
where the existence of such a copula is guaranteed by Sklar's theorem (e.g., \citeauthor{nelsen2006introduction}, \citeyear{nelsen2006introduction}, Theorem 2.3.3).\footnote{Sklar's theorem \citep{sklar1959fonctions} states that any joint distribution $F_{Y_1 Y_0}$ can be expressed as a copula of its marginals:
$F_{Y_1 Y_0}(y_1, y_0) = C\bigl(F_{Y_1}(y_1), F_{Y_0}(y_0)\bigr)$
for some copula function $C$. Conversely, given any marginal distributions $F_{Y_1}$ and $F_{Y_0}$, the function $C(F_{Y_1}(y_1), F_{Y_0}(y_0))$, for any copula $C$, defines a valid bivariate distribution with those marginals.}
Using the true conditional copula $C^{\ast}(\cdot, \cdot \mid x)$, the parameter $\theta_o$ can be expressed as
\begin{align*}
\theta_o
&= \E\!\left[\int\!\!\int \psi(y_1, y_0)\, dF_{Y_1 Y_0}^{\ast}(y_1, y_0)\right] \\
&= \E\!\left[\int\!\!\int \psi(y_1, y_0)\, dC^{\ast}\!\left(F_{Y_1 \mid X}^{\ast}(y_1 \mid X), F_{Y_0 \mid X}^{\ast}(y_0 \mid X) \mid X\right)\right].
\end{align*}

The true copula $C^{\ast}(\cdot, \cdot \mid x)$ is unknown. However, by allowing $C(\cdot, \cdot \mid x)$ to vary over all copula functions in $\MC$, we obtain the identified set for $\theta_o$ based on $(F_{Y_1 X}^{\ast}, F_{Y_0 X}^{\ast})$ as
\begin{align*}
  \widetilde{\Theta}_{I} \equiv& \left\{\theta  : \theta = \E_{F_{X}^{\ast}}\left[\int\int \psi(y_{1},y_{0}) d C(F_{Y_1|X}^{\ast}(y_1|X),F_{Y_0|X}^{\ast}(y_0|X)|X)\right]\right. \\
            &\left. \phantom{\theta = \sum_{d\in \{0,1\}}\left[\BP(S=d)\int\int \psi(y_{1}aaa\right.} \mbox{for some }C(\cdot,\cdot|X) \in \MC \mbox{\ \ a.s. }\right\}.
\end{align*}
By Sklar's theorem, the collection $\bigl\{C\bigl(F_{Y_1|X}^{\ast}(\cdot|x),F_{Y_0|X}^{\ast}(\cdot|x) \big| x\bigr): C(\cdot,\cdot|x) \in \MC\bigr\}$ coincides with the set of all conditional joint CDFs of $(Y_1,Y_0)$ given $X=x$ that share the marginals $\bigl(F_{Y_1|X}^{\ast}(\cdot|x),F_{Y_0|X}^{\ast}(\cdot|x)\bigr)$.

The following proposition shows that the identified set $\widetilde{\Theta}_{I}$, based on the distributions $(F_{Y_1 X}^{\ast}, F_{Y_0 X}^{\ast})$, coincides with the identified set $\Theta_{I}$ under  experimental data.

\smallskip

\begin{proposition}[Characterization of $\Theta_{I}$]\label{prop:characterization_Theta_I}
The identified set $\Theta_{I}$ under experimental data is equivalent to the identified set $\widetilde{\Theta}_{I}$ based on $(F_{Y_1X}^{\ast},F_{Y_0X}^{\ast})$; that is, $\Theta_{I} = \widetilde{\Theta}_{I}$.
\end{proposition}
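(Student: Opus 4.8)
The plan is to prove the two inclusions $\Theta_I \subseteq \widetilde{\Theta}_I$ and $\widetilde{\Theta}_I \subseteq \Theta_I$ separately, with the whole argument resting on the observation that, under Assumptions \ref{asm:PO}--\ref{asm:OL}, the only aspect of a CDF $F \in \MF_{\Exp}^{\ast}$ that the experimental data and $F_X^{\ast}$ constrain is the pair of conditional marginals $(F_{Y_1|X}, F_{Y_0|X})$ together with $F_X$, while the conditional copula of $(Y_1,Y_0)$ given $X$ is left completely free. I would set up the argument by first recording, as in the footnote preceding the proposition, that for every $F \in \MF_{\Exp}^{\ast}$ one has $F_{Y_d|X}(\cdot|x) = F_{Y|DGX}(\cdot|d,\Exp,x) = F_{Y|DGX}^{\ast}(\cdot|d,\Exp,x) = F_{Y_d|X}^{\ast}(\cdot|x)$ for $d=0,1$ (using Assumptions \ref{asm:PO}, \ref{asm:EV}, and the defining constraint of $\MF_{\Exp}^{\ast}$), and $F_X = F_X^{\ast}$; hence every such $F$ agrees with $F^{\ast}$ on $(F_{Y_1X}, F_{Y_0X})$.

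For the inclusion $\Theta_I \subseteq \widetilde{\Theta}_I$: take any $\theta \in \Theta_I$, so $\theta = \E_F[\psi(Y_1,Y_0)]$ for some $F \in \MF_{\Exp}^{\ast}$. Apply Sklar's theorem conditionally on $X=x$ to the joint CDF $F_{Y_1Y_0|X}(\cdot,\cdot|x)$ to obtain a conditional copula $C_F(\cdot,\cdot|x) \in \MC$ with $F_{Y_1Y_0|X}(y_1,y_0|x) = C_F(F_{Y_1|X}(y_1|x), F_{Y_0|X}(y_0|x)|x)$; by the previous paragraph the marginals here equal the starred ones, so $\theta = \E_{F_X^{\ast}}\bigl[\int\int \psi\, dC_F(F_{Y_1|X}^{\ast}(\cdot|X), F_{Y_0|X}^{\ast}(\cdot|X)|X)\bigr]$, exhibiting $\theta \in \widetilde{\Theta}_I$. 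A technical point to handle carefully is measurability: one needs the map $x \mapsto C_F(\cdot,\cdot|x)$ to be chosen measurably so that the outer expectation over $X$ is well-defined; this can be dispatched either by appealing to a measurable version of Sklar's theorem or, since $F$ is a genuine joint CDF to begin with, by simply noting that the conditional CDF $F_{Y_1Y_0|X}$ is already jointly measurable and working with it directly rather than reconstructing it from a copula.

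For the reverse inclusion $\widetilde{\Theta}_I \subseteq \Theta_I$: take $\theta \in \widetilde{\Theta}_I$, witnessed by some measurable family $C(\cdot,\cdot|x) \in \MC$. I would construct a CDF $\bar F$ of $(Y_1,Y_0,Y,D,S,G,X)$ as follows: let $X \sim F_X^{\ast}$; given $X=x$, draw $(Y_1,Y_0)$ from the joint CDF $C(F_{Y_1|X}^{\ast}(\cdot|x), F_{Y_0|X}^{\ast}(\cdot|x)|x)$ (a valid CDF with the correct marginals by the converse part of Sklar's theorem); then specify the conditional law of $(D,S,G)$ given $(Y_1,Y_0,X)$ so that (a) $G \indep (Y_1,Y_0)\mid X$ with $\BP(G=g|X)$ matching $F_{YDGX}^{\ast}$ — actually only the experimental margin and $F_X^{\ast}$ need matching here, so it suffices to reproduce $\BP(D=d|G=\Exp,X)$ and set $Y = DY_1+(1-D)Y_0$ — and (b) $D \indep (Y_1,Y_0)\mid X, G=\Exp$, and $S=D$ whenever $G=\Obs$, with $S$ filled in arbitrarily (e.g.\ independently) when $G=\Exp$, and with $\BP(D=d,G=g|X)>0$ for all $d,g$. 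One then checks that $\bar F \in \MF^{\dagger}$ (Assumptions \ref{asm:PO}--\ref{asm:OL} hold by construction) and that $\bar F_{YDX|G}(\cdot,\cdot,\cdot|\Exp) = F_{YDX|G}^{\ast}(\cdot,\cdot,\cdot|\Exp)$ and $\bar F_X = F_X^{\ast}$, so $\bar F \in \MF_{\Exp}^{\ast}$; and $\E_{\bar F}[\psi(Y_1,Y_0)] = \E_{F_X^{\ast}}\bigl[\int\int \psi\, dC(F_{Y_1|X}^{\ast}(\cdot|X), F_{Y_0|X}^{\ast}(\cdot|X)|X)\bigr] = \theta$, so $\theta \in \Theta_I$.

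I expect the main obstacle to be the bookkeeping in the reverse-inclusion construction: one must verify that the experimental observables generated by $\bar F$ reproduce $F_{YDX|G}^{\ast}(\cdot,\cdot,\cdot|\Exp)$ exactly — which works because, under random assignment and external validity, that distribution is itself pinned down by $(F_{Y_1|X}^{\ast}, F_{Y_0|X}^{\ast})$ and the treatment-assignment probabilities, all of which are built in — and that the extension to the unobserved coordinates $(Y_1,Y_0,S)$ in the experimental data and to $S$ can be done without violating any of Assumptions \ref{asm:PO}--\ref{asm:OL}. The measurable-selection issue in the forward direction is a secondary technical nuisance, resolved by invoking a measurable version of Sklar's theorem or by bypassing copula reconstruction altogether. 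Everything else is a direct translation between the CDF-level description of $\MF_{\Exp}^{\ast}$ and the copula-level description of $\widetilde{\Theta}_I$, exactly paralleling the covariate-based argument of \cite{fan2017partial}.
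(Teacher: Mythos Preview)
Your proposal is correct and follows essentially the same two-inclusion argument as the paper: the forward inclusion via conditional Sklar applied to $F_{Y_1Y_0|X}$ after identifying the conditional marginals with the starred ones, and the reverse inclusion via explicit construction of an $F \in \MF_{\Exp}^{\ast}$ from a given conditional copula family. The only place requiring more care than ``filled in arbitrarily'' is the specification of $S$ when $G=\Exp$ --- its conditional law must match that under $G=\Obs$ to preserve Assumption~\ref{asm:EV}, which the paper handles by setting $F_{S|Y_1Y_0DGX}(\cdot\mid\cdot,\Exp,\cdot) = F_{S|Y_1Y_0DGX}(\cdot\mid\cdot,\Obs,\cdot)$ --- but you already flag exactly this bookkeeping as the main obstacle.
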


\smallskip

This result formally confirms that $\Theta_{I}$ is fully characterized by the marginals $(F_{Y_1 X}^{\ast}, F_{Y_0 X}^{\ast})$.\footnote{While many studies refer to $\widetilde{\Theta}_{I}$ as the identified set under experimental data, Proposition~\ref{prop:characterization_Theta_I} provides the formal justification for this equivalence.}

\subsection{Characterization of $\Theta_{IC}$ for Combined Data}

We next seek to characterize the identified set $\Theta_{IC}$ under the combined data. Since the experimental data identify $(F_{Y_1 X}^{\ast}, F_{Y_0 X}^{\ast})$, our starting point is to examine what additional information can be gained by incorporating the observational data. The following lemma addresses this question.

\smallskip

\begin{lemma}[Identification of $F_{Y_1 S X}^{\ast}$ and $F_{Y_0 SX}^{\ast}$]\label{lem:identification_jointCDF}
    Suppose that Assumptions \ref{asm:PO}--\ref{asm:OL} hold. Then, for any $(d,s) \in \{0,1\}^2$ and almost all $x \in \MX$, both $\BP(S=s|X=x)$ and $F_{Y_d|SX}^{\ast}(\cdot|s,x)$ are identified from the combined data $(Y,D,G,X)$ as follows:
    \begin{align}
        &\BP(S=s|X=x) = \BP(D=s|G=\Obs,X=x); \label{eq:identification_P_S|X} \\
 &F_{Y_{d}|SX}^{\ast}(\cdot|s,x) =  \begin{cases}
\begin{array}{c}
F_{Y|DGX}^{\ast}(\cdot| s,\Obs,x) \\ 
\frac{F_{Y|DGX}^{\ast}(\cdot|d,\Exp,x) - \BP(D=d|G=\Obs,X=x)\cdot F_{Y|DGX}^{\ast}(\cdot|d,\Obs,x)}{\BP(D=s|G=\Obs,X=x)}\\
\end{array} & \begin{array}{c}
\mbox{if }d=s\\
\mbox{otherwise} \\
\end{array}\end{cases}. \label{eq:identification_F_Yd|SX}
\end{align}
\end{lemma}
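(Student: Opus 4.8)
The plan is to prove the two identification formulas \eqref{eq:identification_P_S|X} and \eqref{eq:identification_F_Yd|SX} by decomposing the observed conditional distributions in the experimental and observational arms in terms of the latent $S$, and then inverting those relations. Throughout I condition on a fixed $x \in \MX$ at which the overlap condition (Assumption~\ref{asm:OL}) holds, so all conditional probabilities below are well-defined.

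First I would establish \eqref{eq:identification_P_S|X}. In the observational data, Assumption~\ref{asm:SS} gives $S = D$ a.s.\ on $\{G=\Obs\}$, so $\BP(S=s \mid G=\Obs, X=x) = \BP(D=s \mid G=\Obs, X=x)$, which is directly identified from the observed data. By Assumption~\ref{asm:EV} (external validity), $(Y_1,Y_0,S) \indep G \mid X$, hence in particular $S \indep G \mid X$, which yields $\BP(S=s \mid X=x) = \BP(S=s \mid G=\Obs, X=x) = \BP(D=s \mid G=\Obs, X=x)$. This gives the first display.

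Next I would handle the case $d=s$ in \eqref{eq:identification_F_Yd|SX}. On $\{G=\Obs\}$ we have $D=S$ a.s., so $\{D=s, G=\Obs\} = \{S=s, G=\Obs\}$ up to null sets, and by Assumption~\ref{asm:PO}, $Y = Y_s$ on this event. Therefore $F_{Y|DGX}^{\ast}(\cdot \mid s, \Obs, x) = F_{Y_s \mid S G X}^{\ast}(\cdot \mid s, \Obs, x)$, and applying Assumption~\ref{asm:EV} again (now to the pair $(Y_d, S)$ jointly) gives $F_{Y_s \mid S G X}^{\ast}(\cdot \mid s, \Obs, x) = F_{Y_s \mid S X}^{\ast}(\cdot \mid s, x) = F_{Y_d \mid S X}^{\ast}(\cdot \mid s, x)$ since $d=s$. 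This is the first branch.

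The main work is the case $d \ne s$. The idea is a law-of-total-probability decomposition of the experimental marginal of $Y_d$ over the latent type $S$, followed by inversion. By Assumptions~\ref{asm:RA} and \ref{asm:PO}, $F_{Y|DGX}^{\ast}(\cdot \mid d, \Exp, x) = F_{Y_d \mid G X}^{\ast}(\cdot \mid \Exp, x)$, and by Assumption~\ref{asm:EV} this equals the population marginal $F_{Y_d \mid X}^{\ast}(\cdot \mid x)$; the same assumption lets me split this marginal as
\begin{align*}
F_{Y_d \mid X}^{\ast}(\cdot \mid x) = \sum_{s' \in \{0,1\}} \BP(S=s' \mid X=x)\, F_{Y_d \mid S X}^{\ast}(\cdot \mid s', x).
\end{align*}
For the term $s' = d$, the $d=s$ case already identified $F_{Y_d\mid SX}^{\ast}(\cdot\mid d,x) = F_{Y|DGX}^{\ast}(\cdot\mid d,\Obs,x)$, and $\BP(S=d\mid X=x) = \BP(D=d\mid G=\Obs,X=x)$ by \eqref{eq:identification_P_S|X}. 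Substituting these, and using $\BP(S=s\mid X=x) = \BP(D=s\mid G=\Obs,X=x) > 0$ (Assumption~\ref{asm:OL}) to divide, I solve for the remaining term $F_{Y_d \mid S X}^{\ast}(\cdot \mid s, x)$, which yields exactly the second branch of \eqref{eq:identification_F_Yd|SX}. The only subtlety to check is that every conditioning event used has positive probability, which is guaranteed by Assumption~\ref{asm:OL}; I expect this bookkeeping — rather than any deep argument — to be the part requiring the most care, together with making sure the joint external-validity statement in Assumption~\ref{asm:EV} is invoked for the pair $(Y_d,S)$ and not merely for the marginals.
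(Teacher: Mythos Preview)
Your proposal is correct and follows essentially the same route as the paper's proof: identify $\BP(S=s\mid X=x)$ via Assumptions~\ref{asm:SS} and~\ref{asm:EV}, handle the $d=s$ case directly from the observational arm, and for $d\neq s$ decompose $F_{Y_d|X}^{\ast}(\cdot\mid x)=F_{Y|DGX}^{\ast}(\cdot\mid d,\Exp,x)$ over $S$ and invert. The points you flag as needing care (overlap for well-definedness, and invoking Assumption~\ref{asm:EV} jointly for $(Y_d,S)$) are exactly the ones the paper relies on.
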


\smallskip

Lemma~\ref{lem:identification_jointCDF} shows that combining the two data sources enables identification of the joint distributions $\left(F_{Y_1 S X}^{\ast}, F_{Y_0 S X}^{\ast}\right)$ of each potential outcome, the latent self-selection type, and the covariates. The combined data therefore provide richer information than experimental data alone, which identify only $\left(F_{Y_1 X}^{\ast}, F_{Y_0 X}^{\ast}\right)$.

In particular, for $d \neq s$, $F_{Y_d \mid S X}^{\ast}(\cdot \mid s, x)$ is a counterfactual distribution and is not identifiable from either the experimental or the observational data alone. However, it becomes identifiable when the two data sources are combined, through the following decomposition:
\begin{align*}
    F_{Y_d \mid X}^{\ast}(\cdot \mid x)
    =
    \BP(S=d \mid X=x)\cdot F_{Y_d \mid SX}^{\ast}(\cdot \mid d,x)
    +
    \BP(S=s \mid X=x)\cdot F_{Y_d \mid SX}^{\ast}(\cdot \mid s,x),
\end{align*}
for $d \neq s$, where $F_{Y_d \mid X}^{\ast}(\cdot \mid x)$ is identified from the experimental data, while $\BP(S=\cdot \mid X=x)$ and $F_{Y_d \mid SX}^{\ast}(\cdot \mid d,x)$ are identified from the observational data (see the proof for details).\footnote{\citet{long2008causal} show a related result: the ATE $\E[Y_1 - Y_0 \mid S=s]$ for each self-selection group $s \in \{0,1\}$ is identified using data from a DRPT.}

Let $C^{\ast}(\cdot,\cdot|s,x)$ denote the true conditional copula given $S=s$ and $X=x$; that is, 
\begin{align*}
    F_{Y_1 Y_0 |SX}^{\ast}(y_1,y_0 |s,x) = C^{\ast}\bigl(F_{Y_1 |SX}^{\ast}(y_1 |s,x),F_{Y_0 |SX}^{\ast}(y_0 |s,x) \big| s,x\bigr).
\end{align*}
Then the true parameter value $\theta_o$ can be expressed as
  \begin{align*}
            \theta_{o} &= \E_{F_{Y_1Y_0}^{\ast}}\left[\psi(Y_1,Y_0) \right]\\
            & = \E_{F_{SX}^{\ast}}\left[\int\int \psi(y_{1},y_{0}) d F_{Y_1Y_0|SX}^{\ast}
            (y_1,y_0|S,X)\right]\\
            & = \E_{F_{SX}^{\ast}}\left[\int\int \psi(y_{1},y_{0}) d C^{\ast}\bigl(F_{Y_1|SX}^{\ast}(y_1|S,X),F_{Y_0|SX}^{\ast}(y_0|S,X)\big|S,X\bigr)\right].
        \end{align*}

Although the true conditional copula $C^{\ast}(\cdot, \cdot \mid s, x)$ is unknown, allowing $C(\cdot,\cdot \mid s,x)$ to vary over $\MC$ yields the identified set for $\theta_o$ based on $(F_{Y_1 S X}^{\ast}, F_{Y_0 S X}^{\ast})$:
\begin{align*}
\widetilde{\Theta}_{IC} \equiv& \left\{\theta  : \theta = \E_{F_{SX}^{\ast}}\left[\int\int \psi(y_{1},y_{0}) d C(F_{Y_1|SX}^{\ast}(y_1|S,X),F_{Y_0|SX}^{\ast}(y_0|S,X)|S,X)\right]\right. \\
            &\left. \phantom{\theta = \sum_{d\in \{0,1\}}\left[\BP(S=d)\int\int \psi(y_{1}a\right.} \mbox{for some }C(\cdot,\cdot|0,X), C(\cdot,\cdot|1,X) \in \MC \mbox{\ \ a.s.}\right\}.
\end{align*}
By Sklar's theorem, for any $C \in \MC$, the function
$C\big(F_{Y_1|SX}^{\ast}(y_1 \mid s,x), F_{Y_0|SX}^{\ast}(y_0 \mid s,x) \mid s,x\big)$ defines a valid conditional joint CDF of $(Y_1, Y_0)$ given $S=s$ and $X=x$, with marginals $F_{Y_1|SX}^{\ast}(\cdot \mid s,x)$ and $F_{Y_0|SX}^{\ast}(\cdot \mid s,x)$.

We now ask whether $\widetilde{\Theta}_{IC}$ coincides with the identified set $\Theta_{IC}$ under the combined data. This question can be rephrased as whether $\Theta_{IC}$ is fully characterized by the self-selection joint distributions $\bigl(F_{Y_1 S X}^{\ast}, F_{Y_0 S X}^{\ast}\bigr)$. This question is nontrivial because combining the two data sources might yield additional identifying information beyond these joint distributions.

The following theorem provides a central characterization result for $\Theta_{IC}$, showing that combining the two data sources yields no additional identifying information beyond $(F_{Y_1 S X}^{\ast}, F_{Y_0 S X}^{\ast})$. This result is not immediate, as it requires ruling out additional restrictions on the dependence structure.

\smallskip

\begin{theorem}[Characterization of $\Theta_{IC}$]\label{thm:characterization_Theta_IC}
 The identified set $\Theta_{IC}$ under the combined data is equivalent to the identified set $\widetilde{\Theta}_{IC}$ based on $\bigl(F_{Y_1 SX}^{\ast},F_{Y_0 SX}^{\ast}\bigr)$; that is, $ \Theta_{IC} = \widetilde{\Theta}_{IC}$.
\end{theorem}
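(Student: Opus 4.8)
## Proof Proposal for Theorem \ref{thm:characterization_Theta_IC}

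The plan is to establish the two inclusions $\Theta_{IC} \subseteq \widetilde{\Theta}_{IC}$ and $\widetilde{\Theta}_{IC} \subseteq \Theta_{IC}$ separately. The first inclusion is the easier direction: take any $\theta \in \Theta_{IC}$, so $\theta = \E_F[\psi(Y_1,Y_0)]$ for some $F \in \MF^{\ast}$. Since $F$ agrees with $F^{\ast}$ on the distribution of $(Y,D,G,X)$ and satisfies Assumptions \ref{asm:PO}--\ref{asm:OL}, Lemma \ref{lem:identification_jointCDF} (applied with $F$ in place of $F^{\ast}$) forces $F_{Y_1SX} = F_{Y_1SX}^{\ast}$ and $F_{Y_0SX} = F_{Y_0SX}^{\ast}$, because the formulas \eqref{eq:identification_P_S|X}--\eqref{eq:identification_F_Yd|SX} express these objects solely in terms of the observed distribution $F_{YDGX}^{\ast}$. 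In particular $F_{SX} = F_{SX}^{\ast}$ and the conditional marginals $F_{Y_d|SX}(\cdot|s,x)$ coincide with $F_{Y_d|SX}^{\ast}(\cdot|s,x)$ for $d,s\in\{0,1\}$. Decomposing $\theta$ by conditioning on $(S,X)$ and invoking Sklar's theorem to write $F_{Y_1Y_0|SX}(\cdot,\cdot|s,x)$ via some conditional copula $C(\cdot,\cdot|s,x)\in\MC$ then exhibits $\theta$ in exactly the form defining $\widetilde{\Theta}_{IC}$, so $\theta\in\widetilde{\Theta}_{IC}$.

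The second inclusion $\widetilde{\Theta}_{IC}\subseteq\Theta_{IC}$ is the substantive part. Given $\theta\in\widetilde{\Theta}_{IC}$, witnessed by a measurable family of copulas $\{C(\cdot,\cdot|s,x):s\in\{0,1\},x\in\MX\}$, I must construct a full CDF $F\in\MF^{\ast}$ of $(Y_1,Y_0,Y,D,S,G,X)$ that is consistent with the observed data, satisfies all five assumptions, and delivers $\E_F[\psi(Y_1,Y_0)] = \theta$. The construction proceeds by specifying the joint law conditionally: first set $F_X = F_X^{\ast}$ and $F_{G|X} = F_{G|X}^{\ast}$ (identified from the combined data); then, conditional on $X=x$, set the law of $S$ via $\BP(S=s|X=x) = \BP(D=s|G=\Obs,X=x)$, and declare $S\indep G\mid X$; then, conditional on $(S,X)=(s,x)$, define the joint law of $(Y_1,Y_0)$ by $F_{Y_1Y_0|SX}(y_1,y_0|s,x) = C\bigl(F_{Y_1|SX}^{\ast}(y_1|s,x),F_{Y_0|SX}^{\ast}(y_0|s,x)\,\big|\,s,x\bigr)$, taken to be independent of $G$ given $(S,X)$; finally set $D=S$ whenever $G=\Obs$, and on $\{G=\Exp\}$ let $D$ be drawn independently of $(Y_1,Y_0,S)$ given $X$ with $\BP(D=d|G=\Exp,X=x) = \BP(D=d|G=\Exp,X=x)$ as in the data, and set $Y=DY_1+(1-D)Y_0$.

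The key steps are then verification steps. (i) Check $F\in\MF^{\dagger}$: Assumption \ref{asm:PO} holds by the definition of $Y$; Assumption \ref{asm:SS} holds because $D=S$ on $\{G=\Obs\}$ by construction; Assumption \ref{asm:RA} holds because $D\indep(Y_1,Y_0)\mid X,G=\Exp$ was imposed; Assumption \ref{asm:EV}, $(Y_1,Y_0,S)\indep G\mid X$, holds because the conditional law of $(Y_1,Y_0,S)$ given $X$ was defined not to depend on $G$; Assumption \ref{asm:OL} is inherited from the data. (ii) Check $F_{YDGX} = F_{YDGX}^{\ast}$: this requires matching the observed conditional laws $F_{Y|DGX}^{\ast}(\cdot|d,g,x)$ for each $(d,g)$. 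On $\{G=\Obs,D=s\}$ we have $Y=Y_s$ and the law of $Y_s$ given $S=s,X=x$ is $F_{Y_s|SX}^{\ast}(\cdot|s,x) = F_{Y|DGX}^{\ast}(\cdot|s,\Obs,x)$ by Lemma \ref{lem:identification_jointCDF}, as needed; on $\{G=\Exp,D=d\}$ we have $Y=Y_d$ and the law of $Y_d$ given $G=\Exp,D=d,X=x$ equals the law of $Y_d$ given $X=x$ (by the imposed conditional independence), which is the mixture $\sum_s \BP(S=s|X=x)F_{Y_d|SX}^{\ast}(\cdot|s,x)$; this mixture equals $F_{Y_d|X}^{\ast}(\cdot|x) = F_{Y|DGX}^{\ast}(\cdot|d,\Exp,x)$ by the decomposition identity stated after Lemma \ref{lem:identification_jointCDF} (which is in turn a rearrangement of \eqref{eq:identification_F_Yd|SX}). (iii) Compute $\E_F[\psi(Y_1,Y_0)]$ by conditioning on $(S,X)$, apply Sklar's theorem to recognize the conditional integral against $dC(\cdot,\cdot|s,x)$, and confirm it equals $\theta$.

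I expect the main obstacle to be the measurability and well-definedness of the construction in step (ii)/(iii): one must ensure the chosen copula family $C(\cdot,\cdot|s,x)$ is jointly measurable in $(s,x)$ so that the assembled object is a genuine CDF, and one must handle the fact that $F_{Y_d|SX}^{\ast}(\cdot|s,x)$ for $d\neq s$ is defined by the difference formula in \eqref{eq:identification_F_Yd|SX} and must be verified to be a valid CDF (monotone, right-continuous, limits $0$ and $1$) — this is exactly where the overlap Assumption \ref{asm:OL} and the internal consistency of the observed distribution are used, and it is implicitly guaranteed because $F^{\ast}$ itself realizes these as legitimate conditional CDFs. A clean way to sidestep re-deriving validity is to note that $\widetilde{\Theta}_{IC}$ is by definition built from the true objects $(F_{Y_1SX}^{\ast},F_{Y_0SX}^{\ast})$, which are bona fide distributions; so the only freedom — and the only thing needing care — is the copula family, and measurable selection (e.g., choosing $C(\cdot,\cdot|s,x)$ from a fixed parametric path, or appealing to a measurable version as in \citet{fan2017partial}) resolves it.
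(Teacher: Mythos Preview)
Your proof is correct. The first inclusion $\Theta_{IC}\subseteq\widetilde{\Theta}_{IC}$ is handled exactly as in the paper: apply Lemma~\ref{lem:identification_jointCDF} to any $F\in\MF^{\ast}$ to conclude $F_{Y_dSX}=F_{Y_dSX}^{\ast}$, then use Sklar's theorem after conditioning on $(S,X)$.

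For the reverse inclusion your construction differs from the paper's in the order of factorization. The paper builds $F$ as $F_{DGX}^{\ast}\times F_{Y_1Y_0|DGX}\times F_{S|Y_1Y_0DGX}\times F_{Y|\cdots}$: it first fixes the observed $(D,G,X)$-law, then specifies $(Y_1,Y_0)$ given $(D,G,X)$ by applying the copula to the \emph{directly observable} marginals $F_{Y_d|DGX}^{\ast}(\cdot|D,\Obs,X)$ on the observational stratum (and copies that law to the experimental stratum), and only afterward tacks on $S$ deterministically via $S=D$ on $\{G=\Obs\}$. You instead build $F$ ``generatively'' along $X\to G\to S\to(Y_1,Y_0)\to D\to Y$, with $S$ and $(Y_1,Y_0)$ drawn independently of $G$ given $X$ and $D$ attached last. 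Your route makes Assumptions~\ref{asm:SS}--\ref{asm:EV} hold by fiat, so the verification step is shorter; the paper's route keeps every ingredient expressed in terms of observable conditional distributions, which makes the matching $F_{YDGX}=F_{YDGX}^{\ast}$ slightly more immediate on the observational side. Both factorizations yield the same $F_{Y_1Y_0SX}$ (since $F_{Y_d|DGX}^{\ast}(\cdot|s,\Obs,x)=F_{Y_d|SX}^{\ast}(\cdot|s,x)$ by Lemma~\ref{lem:appendix_lemma_1}), and both are valid. Your discussion of the measurability of the copula family and the validity of the difference formula for $F_{Y_d|SX}^{\ast}(\cdot|s,x)$ when $d\neq s$ is a point the paper does not address explicitly; your resolution---that these are the true conditional CDFs of $F^{\ast}$ and hence automatically valid---is correct.
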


In summary, the identified set $\Theta_I$ under experimental data is fully characterized by $(F_{Y_0 X}^{\ast}, F_{Y_1 X}^{\ast})$ (Proposition~\ref{prop:characterization_Theta_I}), whereas the identified set $\Theta_{IC}$ under the combined data is fully characterized by the self-selection joint distributions $(F_{Y_0 S X}^{\ast}, F_{Y_1 S X}^{\ast})$ (Theorem~\ref{thm:characterization_Theta_IC}). This contrast highlights the additional identifying power of the combined data, which arises from the dependence between $(Y_1, Y_0)$ and the latent self-selection type $S$. The following section examines this mechanism in greater detail.

\smallskip

\begin{remark}[Selection-on-observables]
The comparison between $\widetilde{\Theta}_{I}$ and $\widetilde{\Theta}_{IC}$ shows that if the self-selection variable $S$ is independent of $(Y_1, Y_0)$ conditional on $X$, then $\Theta_{I}$ and $\Theta_{IC}$ coincide. This implies that when selection-on-observables, $(Y_1, Y_0) \indep D \mid X$, holds in the observational data, combining the two data sources provides no additional identifying power for $\theta_o$.
\end{remark}

\section{Bounds Analysis}\label{sec:bound_analysis}

In this section, we derive sharp bounds for the DTE parameter $\theta_o$ under both experimental and combined data. Whereas \citet{fan2017partial} study the identifying power of covariates for DTEs, our analysis exploits a distinct source of identifying power arising from the latent self-selection type $S$. In particular, we show that combining the two data sources leverages the dependence between $(Y_1, Y_0)$ and $S$, as characterized in Theorem~\ref{thm:characterization_Theta_IC}, to yield strictly tighter bounds.

We consider two classes of DTE parameters, according to whether $\psi$ is specified as (i) a supermodular function or (ii) a $\varphi$-indicator function. For each case, we derive closed-form expressions for the sharp bounds on $\theta_o$ under both data settings. We also establish necessary and sufficient conditions under which combining the two data sources strictly tightens the identified set, i.e., $\Theta_{IC} \subset \Theta_{I}$.

\subsection{Identified Sets for Supermodular Functions}\label{subsec:bound_analysis_modular}

We begin with DTE parameters represented by supermodular and submodular functions.

\smallskip

\begin{definition}[Super(sub)modular]
(i) A function $\psi(\cdot,\cdot)$ is supermodular if, for all $y_{1} \leq y_{1}^{\prime}$ and $y_{0} \leq y_{0}^{\prime}$, $\psi(y_1,y_0) + \psi(y_{1}^{\prime},y_{0}^{\prime}) -  \psi(y_{1},y_{0}^{\prime}) - \psi(y_{1}^{\prime},y_{0}) \geq 0$. It is submodular if $-\psi(\cdot,\cdot)$ is supermodular. (ii) A function $\psi(\cdot,\cdot)$ is strict supermodular if, for all $y_{1} < y_{1}^{\prime}$ and $y_{0} < y_{0}^{\prime}$, 
$ \psi(y_1,y_0) + \psi(y_{1}^{\prime},y_{0}^{\prime}) -  \psi(y_{1},y_{0}^{\prime}) - \psi(y_{1}^{\prime},y_{0}) > 0$. It is strict submodular if $-\psi(\cdot,\cdot)$ is strict supermodular.
\end{definition}

\smallskip

The functions $\psi(\cdot, \cdot)$ in Examples \ref{exm:ate_disadvantaged}, \ref{exm:upward_mobility}, and \ref{exm:correlation} are either supermodular or submodular and thus fall within this framework. In particular, the function $\psi$ in Example \ref{exm:correlation} is strict supermodular. 
\citet{cambanis_1976} provide many other examples of supermodular and submodular functions.

We begin by characterizing $\Theta_{I}$ using the Fréchet–Hoeffding bounds, taking \citet{fan2017partial} as a benchmark. Define
\begin{align*}
    &F_{I}^{\ast,(-)}(y_1,y_0) \equiv \E\left[M\left(F_{Y_1|X}^{\ast}(y_1|X),F_{Y_0|X}^{\ast}(y_0|X)\right) \right] \ \mbox{and} \\
    &F_{I}^{\ast,(+)}(y_1,y_0) \equiv \E\left[W\left(F_{Y_1|X}^{\ast}(y_1|X), F_{Y_0|X}^{\ast}(y_0|X)\right)\right],
\end{align*}
where $M(u,v) \equiv \max(u + v - 1, 0)$ and $W(u,v) \equiv \min(u,v)$ denote the Fréchet–Hoeffding lower and upper bounds, respectively, for a bivariate distribution with marginals $(u,v)$.

When $\psi$ is supermodular, Theorem 3.2 of \cite{fan2017partial} shows that under certain regularity conditions, the identified set $\widetilde{\Theta}_{I}$ based on $(F_{Y_1X}^{\ast},F_{Y_0X}^{\ast})$ is the interval $\left[\theta_{I}^{L},\theta_{I}^{U}\right]$, where
 \begin{align}
        \theta_{I}^{L} &\equiv \E_{F_{I}^{\ast,(-)}}[\psi(Y_1,Y_0)] = \E_{F_{X}^{\ast}}\left[\int_{0}^{1}\psi(F_{Y_1|X}^{\ast,-1}(u|X),F_{Y_0|X}^{\ast,-1}(1-u|X))du\right] \mbox{\ and} \label{eq:sharp_bounds_I_L_supermodular}\\
        \theta_{I}^{U} &\equiv \E_{F_{I}^{\ast,(+)}}[\psi(Y_1,Y_0)] = \E_{F_{X}^{\ast}}\left[\int_{0}^{1}\psi(F_{Y_1|X}^{\ast,-1}(u|X),F_{Y_0|X}^{\ast,-1}(u|X))du\right], \label{eq:sharp_bounds_I_U_supermodular}
\end{align}
with $F_{Y_d|X}^{\ast,-1}(u|x) \equiv \inf\bigl\{y: F_{Y_d|X}^{\ast}(y|x) \geq u\bigr\}$ denoting the conditional quantile function of $Y_d$.
Hence, by Proposition~\ref{prop:characterization_Theta_I}, the identified set $\Theta_{I}$ under experimental data is given by $\left[\theta_{I}^{L}, \theta_{I}^{U}\right]$.

We next characterize the identified set $\Theta_{IC}$ under the combined data. Define
\begin{align*}
    &F_{IC}^{\ast,(-)}(y_1,y_0) \equiv \E\left[M\left(F_{Y_1|SX}^{\ast}(y_1|S,X),F_{Y_0|SX}^{\ast}(y_0|S,X)\right) \right] \mbox{\ and\ }\\
    &F_{IC}^{\ast,(+)}(y_1,y_0) \equiv \E\left[W\left(F_{Y_1|SX}^{\ast}(y_1|S,X), F_{Y_0|SX}^{\ast}(y_0|S,X)\right)\right],
\end{align*}
which extend the Fréchet–Hoeffding bounds to the self-selection setting.

Using these constructions, when $\psi$ is supermodular, the identified set $\widetilde{\Theta}_{IC}$ based on $(F_{Y_1SX}^{\ast}, F_{Y_0SX}^{\ast})$ is given by the interval $\left[\theta_{IC}^{L}, \theta_{IC}^{U}\right]$, where
 \begin{align}
        \theta_{IC}^{L} &\equiv \E_{F_{IC}^{\ast,(-)}}[\psi(Y_1,Y_0)] = \E_{F_{SX}^{\ast}}\left[\int_{0}^{1}\psi(F_{Y_1|SX}^{\ast,-1}(u|S,X),F_{Y_0|SX}^{\ast,-1}(1-u|S,X))du\right] \mbox{\ and} \label{eq:sharp_bounds_IC_L_supermodular}\\
        \theta_{IC}^{U} &\equiv \E_{F_{IC}^{\ast,(+)}}[\psi(Y_1,Y_0)] = \E_{F_{SX}^{\ast}}\left[\int_{0}^{1}\psi(F_{Y_1|SX}^{\ast,-1}(u|S,X),F_{Y_0|SX}^{\ast,-1}(u|S,X))du\right], \label{eq:sharp_bounds_IC_U_supermodular}
\end{align}
with $F_{Y_d|SX}^{\ast,-1}(u|s,x) \equiv \inf\{y: F_{Y_d|SX}^{\ast}(y|s,x) \geq u\}$ denoting the conditional quantile function.

Hence, by Theorem~\ref{thm:characterization_Theta_IC}, the identified set $\Theta_{IC}$ under the combined data is equal to $\left[\theta_{IC}^{L},\theta_{IC}^{U}\right]$.
We formalize these results in the following proposition.

\smallskip

\begin{proposition}[Identified Sets for Supermodular Functions]\label{prop:characterization_modular}
    Let $\psi(y_1,y_0)$ be a supermodular and right-continuous function. 
    \begin{itemize}
        \item[(i)] Suppose that $\theta_{I}^{L}$ and $\theta_{I}^{U}$ exist (possibly infinite), and that either of the following conditions holds: (a) $\psi(y_1,y_0)$ is symmetric and both $\E[\psi(Y_1,Y_1)]$ and $\E[\psi(Y_0,Y_0)]$ are finite; (b) there exist some constants $\bar{y}_0$ and $\bar{y}_1$ such that $\E[\psi(Y_1,\bar{y}_0)]$ and $\E[\psi(\bar{y}_{1},Y_{0})]$ are finite, and at least one of $\theta_{I}^{L}$ and $\theta_{I}^{U}$ is finite. Then, $\Theta_{I} = \left[\theta_{I}^{L},\theta_{I}^{U}\right]$.
        \item[(ii)] Suppose that $\theta_{IC}^{L}$ and $\theta_{IC}^{U}$ exist (possibly infinite), and that either conditions (a) or (b) holds, with $\theta_{I}^{L}$ and $\theta_{I}^{U}$ replaced by $\theta_{IC}^{L}$ and $\theta_{IC}^{U}$ in condition (b). Then, $\Theta_{IC} = \left[\theta_{IC}^{L},\theta_{IC}^{U}\right]$.
\end{itemize}
\end{proposition}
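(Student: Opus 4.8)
The plan is to deduce Proposition \ref{prop:characterization_modular} as a corollary of the characterization results (Proposition \ref{prop:characterization_Theta_I} and Theorem \ref{thm:characterization_Theta_IC}) combined with Theorem 3.2 of \citet{fan2017partial}. The key observation is that both $\widetilde{\Theta}_I$ and $\widetilde{\Theta}_{IC}$ have the same structural form as the identified set studied in \citet{fan2017partial}: an average over a conditioning variable of a copula-indexed integral of $\psi$, where the conditioning variable is $X$ in the experimental case and the pair $(S,X)$ in the combined case. Since $S$ is discrete and $X$ has support $\MX$, the variable $(S,X)$ is just another covariate vector taking values in $\{0,1\}\times\MX$, with distribution $F^{\ast}_{SX}$ and conditional marginals $F^{\ast}_{Y_d|SX}(\cdot|s,x)$ identified by Lemma \ref{lem:identification_jointCDF}. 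So part (ii) is literally part (i) applied to the enlarged conditioning set.

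For part (i), I would proceed as follows. First, invoke Proposition \ref{prop:characterization_Theta_I} to reduce the claim $\Theta_I = [\theta^L,\theta^U]$ to the claim $\widetilde{\Theta}_I = [\theta^L,\theta^U]$. Second, verify that the hypotheses of Theorem 3.2 of \citet{fan2017partial} are met: $\psi$ is super-modular and right-continuous by assumption, the bounds $\theta^L,\theta^U$ are assumed to exist, and condition (a) or (b) supplies exactly the integrability requirement that \citet{fan2017partial} need to guarantee that the Fréchet--Hoeffding bounds $M$ and $W$ — attained by the conditional copulas $C(\cdot,\cdot|x)=M$ and $C(\cdot,\cdot|x)=W$ for a.e. $x$ — produce finite, well-defined integrals and that the set is the full interval between them (connectedness coming from taking convex-type combinations / continuously interpolating copulas between $M$ and $W$). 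Third, record that the endpoint expressions in \citet{fan2017partial}'s theorem, written in terms of quantile functions, coincide with formulas (\ref{eq:sharp_bounds_I_L_super-modular})--(\ref{eq:sharp_bounds_I_U_super-modular}); this is just the standard fact that $\int\int \psi\, dM(F_{Y_1|X}^{\ast}(\cdot|x),F_{Y_0|X}^{\ast}(\cdot|x)) = \int_0^1 \psi(F_{Y_1|X}^{\ast,-1}(u|x),F_{Y_0|X}^{\ast,-1}(1-u|x))\,du$ and the analogous identity for $W$, followed by taking $\E_{F_X^{\ast}}$.

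For part (ii), I would apply exactly the same three steps but with Theorem \ref{thm:characterization_Theta_IC} in place of Proposition \ref{prop:characterization_Theta_I}, and with $(S,X)$ in place of $X$ throughout: reduce $\Theta_{IC}=[\theta_L,\theta_U]$ to $\widetilde{\Theta}_{IC}=[\theta_L,\theta_U]$; check that $\widetilde{\Theta}_{IC}$ is precisely the \citet{fan2017partial} identified set associated with conditioning variable $(S,X)$ having joint law $F^{\ast}_{SX}$ and conditional outcome marginals $F^{\ast}_{Y_d|SX}$; and note that condition (a) or (b) — now stated with $\theta^L,\theta^U$ replaced by $\theta_L,\theta_U$ — again supplies the integrability needed. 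One small bookkeeping point to state explicitly: the conditional marginals $F^{\ast}_{Y_d|SX}$ that appear here are genuinely identified objects (Lemma \ref{lem:identification_jointCDF}), so the set $\widetilde{\Theta}_{IC}$ is well-defined from the data; the copulas $C(\cdot,\cdot|s,x)$ are free to be chosen separately for $s=0$ and $s=1$, which is exactly the freedom \citet{fan2017partial}'s theorem allows across values of the conditioning variable.

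The main obstacle is not conceptual but one of careful hypothesis-matching: I must check that conditions (a) and (b) as stated here are exactly (or imply) the regularity conditions under which \citet{fan2017partial}'s Theorem 3.2 delivers a closed interval with the stated endpoints — in particular that finiteness of $\E[\psi(Y_1,\bar y_0)]$, $\E[\psi(\bar y_1,Y_0)]$ (or of $\E[\psi(Y_1,Y_1)]$, $\E[\psi(Y_0,Y_0)]$ in the symmetric case) rules out the pathological cases where the endpoint integrals are ill-defined $\infty-\infty$ expressions, and that the interval is connected rather than merely bounded by $\theta^L$ and $\theta^U$. I would handle this by citing the relevant lemma/theorem of \citet{fan2017partial} verbatim and verifying term-by-term that our hypotheses specialize to theirs, noting that the inclusion of the extra discrete coordinate $S$ changes nothing in their argument since all their steps are pointwise in the conditioning value followed by an outer expectation. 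The remaining verifications — right-continuity being inherited, the quantile-coupling identities — are routine and can be relegated to the appendix.
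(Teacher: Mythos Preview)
Your proposal is correct and matches the paper's own proof essentially verbatim: the paper's argument for part (i) is simply to invoke Theorem~3.2(i) of \citet{fan2017partial} to obtain $\widetilde{\Theta}_{I}=[\theta^{L},\theta^{U}]$ and then apply Proposition~\ref{prop:characterization_Theta_I}, and for part (ii) to invoke the same theorem with $(S,X)$ in place of $X$ to obtain $\widetilde{\Theta}_{IC}=[\theta_{L},\theta_{U}]$ and then apply Theorem~\ref{thm:characterization_Theta_IC}. The additional hypothesis-matching and bookkeeping you describe are more explicit than what the paper writes out, but the logical route is identical.
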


\smallskip

Proposition~\ref{prop:characterization_modular}(ii) provides a new characterization of the sharp bounds for $\theta_{o}$ under the combined data setting when $\psi$ is supermodular. Together with Lemma~\ref{lem:identification_jointCDF}, it yields a constructive representation of the bounds via (\ref{eq:sharp_bounds_IC_L_supermodular}) and (\ref{eq:sharp_bounds_IC_U_supermodular}), combined with (\ref{eq:identification_P_S|X}) and (\ref{eq:identification_F_Yd|SX}). 
This representation enables direct computation of the sharp bounds and provides a basis for inference.\footnote{Pointwise valid confidence sets for $\theta_{o}$ can be constructed by extending the procedure of \citeauthor{fan2017partial} (\citeyear{fan2017partial}, Appendix B) using nonparametric estimators of $(F_{Y_1SX}^{\ast}, F_{Y_0SX}^{\ast})$ obtained from Lemma~\ref{lem:identification_jointCDF}.}

The key difference between the bounds $\theta_{I}^{L}$ ($\theta_{I}^{U}$) and $\theta_{IC}^{L}$ ($\theta_{IC}^{U}$) lies in the inclusion of the latent self-selection variable $S$ in the conditioning sets. This allows $S$ to capture information about the dependence between $Y_1$ and $Y_0$ (e.g., under Roy-type selection), thereby tightening the identified set.

To compare the two identified sets, note that $\Theta_I$ is characterized by joint distributions of $(Y_1,Y_0)$ whose marginals are bounded by $F_{I}^{*,(-)}$ and $F_{I}^{*,(+)}$, whereas $\Theta_{IC}$ is characterized by the tighter bounds $F_{IC}^{*,(-)}$ and $F_{IC}^{*,(+)}$, which incorporate $S$. By Jensen's inequality,
\begin{align*}
F_{I}^{*,(-)}(y_1,y_0)
&= \mathbb{E}\!\left[\max\!\left\{\mathbb{E}\!\left[F^*_{Y_1|SX}(y_1|S,X) + F^*_{Y_0|SX}(y_0|S,X) - 1 \,\middle|\, X\right],\,0\right\}\right] \\
&\leq \mathbb{E}\!\left[\max\!\left\{F^*_{Y_1|SX}(y_1|S,X) + F^*_{Y_0|SX}(y_0|S,X) - 1,\,0\right\}\right] \\
&= F_{IC}^{*,(-)}(y_1,y_0),
\end{align*}
and similarly $F_{IC}^{*,(+)}(y_1,y_0) \leq F_{I}^{*,(+)}(y_1,y_0)$. Therefore, $\theta_{I}^{L} \le \theta_{IC}^{L}$ and $\theta_{IC}^{U} \le \theta_{I}^{U}$, so incorporating $S$ weakly tightens the identified set.

For strict supermodular functions $\psi$, Theorem~\ref{thm:identification_power_modular} below establishes a necessary and sufficient condition under which $\Theta_{IC} = \Theta_{I}$; otherwise, combining the two data sources yields a strictly smaller identified set (i.e., $\Theta_{IC} \subsetneq \Theta_{I}$).

\smallskip

\begin{theorem}[Identification Power of Data Combination]\label{thm:identification_power_modular}
    Let $\psi(y_1,y_0)$ be a strict supermodular and right-continuous function. Suppose that the four expectations in equations (\ref{eq:sharp_bounds_I_L_supermodular})--(\ref{eq:sharp_bounds_IC_U_supermodular}) exist (even if infinite valued), and that either condition (a) or (b) in Proposition \ref{prop:characterization_modular} (i) holds. Then $\Theta_{IC} = \Theta_{I}$ if and only if the following two conditions hold for $\psi_{c}$-almost all $(y_1, y_0)$:\footnote{If $\psi(\cdot,\cdot)$ is supermodular and right continuous, it uniquely determines a non-negative measure $\psi_c$ on the Borel subsets of $\Real^2$ such that for all $y_1 \leq y_{1}^{\prime}$ and $y_0 \leq y_{0}^{\prime}$, $\psi_{c}((y_1,y_{1}^{\prime}] \times (y_0,y_{0}^{\prime}]) = \psi(y_1,y_0) + \psi(y_{1}^{\prime},y_{0}^{\prime}) - \psi(y_1,y_{0}^{\prime}) - \psi(y_{1}^{\prime},y_{0})$. See \cite{cambanis_1976} and \cite{rachev2006mass}.}
\begin{align}
    &\BP\left(F_{Y_1|SX}^{\ast}(y_1|S,X) + F_{Y_0|SX}^{\ast}(y_0|S,X) - 1 >0 \middle| X \right) \in \{0,1\} \text{ a.s. } \mbox{\ and\ } \label{eq:iff_condition_1}\\
    &\BP\left(F_{Y_1|SX}^{\ast}(y_1|S,X) - F_{Y_0|SX}^{\ast}(y_0|S,X) < 0 \middle| X\right) \in \{0,1\} \text{ a.s.} \label{eq:iff_condition_2}
\end{align}
\end{theorem}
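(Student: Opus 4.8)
The plan is to use the analytical characterizations of the sharp bounds already established: by Proposition 4.4, $\Theta_I = [\theta^L, \theta^U]$ and $\Theta_{IC} = [\theta_L, \theta_U]$, so $\Theta_{IC} = \Theta_I$ if and only if $\theta^L = \theta_L$ \emph{and} $\theta^U = \theta_U$. I would treat the two equalities separately but symmetrically, focusing first on the upper-bound equality $\theta^U = \theta_U$; the lower-bound case is completely analogous after the substitution $u \mapsto 1-u$ in the inner integral, which is exactly why condition \eqref{eq:iff_condition_1} (with the $-1$ shift, coming from the Fréchet lower bound $M$) is paired with the lower bounds and condition \eqref{eq:iff_condition_2} (coming from the upper bound $W$) is paired with the upper bounds.

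The core of the argument rests on a pointwise comparison of the two "extremal" joint CDFs. Using the mass-measure representation in the footnote, for a super-modular right-continuous $\psi$ one has, via a Fubini/Cambanis–Simons–Stout type identity,
\begin{align*}
\theta^U - \theta_U = \int\int \bigl[ F^{\ast,(+)}(y_1,y_0) - F_{(+)}^{\ast}(y_1,y_0) \bigr]\, d\psi_c(y_1,y_0),
\end{align*}
and similarly $\theta_L - \theta^L = \int\int [F^{\ast,(-)} - F_{(-)}^{\ast}]\, d\psi_c$. By the conditional Fréchet–Hoeffding inequality applied within each cell $\{S=s, X=x\}$ and then averaged, $F_{(+)}^{\ast}(y_1,y_0) \le F^{\ast,(+)}(y_1,y_0)$ and $F_{(-)}^{\ast}(y_1,y_0) \ge F^{\ast,(-)}(y_1,y_0)$ pointwise — this is the quantitative content of $\Theta_{IC}\subseteq\Theta_I$. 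Hence both integrands are nonnegative, and since $\psi$ is \emph{strict} super-modular the measure $\psi_c$ charges every nondegenerate rectangle; therefore $\theta^U = \theta_U$ holds if and only if $F^{\ast,(+)}(y_1,y_0) = F_{(+)}^{\ast}(y_1,y_0)$ for $\psi_c$-almost every $(y_1,y_0)$, and likewise for the lower bound. So the theorem reduces to characterizing, pointwise in $(y_1,y_0)$, when $\E[W(F_{Y_1|SX}^{\ast}(y_1|S,X), F_{Y_0|SX}^{\ast}(y_0|S,X))] = W(F_{Y_1}^{\ast}(y_1), F_{Y_0}^{\ast}(y_0))$ (using that the $X$-averaging already collapses to the marginals under Proposition 3.1's logic, or keeping $X$ in and arguing cell by cell), and the analogous statement with $M$.

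The final step is the elementary-but-delicate equality-case analysis for the Fréchet bounds under averaging. Fix $(y_1,y_0)$ and write $U = F_{Y_1|SX}^{\ast}(y_1|S,X)$, $V = F_{Y_0|SX}^{\ast}(y_0|S,X)$, random variables in $[0,1]$ (randomness over $(S,X)$). One has $\E[\min(U,V)] = \min(\E[U], \E[V])$ if and only if $\min(U,V)$ equals one of $U,V$ almost surely in a consistent way, i.e. if and only if $\BP(U < V) \in \{0,1\}$ — because $\E[\min(U,V)] = \E[U\,\I\{U\le V\}] + \E[V\,\I\{U>V\}]$, and this collapses to $\min(\E U, \E V)$ exactly when the "$\min$" does not genuinely switch between the two arguments with positive probability on both sides; a short case check (using that $\E[U\mathbf 1\{U\le V\}] \le \E[U\mathbf 1\{V... \le U\}]$-type inequalities are strict unless degenerate) gives the claim. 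This yields \eqref{eq:iff_condition_2}. Symmetrically, $\E[\max(U+V-1,0)] = \max(\E U + \E V - 1, 0)$ iff $\BP(U + V - 1 > 0)\in\{0,1\}$, giving \eqref{eq:iff_condition_1}. Combining the two equivalences with the reduction above finishes the proof.

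The main obstacle I anticipate is \emph{handling the covariate averaging cleanly alongside the $S$-averaging}: the conditions \eqref{eq:iff_condition_1}--\eqref{eq:iff_condition_2} are stated with probability over $(S,X)$ jointly, so one must be careful that the pointwise equality $F^{\ast,(+)} = F_{(+)}^{\ast}$ really does encode "no genuine switching of the $\min$ across $(S,X)$-cells" rather than something weaker — and, relatedly, one must confirm that the $\psi_c$-a.e. quantifier in the theorem statement matches the a.e. quantifier coming out of the integral identity (this needs the strict super-modularity to convert "$\psi_c$-integral is zero" into "integrand is $\psi_c$-a.e. zero", and needs right-continuity so that $\psi_c$ is a well-defined nonnegative Borel measure, as in the footnote). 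The regularity hypotheses (a)/(b) and finiteness of the four expectations are what let me write the telescoping integral identity for $\theta^U - \theta_U$ without boundary terms blowing up, so I would invoke them exactly at that juncture, citing \citet{cambanis_1976} and \citet{fan2017partial} for the underlying integration-by-parts lemma.
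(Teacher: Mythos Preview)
Your proposal follows essentially the same route as the paper: both reduce $\Theta_{IC}=\Theta_I$ to $\theta^L=\theta_L$ and $\theta^U=\theta_U$ via Proposition~\ref{prop:characterization_modular}, then use the Cambanis--Simons--Stout integral representation to write the bound differences as $\psi_c$-integrals of $F^{\ast,(\pm)}-F_{(\pm)}^{\ast}$ (the paper treats conditions (a) and (b) separately, obtaining under (a) a symmetrized integrand in $(y_1\vee y_0,\,y_1\wedge y_0)$ and under (b) exactly your simpler formula), invoke strict super-modularity to force the integrand to vanish $\psi_c$-a.e., and finally translate pointwise equality of the Fr\'echet envelopes into the probability conditions via the Jensen equality case for $\min(\cdot,0)$ and $\max(\cdot,0)$. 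One caution: your shortcut ``the $X$-averaging collapses to the marginals'' is not correct in general---$F^{\ast,(+)}(y_1,y_0)=\E[W(F_{Y_1|X}^{\ast}(y_1|X),F_{Y_0|X}^{\ast}(y_0|X))]$, not $W(F_{Y_1}^{\ast}(y_1),F_{Y_0}^{\ast}(y_0))$---so you must take your ``keep $X$ in and argue cell by cell'' alternative, which is precisely what the paper does by applying Jensen to the inner conditional expectation $\E[\,\cdot\mid X]$.
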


\smallskip

Conditions~(\ref{eq:iff_condition_1}) and~(\ref{eq:iff_condition_2}) require that, conditional on $X$, the ordering between $F_{Y_1|SX}^{\ast}(y_1|S,X)$ and $F_{Y_0|SX}^{\ast}(y_0|S,X)$ be degenerate, in the sense that it does not vary between the two self-selection types, $S=0$ and $S=1$, for $\psi_c$-almost all $(y_1,y_0)$. These conditions are highly restrictive and unlikely to hold when self-selection depends on the potential outcomes. In such cases, the theorem implies that combining experimental and observational data strictly tightens the identified set.

A notable exception arises under selection-on-observables, i.e., when $(Y_1,Y_0) \indep S \mid X$. In this case, conditions (\ref{eq:iff_condition_1}) and (\ref{eq:iff_condition_2}) are satisfied, and combining the two data sources provides no additional identifying power.

\subsection{Identified Sets for $\varphi$-indicator Functions}\label{subsec:bound_analysis_phi}

We now turn to DTE parameters characterized by $\varphi$-indicator functions.

\smallskip

\begin{definition}[$\varphi$-Indicator Functions]
    Let $\varphi$ be a measurable function and define $\psi(y_1,y_0) \equiv \I\{\varphi(y_1,y_0)\leq \delta\}$ for a fixed $\delta$, where $\varphi(\cdot,\cdot)$ is monotone in each argument. We refer to such $\psi$ as $\varphi$-indicator functions.
\end{definition}

\smallskip

An important example in this class is the distribution function of the treatment effect, $F_{Y_1 - Y_0}^{\ast}(\delta)$ (Example~\ref{exm:cdf_effect}), which corresponds to the choice $\varphi(y_1,y_0)=y_1-y_0$. As a special case, the fraction of positive treatment effects (Example~\ref{exm:frac_positive_effect}) is given by $\BP(Y_1 > Y_0)=1-F_{Y_1-Y_0}^{\ast}(0)$.

We first characterize the identified set $\Theta_{I}$ under experimental data, or equivalently, $\widetilde{\Theta}_{I}$ based on $(F_{Y_1X}^{\ast}, F_{Y_0X}^{\ast})$. Let $\mathcal{Y}_{1}(x)$ and $\mathcal{Y}_{0}(x)$ denote the supports of $Y_1$ and $Y_0$ given $X = x$, respectively. Define
    \begin{align*}
        F_{\min,\varphi}(\delta|x) &= \sup_{y \in \MY_{1}(x)}\max\left\{F_{Y_1|X}^{\ast}(y|x) + F_{Y_0|X}^{\ast}(\tilde{\varphi}_{y}(\delta)|x) -1, 0\right\} \mbox{\ and}\\
        F_{\max,\varphi}(\delta|x) &= 1 + \inf_{y \in \MY_{0}(x)}\min\left\{F_{Y_1|X}^{\ast}(y|x) + F_{Y_0|X}^{\ast}(\tilde{\varphi}_{y}(\delta)|x) -1, 0\right\},
    \end{align*}
where $\tilde{\varphi}_{y}(\delta|x) = \sup\left\{y_0 \in \MY_{0}(x): \varphi(y,y_{0}) < \delta\right\}$. 
If the set is empty, we define $\tilde{\varphi}_y(\delta|x) = -\infty$.

When $\psi$ is a $\varphi$-indicator function, \citet{fan2017partial} show that the identified set $\widetilde{\Theta}_{I}$ based on $(F_{Y_1X}^{\ast}, F_{Y_0X}^{\ast})$ is given by the interval $\left[F_{I,\varphi}^{L}(\delta), F_{I,\varphi}^{U}(\delta)\right]$, where $F_{I,\varphi}^{L}(\delta) = \E[F_{\min,\varphi}(\delta|X)]$ and $F_{I,\varphi}^{U}(\delta) = \E[F_{\max,\varphi}(\delta|X)]$.
Proposition~\ref{prop:characterization_Theta_I} then implies that the identified set $\Theta_{I}$ under experimental data is given by $\left[F_{I,\varphi}^{L}(\delta), F_{I,\varphi}^{U}(\delta)\right]$. 

We next characterize the identified set $\widetilde{\Theta}_{IC}$ based on $(F_{Y_1SX}^{\ast}, F_{Y_0SX}^{\ast})$. Let $\mathcal{Y}_{1}(s,x)$ and $\mathcal{Y}_{0}(s,x)$ denote the supports of $Y_1$ and $Y_0$ given $(S,X) = (s,x)$, respectively. Then $\widetilde{\Theta}_{IC}$ is given by the interval $\left[F_{IC,\varphi}^{L}(\delta), F_{IC,\varphi}^{U}(\delta)\right]$, where $F_{IC,\varphi}^{L}(\delta) = \E[F_{\min,\varphi}(\delta|S,X)]$ and $F_{IC,\varphi}^{U}(\delta) = \E[F_{\max,\varphi}(\delta|S,X)]$, with
\begin{align}
    F_{\min,\varphi}(\delta|s,x) 
    &= \sup_{y \in \mathcal{Y}_{1}(s,x)} \max\left\{F_{Y_1|SX}^{\ast}(y|s,x) + F_{Y_0|SX}^{\ast}(\tilde{\varphi}_{y}(\delta)|s,x) - 1, 0\right\}, \label{eq:Lbound_IC_phi_indicator}\\
    F_{\max,\varphi}(\delta|s,x) 
    &= 1 + \inf_{y \in \mathcal{Y}_{1}(s,x)} \min\left\{F_{Y_1|SX}^{\ast}(y|s,x) + F_{Y_0|SX}^{\ast}(\tilde{\varphi}_{y}(\delta)|s,x) - 1, 0\right\}, \label{eq:Ubound_IC_phi_indicator}
\end{align}
and $\tilde{\varphi}_{y}(\delta|s,x) = \sup\{y_0 \in \mathcal{Y}_{0}(s,x): \varphi(y,y_{0}) < \delta\}$.

Theorem~\ref{thm:characterization_Theta_IC} then implies that the identified set $\Theta_{IC}$ under the combined data is given by $\left[F_{IC,\varphi}^{L}(\delta), F_{IC,\varphi}^{U}(\delta)\right]$. The key difference from the experimental-data case is the inclusion of the self-selection variable $S$ in \eqref{eq:Lbound_IC_phi_indicator} and \eqref{eq:Ubound_IC_phi_indicator}. Since $S$ may encode information about the dependence between $Y_1$ and $Y_0$, its inclusion can strictly tighten the identified set.

The following proposition summarizes these characterization results.

\smallskip

\begin{proposition}[Identified Sets for $\varphi$-Indicator Functions]\label{prop:characterization_identified_sets_phi-indicator}
    Suppose that $\psi$ is a $\varphi$-indicator function and that  $\varphi$ is continuous and non-decreasing in each argument. 
    \begin{itemize}
        \item[(i)] Suppose that $\MY_{1}(X)$ and $\MY_{0}(X)$ are Borel sets generated by intervals with measurable endpoints. Then $\Theta_{I} = \left[F_{I,\varphi}^{L}(\delta),F_{I,\varphi}^{U}(\delta)\right]$.
        \item[(ii)]  Suppose that $\MY_{1}(S,X)$ and $\MY_{0}(S,X)$ are Borel sets generated by intervals with measurable endpoints. Then $\Theta_{IC} = \left[F_{IC,\varphi}^{L}(\delta),F_{IC,\varphi}^{U}(\delta)\right]$.
    \end{itemize}
\end{proposition}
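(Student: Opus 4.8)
The plan is to deduce both parts from the characterization results already in hand (Proposition~\ref{prop:characterization_Theta_I} and Theorem~\ref{thm:characterization_Theta_IC}) together with the sharp $\varphi$-indicator bounds of \citet{fan2017partial}. For part~(i), Proposition~\ref{prop:characterization_Theta_I} gives $\Theta_{I} = \widetilde{\Theta}_{I}$, so it suffices to show $\widetilde{\Theta}_{I} = [F_{\varphi}^{L}(\delta), F_{\varphi}^{U}(\delta)]$; this is exactly the conclusion of the $\varphi$-indicator characterization of \citet{fan2017partial}, once one checks that its hypotheses reduce to the stated ones, namely $\varphi$ continuous and non-decreasing in each argument and $\MY_{1}(X), \MY_{0}(X)$ Borel sets generated by intervals with measurable endpoints. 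For part~(ii), Theorem~\ref{thm:characterization_Theta_IC} gives $\Theta_{IC} = \widetilde{\Theta}_{IC}$, and the set $\widetilde{\Theta}_{IC}$ has the same functional form as $\widetilde{\Theta}_{I}$ with the conditioning variable $X$ replaced by the augmented pair $(S,X)$; since Lemma~\ref{lem:identification_jointCDF} guarantees that the conditionals $F_{Y_{d}\mid SX}^{\ast}(\cdot\mid s,x)$ and the law of $(S,X)$ are well-defined objects, part~(ii) follows from the same argument as part~(i) with $(S,X)$ playing the role of the covariate.

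It then remains to recall why the $\varphi$-indicator computation yields the claimed intervals. Fix a conditioning value, $x$ for part~(i) or $(s,x)$ for part~(ii). By Sklar's theorem the admissible conditional joint CDFs of $(Y_{1},Y_{0})$ with the prescribed marginals are precisely $\{C(F_{Y_{1}\mid\cdot}^{\ast}, F_{Y_{0}\mid\cdot}^{\ast}) : C \in \MC\}$. For $\psi(y_{1},y_{0}) = \I\{\varphi(y_{1},y_{0}) \le \delta\}$ with $\varphi$ continuous and non-decreasing in each argument, the conditional target $\BP_{F}(\varphi(Y_{1},Y_{0}) \le \delta \mid \cdot)$ traces out, as $C$ ranges over $\MC$, exactly the closed interval $[F_{\min,\varphi}(\delta\mid\cdot), F_{\max,\varphi}(\delta\mid\cdot)]$: the endpoints are the Makarov / Frank--Nelsen--Schweizer bounds for the distribution of a coordinatewise-monotone function of two random variables with fixed marginals, and the interval-support hypothesis is precisely what makes those bounds attainable and the value set a full interval (via a continuous path of copulas joining the extreme ones). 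Measurability of $x \mapsto F_{\min,\varphi}(\delta\mid x)$ and $x\mapsto F_{\max,\varphi}(\delta\mid x)$ (resp.\ in $(s,x)$) follows from the measurable-endpoint assumption and the standard measurability of suprema and infima of a jointly measurable family.

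Aggregation is then immediate because the copula is allowed to depend arbitrarily and measurably on the conditioning variable, with no cross-value restrictions in the definitions of $\widetilde{\Theta}_{I}$ and $\widetilde{\Theta}_{IC}$: the unconditional parameter $\theta_{o} = \E[\BP_{F}(\varphi(Y_{1},Y_{0}) \le \delta \mid \cdot)]$ is obtained by integrating the conditional target, the pointwise-extremal copula families deliver the endpoints $\E[F_{\min,\varphi}(\delta\mid\cdot)]$ and $\E[F_{\max,\varphi}(\delta\mid\cdot)]$, and for any value strictly in between one selects, for each conditioning value, a copula attaining a common convex combination of the conditional extremes and appeals to a measurable selection theorem to assemble an admissible measurable copula family realizing that value. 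This yields $\widetilde{\Theta}_{I} = [F_{\varphi}^{L}(\delta), F_{\varphi}^{U}(\delta)]$ and, mutatis mutandis, $\widetilde{\Theta}_{IC} = [F_{L,\varphi}(\delta), F_{U,\varphi}(\delta)]$, which completes both parts.

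The reduction steps are routine; the substance lies in the two ingredients imported from \citet{fan2017partial}: sharpness of the conditional Makarov bounds under the interval-support condition, and the measurable-selection argument that lifts pointwise attainment to attainment of every value of the unconditional parameter. The only point requiring genuine care beyond a verbatim transcription is that in part~(ii) the conditioning variable $(S,X)$ is partly latent; but its relevant conditional laws are identified by Lemma~\ref{lem:identification_jointCDF}, and all of the regularity conditions concern only the conditional supports $\MY_{1}(S,X), \MY_{0}(S,X)$ and the function $\varphi$, both of which are governed by the hypotheses of the proposition, so no new difficulty arises.
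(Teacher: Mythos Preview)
Your proposal is correct and follows essentially the same approach as the paper: both parts reduce to the copula-based characterizations (Proposition~\ref{prop:characterization_Theta_I} for $\Theta_{I}=\widetilde{\Theta}_{I}$ and Theorem~\ref{thm:characterization_Theta_IC} for $\Theta_{IC}=\widetilde{\Theta}_{IC}$), and then invoke Theorem~3.5(i) of \citet{fan2017partial}---directly for part~(i), and with the conditioning variable $X$ replaced by $(S,X)$ for part~(ii). Your additional exposition of the Makarov bounds, attainability, and measurable-selection arguments simply unpacks what the paper leaves inside the black box of the cited theorem.
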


\smallskip

Proposition~\ref{prop:characterization_identified_sets_phi-indicator}(ii) provides a characterization of the identified set in the combined-data setting for $\varphi$-indicator functions. Combined with Lemma~\ref{lem:identification_jointCDF}, it also yields a constructive representation of the identified set, enabling direct computation from the combined data.

Analogously to Theorem~\ref{thm:identification_power_modular}, for a $\varphi$-indicator function we can establish a necessary and sufficient condition under which $\Theta_{IC}$ is a proper subset of $\Theta_{I}$. To simplify the technical argument, the following theorem presents this result for the case in which $\MY_{d}(s,x)=\MY_{d}$ for $d=0,1$ and all $(s,x)\in\{0,1\}\times\MX$.

\smallskip

\begin{theorem}[Identification Power of Data Combination]\label{thm:identification_power_phi-indicator}
    Suppose that $\psi$ is a $\varphi$-indicator function with $\varphi$ being continuous and non-decreasing in each argument, and that the condition in Proposition~\ref{prop:characterization_identified_sets_phi-indicator}(ii) holds. Suppose further that $\MY_{d}(s,x) = \MY_{d}$ for $d=0,1$ and all $(s,x) \in \{0,1\} \times \MX$. Then $\Theta_{IC} = \Theta_{I}$ if and only if, for almost all $x \in \MX$, there exist $\bar{y}(x)$ and $\underline{y}(x)$ such that, for both $s \in \{0,1\}$, the function
    \begin{align}
       y \mapsto [F_{Y_1|SX}^{\ast}(y|s,x) + F_{Y_0|SX}^{\ast}(\tilde{\varphi}_{y}(\delta)|s,x) -1]   \label{eq:extremal_function}
    \end{align}
attains its maximum at $\bar{y}(x)$ and its minimum at $\underline{y}(x)$.
\end{theorem}

\smallskip

The condition in Theorem~\ref{thm:identification_power_phi-indicator} requires that the locations at which the function~\eqref{eq:extremal_function} attains its maximum and minimum be invariant to the self-selection variable $S$. This invariance condition is unlikely to hold when self-selection $S$ depends on the potential outcomes $(Y_1,Y_0)$, as outcome-dependent selection typically alters the relative ordering of the conditional distributions between $S=0$ and $S=1$. In such cases, the theorem implies that combining experimental and observational data strictly tightens the identified set.

A sufficient condition for the equivalence $\Theta_{IC} = \Theta_{I}$ in Theorem~\ref{thm:identification_power_phi-indicator} is that the self-selection variable $S$ be independent of $(Y_1,Y_0)$ conditional on $X$. Under this selection-on-observables assumption, combining the two data sources provides no additional identifying power.

\section{Numerical Example}\label{sec:numerical_example}

To illustrate the identifying power of combining experimental and observational data, we consider a simple data-generating process (DGP) for $(Y_1, Y_0, S)$.

Suppose that the parameter of interest is $\theta_o = \BP(Y_1 > Y_0)$. Let $\BP(S = 1) = \BP(S = 0) = 1/2$. Conditional on $S$, the potential outcomes $(Y_1, Y_0)$ follow independent normal distributions:
\begin{align*}
(Y_1, Y_0)\mid S = 1 \sim N\big((\mu_H, \mu_L), \sigma^2 I_2\big)
\quad \text{and} \quad
(Y_1, Y_0)\mid S = 0 \sim N\big((\mu_L, \mu_H), \sigma^2 I_2\big),
\end{align*}
where $\mu_H > \mu_L$. This specification captures self-selection based on potential gains from treatment: individuals with $S = 1$ tend to have higher potential outcomes under treatment, whereas the reverse holds for those with $S = 0$. For concreteness, we set $(\mu_H, \mu_L, \sigma) = (1.5, -1.5, 1)$, which implies $\theta_o = 0.5$ by symmetry.

Under this DGP, the unconditional marginal distributions of $Y_1$ and $Y_0$ coincide and are given by
$F_{Y_1}(y) = F_{Y_0}(y)
= \frac{1}{2}\Phi(y-\mu_H) + \frac{1}{2}\Phi(y-\mu_L),$
where $\Phi(\cdot)$ denotes the standard normal CDF.
Hence, under experimental data alone, Proposition~\ref{prop:characterization_identified_sets_phi-indicator}(i) implies $\Theta_I = [0,1]$.

In contrast, the combined data identify the marginal distributions of $Y_1$ and $Y_0$ conditional on $S$. In this example, when $S = 1$, the distribution of $Y_1$ first-order stochastically dominates that of $Y_0$, whereas the reverse holds when $S = 0$. This additional information restricts the set of feasible joint distributions of $(Y_1, Y_0)$.

Using Proposition~\ref{prop:characterization_identified_sets_phi-indicator} (ii), the identified set can be computed as
\begin{align*}
\Theta_{IC}
&=
\left[
\frac{1}{2}\left(2\Phi\!\left(\frac{\mu_H-\mu_L}{2\sigma}\right)-1\right),
\;
1 - \frac{1}{2}\left(2\Phi\!\left(\frac{\mu_H-\mu_L}{2\sigma}\right)-1\right)
\right].
\end{align*}
For $(\mu_H, \mu_L) = (1.5, -1.5)$, we obtain $\Theta_{IC} \approx [0.433,0.567]$. Moreover, as $\mu_H - \mu_L$ becomes large relative to $\sigma$, $\Theta_{IC}$ shrinks toward the singleton $\{0.5\}$, whereas $\Theta_I$ remains equal to $[0,1]$.

Thus, while experimental data alone yield only the trivial bounds $[0,1]$, incorporating observational data substantially tightens the identified set. This example highlights the identifying power of combining the two data sources.

\section{Additional Restrictions and Computational Approach}\label{sec:computational_approach}

In the spirit of partial identification analysis \citep{manski2003partial}, additional model restrictions can, when plausible, further narrow the identified set for $\theta_o$. At the same time, such restrictions may complicate the analysis, particularly by making it difficult to derive sharp bounds analytically. In this section, we introduce two restrictions that are plausible in many empirical contexts and present a computational approach for computing the sharp bounds under these restrictions using the combined data.

\subsection{Additional Restrictions}

The first restriction imposes a form of positive dependence between the potential outcomes, specifically the mutually left-tail decreasing (LTD) condition \citep{joe2014dependence}. The potential outcomes $Y_1$ and $Y_0$ are said to be mutually LTD if they satisfy the following condition.

\smallskip

\begin{assumption}[Positive Dependence]\label{asm:MSI}
The conditional distributions $\BP(Y_1 \leq t \mid Y_0 \leq y, S = s, X = x)$ and $\BP(Y_0 \leq t \mid Y_1 \leq y, S = s, X = x)$ are each non-increasing in $y$ almost everywhere, for almost all $(s,x) \in \{0,1\} \times \MX$.\footnote{\label{ftnt:conditiong_SandX}Conditioning on $S$ and $X$ may render this assumption restrictive. Alternatively, one may impose the same condition without conditioning on $S$ and $X$, resulting in a weaker assumption with less identifying power.}
\end{assumption}

\smallskip

This assumption implies that individuals with higher potential outcomes under one treatment state tend to also have higher potential outcomes under the other state. Such an assumption is plausible in many empirical contexts. For example, in a small-class-size program, students who perform well in either small or regular classes are also likely to perform well under the alternative class size. 

\citet{frandsen2021partial} consider a slightly stronger dependence assumption and show that it can substantially tighten the identified set for the distribution of treatment effects. 
Related assumptions are also used by \citet{chetty2017fading} in their empirical study of income mobility and by \citet{cui2025policy} in policy learning with distributional welfare.

The second restriction concerns the self-selection mechanism for treatment.

\smallskip

\begin{assumption}[Generalized Roy Model Selection]\label{asm:GRM}
The inequality $\BP(Y_1 - Y_0 > c \mid S = 1, X = x) \geq \BP(Y_1 - Y_0 > c \mid S = 0, X = x)$ holds for all $c$ in the support of $Y_1 - Y_0$ and for almost all $x \in \MX$.\footnote{One may also impose this assumption without conditioning on $X$, resulting in a weaker restriction with less identifying power.}
\end{assumption}

\smallskip

This assumption implies that individuals who self-select into treatment are more likely to experience larger treatment effects than those who select no-treatment. We refer to this as the generalized Roy model selection assumption, since it is implied by the selection behavior in the generalized Roy model.

Because this restriction pertains to the self-selection mechanism, it cannot be exploited using experimental data alone. Thus, incorporating observational data provides an additional advantage by enabling the use of such behavioral restrictions on self-selection.

\subsection{Computational Approach}

We propose a computational approach to obtain the sharp bounds for $\theta_{o}$ under the combined data, incorporating either or both Assumptions~\ref{asm:MSI} and~\ref{asm:GRM}. 
The approach is not restricted to specific classes of objective functions $\psi$, such as supermodular or $\varphi$-indicator functions.

When Assumptions~\ref{asm:MSI} and~\ref{asm:GRM} are imposed in addition to Assumptions~\ref{asm:PO}--\ref{asm:OL}, the sharp lower and upper bounds for $\theta_{o}$ can be obtained by solving the following minimization and maximization problems:
\begin{align}
      &\underset{F_{Y_1Y_0SX} \in \MF_{Y_1Y_0S X}}{\inf/\sup}\  \int \psi(y_1,y_0)d F_{Y_1Y_0 S X}(y_1,y_0,s,x) \label{eq:linear_program}  \\
      \mbox{s.t.\ \ \ } &F_{Y_dSX} = F_{Y_dSX}^{\ast} \quad \text{for } d = 0,1
; \label{eq:LP_constraint_1} \\
      & F_{Y_d|Y_{d^\prime}\leq y,S=s,X=x}(t) \geq F_{Y_d|Y_{d^\prime}\leq y^{\prime},S=s,X=x}(t) \label{eq:LP_constraint_2}\\
      &\mbox{for all } t\in\Real \mbox{ and for almost all } (y,y^\prime,d,d^\prime,s,x) \mbox{\ with\ }y^{\prime} \geq y \mbox{ and } d\neq d^{\prime} ; \nonumber\\
    &\BP_{F_{Y_1Y_0|SX}}(Y_1 - Y_0 > c \mid S = 1, X = x) \geq \BP_{F_{Y_1Y_0|SX}}(Y_1 - Y_0 > c \mid S = 0, X = x) \label{eq:LP_constraint_3}  \\
    &\mbox{for all }c\in \Real \mbox{ and for almost all }x. \nonumber
\end{align}
The constraint in (\ref{eq:LP_constraint_1}) follows from Theorem~\ref{thm:characterization_Theta_IC}, which shows that the identified set for $\theta_{o}$ is fully characterized by $F_{Y_1SX}^{\ast}$ and $F_{Y_0SX}^{\ast}$. 
The constraints in (\ref{eq:LP_constraint_2}) and (\ref{eq:LP_constraint_3}) are implied by Assumptions~\ref{asm:MSI} and~\ref{asm:GRM}, respectively. 

Note that this optimization problem is formulated without explicitly including the treatment variable $D$ or the data source indicator $G$, since the constraint in~(\ref{eq:LP_constraint_1}) already incorporates all identifying information conveyed by these variables (Theorem~\ref{thm:characterization_Theta_IC}). This formulation reduces the computational burden by lowering the dimensionality of the optimization problem.

When all random variables are discrete, the optimization problem~(\ref{eq:linear_program})--(\ref{eq:LP_constraint_3}) reduces to a finite-dimensional linear program, for which efficient algorithms and solvers are available.\footnote{Many empirical applications, however, involve continuous outcomes and covariates. A common practical approach is to discretize these variables, although this may come at the cost of reduced sharpness of identification. Details of the linear programming formulation are provided in Appendix~\ref{app:linear_program}.} Inference methods for bounds defined by linear programs, including those proposed by \citet{fang2023inference} and \citet{cho2024simple}, are applicable in this setting.

Let $\widetilde{\MF}^{\dagger}$ denote the class of CDFs $F$ for all defined variables $(Y_1,Y_0,Y,D,S,G,X)$ that satisfy Assumptions \ref{asm:PO}--\ref{asm:OL} and \ref{asm:MSI}--\ref{asm:GRM}, with $F^{\ast}$ replaced by $F$. The identified set for $\theta_{o}$ under the combined data and these assumptions is defined as
\begin{align*}
    \Theta_{IC}^{\dagger} \equiv \big\{\E_{F}[\psi(Y_1,Y_0)]: F \in \widetilde{\MF}^{\ast}\big\},
\end{align*}
where $\widetilde{\MF}^{\ast} \equiv \bigl\{F \in \widetilde{\MF}^{\dagger}: F_{YDGX} = F_{YDGX}^{\ast}\bigr\}$.

The following proposition shows that $\Theta_{IC}^{\dagger}$ corresponds to an interval whose lower and upper bounds are given by the solutions to the minimization and maximization problems~(\ref{eq:linear_program})--(\ref{eq:LP_constraint_3}).

\smallskip

\begin{proposition}\label{prop:computational_approach}
    Let $\theta_{L}^{\ast}$ and $\theta_{U}^{\ast}$ denote the solutions to the minimization and maximization problems~(\ref{eq:linear_program})--(\ref{eq:LP_constraint_3}), respectively. If $\widetilde{\MF}^{\ast}$ is nonempty, then $\Theta_{IC}^{\dagger} = [\theta_{L}^{\ast}, \theta_{U}^{\ast}]$.
\end{proposition}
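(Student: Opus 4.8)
\textbf{Proof proposal for Proposition \ref{prop:computational_approach}.}

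The plan is to show the two set inclusions $\Theta_{IC}^{\dagger} \subseteq [\theta_{L}^{\ast}, \theta_{U}^{\ast}]$ and $[\theta_{L}^{\ast}, \theta_{U}^{\ast}] \subseteq \Theta_{IC}^{\dagger}$, together with the observation that $\Theta_{IC}^{\dagger}$ is an interval. The first inclusion is the easier direction: take any $\theta \in \Theta_{IC}^{\dagger}$, so $\theta = \E_F[\psi(Y_1,Y_0)]$ for some $F \in \widetilde{\MF}^{\ast}$. I need to check that the marginal $F_{Y_1Y_0SX}$ induced by $F$ is a feasible point of the optimization program (\ref{eq:linear_program})--(\ref{eq:LP_constraint_3}). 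Constraint (\ref{eq:LP_constraint_1}) holds because $F \in \widetilde{\MF}^{\ast}$ forces $F_{YDGX} = F_{YDGX}^{\ast}$, and Lemma \ref{lem:identification_jointCDF} (or more precisely the argument behind Theorem \ref{thm:characterization_Theta_IC}) shows this pins down $F_{Y_dSX} = F_{Y_dSX}^{\ast}$ for $d = 0,1$; I should state this step carefully since it is the crux of why the program is formulated only over $(Y_1,Y_0,S,X)$. Constraints (\ref{eq:LP_constraint_2}) and (\ref{eq:LP_constraint_3}) hold because $F$ satisfies Assumptions \ref{asm:MSI} and \ref{asm:GRM} by definition of $\widetilde{\MF}^{\dagger}$. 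Since the objective value of the program at this feasible point is exactly $\theta$, it lies between the infimum $\theta_L^{\ast}$ and the supremum $\theta_U^{\ast}$.

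The reverse inclusion is where the real work lies. First, I would show that the endpoints $\theta_L^{\ast}$ and $\theta_U^{\ast}$ are attained, i.e., there exist feasible distributions achieving them; and, crucially, that any feasible point $F_{Y_1Y_0SX}$ of the program can be \emph{extended} to a full CDF $F$ of $(Y_1,Y_0,Y,D,S,G,X)$ that belongs to $\widetilde{\MF}^{\ast}$. The extension is the standard construction: set $Y = DY_1 + (1-D)Y_0$, define $D$ conditional on $(Y,G,X)$ so that in the observational stratum $D = S$ and in the experimental stratum $D \perp (Y_1,Y_0,S) \mid X$ with the correct propensity $\BP(D=d \mid G=\Exp, X)$ read off from $F_{YDGX}^{\ast}$, and let $G$ have the observed conditional distribution given $X$ while being independent of $(Y_1,Y_0,S)$ given $X$ (external validity). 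One must verify that this extension reproduces $F_{YDGX}^{\ast}$ — this uses precisely the decomposition in the proof of Lemma \ref{lem:identification_jointCDF}, running it in reverse — and that Assumptions \ref{asm:PO}--\ref{asm:OL} and \ref{asm:MSI}--\ref{asm:GRM} all hold for the extension (the latter two because they are conditions on $F_{Y_1Y_0SX}$ alone). Once this equivalence between feasible points of the program and elements of $\widetilde{\MF}^{\ast}$ is established, the image of $\widetilde{\MF}^{\ast}$ under $F \mapsto \E_F[\psi]$ coincides with the image of the feasible set under the linear objective, so $\Theta_{IC}^{\dagger} = \{\text{objective values over the feasible set}\}$.

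It then remains to show this set of objective values is the full interval $[\theta_L^{\ast}, \theta_U^{\ast}]$, i.e., connectedness. The feasible set of (\ref{eq:linear_program})--(\ref{eq:LP_constraint_3}) is convex: constraint (\ref{eq:LP_constraint_1}) is linear in $F_{Y_1Y_0SX}$, and constraints (\ref{eq:LP_constraint_2}) and (\ref{eq:LP_constraint_3}) — although written as inequalities between \emph{conditional} distributions — can each be rewritten as families of linear inequalities in the joint law (multiply through by the relevant marginal density/mass of $(S,X)$ or of $(Y_{d'},S,X)$, which is fixed by constraint (\ref{eq:LP_constraint_1})), hence the feasible set is convex; the objective $\int \psi \, dF_{Y_1Y_0SX}$ is linear. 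A convex domain maps onto an interval under a linear functional, so the image is $[\theta_L^{\ast}, \theta_U^{\ast}]$ provided the infimum and supremum are finite and attained, which follows from compactness of the feasible set in the weak topology (tightness is inherited from the fixed marginals $F_{Y_dSX}^{\ast}$, and the constraint set is weakly closed) plus continuity of $F \mapsto \int \psi\, dF$ under, e.g., boundedness or appropriate integrability of $\psi$. I expect the main obstacle to be the careful handling of the conditional-distribution constraints (\ref{eq:LP_constraint_2}) and (\ref{eq:LP_constraint_3}): one must argue that conditioning is well-defined and that re-expressing them linearly is legitimate given that the conditioning marginals are held fixed by (\ref{eq:LP_constraint_1}) — and one should flag that when the fixed marginal assigns zero mass to some $(s,x)$ the constraint is vacuous there, consistent with the overlap Assumption \ref{asm:OL} ensuring positivity in the relevant strata. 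The nonemptiness hypothesis on $\widetilde{\MF}^{\ast}$ is exactly what rules out the degenerate case and guarantees the interval is nonempty.
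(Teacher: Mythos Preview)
Your proposal is correct and follows essentially the same approach as the paper: establish that the feasible set of the program is convex and the objective linear so that the image is an interval, and then show a two-way correspondence between feasible $F_{Y_1Y_0SX}$ and elements of $\widetilde{\MF}^{\ast}$ via exactly the extension construction you describe (the paper builds it as in the proof of Theorem~\ref{thm:characterization_Theta_IC}). The only cosmetic difference is that the paper routes the correspondence through an intermediate copula class $\widetilde{\MC}_{x}$ mirroring the earlier proofs, whereas you work directly with $F_{Y_1Y_0SX}$; your treatment of attainment via tightness/compactness is also a bit more explicit than the paper's, which simply asserts the closure equals $[\theta_L^{\ast},\theta_U^{\ast}]$.
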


\smallskip

This result allows us to compute the identified set for $\theta_o$ under the additional restrictions (Assumptions~\ref{asm:MSI} and~\ref{asm:GRM}) by solving the linear program~(\ref{eq:linear_program})--(\ref{eq:LP_constraint_3}). Although our analysis focuses on the case in which both assumptions are imposed jointly, the sharp bounds can also be obtained by solving optimization problems of the same form when either assumption is imposed on its own.

\section{Empirical Illustration}\label{sec:empirical_illustration}

We illustrate the proposed approach using data from the DRPT study of \citet{gaines2011experimental}, conducted in Illinois as part of the 2008 Cooperative Campaign Analysis Project. The study examines the effects of negative campaign advertisements on candidate evaluations during the 2008 U.S. presidential election.

The sample consists of 483 adult Illinois residents who were randomly assigned to one of three groups: treatment ($n=118$), no-treatment ($n=129$), and self-selection ($n=236$). Individuals in the treatment group were exposed to negative campaign advertisements (e-flyers) about John McCain and Barack Obama, whereas those in the no-treatment group were not. Participants in the self-selection group were allowed to choose whether to view the advertisements, and 90 of the 236 participants opted to do so.

The outcome variable $Y$ is each respondent's feeling thermometer rating toward each candidate, measured on a scale from 0 (very unfavorable) to 100 (very favorable), with 50 indicating neutrality. The dataset includes a single categorical covariate, $X$, indicating respondents' partisanship: Republican ($n=207$), Democrat ($n=233$), and Independent ($n=43$).

Let $Y_{d,\text{McCain}}$ and $Y_{d,\text{Obama}}$ denote the potential feeling thermometer ratings toward John McCain and Barack Obama, respectively, under treatment status $d \in \{0,1\}$, where $d=1$ indicates exposure to the negative campaign advertisements. Our parameter of interest is $\BP(Y_{1,j} < Y_{0,j})$, which represents the proportion of individuals whose evaluation of candidate $j \in \{\text{McCain}, \text{Obama}\}$ is negatively affected by the campaign material. This parameter is particularly appealing because it remains well defined and substantively meaningful even when the outcome $Y_{d,j}$ is ordinal rather than cardinal. This feature is important in our application, as feeling thermometer ratings reflect subjective assessments and hence may lack a strong cardinal interpretation (see, e.g., \citealp{wilcox1989some}, for discussion).\footnote{In such a context, common causal parameters, such as the ATE $\E[Y_{1,j}-Y_{0,j}]$, may fail to provide a meaningful interpretation.} We also estimate the distribution function of the treatment effect, $F_{Y_{1,j}-Y_{0,j}}^{\ast}(\delta)$, for each candidate.

For each candidate $j \in \{\text{McCain}, \text{Obama}\}$, we estimate the identified set for $\mathbb{P}(Y_{1,j} < Y_{0,j})$ both with and without the self-selection sample ($G=\Obs$), and both with and without imposing Assumption~\ref{asm:MSI} (Positive Dependence). Assumption~\ref{asm:MSI} is motivated by the idea that individuals who hold a more favorable view of a candidate in the control state are likely to maintain relatively favorable views even when exposed to negative campaign advertisements, thereby inducing positive dependence between the potential outcomes.\footnote{We do not impose Assumption~\ref{asm:GRM}, since self-selection into viewing negative campaign advertisements is unlikely to follow the Generalized Roy selection model.} Without Assumption~\ref{asm:MSI}, the identified sets are estimated using Proposition~\ref{prop:characterization_identified_sets_phi-indicator} (with $\varphi(y_1,y_0)=y_1-y_0$ and $\delta=0$), together with Lemma~\ref{lem:identification_jointCDF}, replacing $F_{YDGX}^{\ast}$ with its empirical distribution. When Assumption~\ref{asm:MSI} is imposed, the identified sets are estimated via linear programming using empirical distributions; see Appendix~\ref{app:linear_program} for details.

Table~\ref{tab:empirical_results} reports the estimated identified sets and 95\% confidence intervals for $\BP(Y_1 < Y_0)$ for each candidate, for the full population and for specified subpopulations (Democrats, Republicans, and each self-selection type $s \in \{0,1\}$), across the four combinations of (i) including versus excluding the self-selection sample ($G=\Obs$) and (ii) imposing versus not imposing Assumption~\ref{asm:MSI}.\footnote{Inference for the analytical bounds in Section~\ref{sec:bound_analysis} is based on pointwise valid bootstrap confidence intervals constructed using plug-in estimators. For the linear-programming bounds in Section~\ref{sec:computational_approach}, we use the perturbation bootstrap procedure of \citet{cho2024simple}, which is designed for inference on functionals of set-identified parameters defined by linear moment inequalities.} The estimated identified sets based solely on the random-assignment sample ($G=\Exp$) are wide and therefore not particularly informative. By contrast, incorporating the self-selection sample substantially tightens the identified sets, with further reductions when Assumption~\ref{asm:MSI} is imposed. The inclusion of the self-selection sample also substantially tightens the corresponding confidence intervals.

In particular, the combined use of the self-selection sample and Assumption~\ref{asm:MSI} (our baseline specification) yields substantially tighter and more informative bounds. For John McCain, the baseline results suggest that at least 40\% of individuals are negatively affected by the campaign advertisements, while at least 47\% appear resistant. Qualitatively similar patterns are observed for Barack Obama (see Table~\ref{tab:empirical_results}). These findings point to substantial yet heterogeneous effects of the negative advertisements.

Figure~\ref{fig:cdf} plots the estimated identified sets for $F_{Y_{1,j}-Y_{0,j}}^{\ast}(\delta)$, together with 95\% confidence intervals, for each candidate $j \in \{\text{McCain}, \text{Obama}\}$. It compares two specifications: (i) the benchmark case, which uses only the random-assignment sample and imposes no additional restriction, and (ii) our baseline specification, which combines the self-selection sample with Assumption~\ref{asm:MSI}. The figure shows that the baseline specification yields markedly tighter identified sets and confidence intervals over a wide range of $\delta$. This pattern indicates that combining the self-selection sample with Assumption~\ref{asm:MSI} substantially improves the informativeness of the treatment-effect heterogeneity analysis.

Overall, these findings illustrate the empirical value of combining random-assignment and self-selection data. They also highlight an important advantage of DRPT designs: improving the informativeness of DTE analyses through the incorporation of self-selection data.

\section{Conclusion}\label{sec:conclusion}

This study investigates how combining experimental and observational data can improve the identification of DTE parameters. We show that the identified set under the combined data is fully characterized by the joint distribution of each potential outcome, the latent self-selection variable, and covariates, with latent self-selection serving as the key source of additional identifying power. For a broad class of DTE parameters represented by super(sub)modular functions and $\varphi$-indicator functions, we derive sharp bounds under the combined data. We further establish necessary and sufficient conditions under which data combination strictly shrinks the identified set, suggesting that such shrinkage arises generically unless selection-on-observables holds in the observational data. We also propose a linear programming approach for computing sharp bounds while incorporating additional structural assumptions, such as positive dependence between the potential outcomes and generalized Roy model selection. An empirical application using DRPT data on negative campaign advertisements in the U.S. presidential election illustrates the practical value of combining random-assignment and self-selection data. Although observational data are often viewed as redundant once experimental data are available, our results show that they can still provide substantial additional identifying power for distributional parameters.

\newpage

\begin{table}[htbp]
\centering
\caption{Estimated Identified Sets and 95\% Confidence Intervals of $\BP(Y_1 < Y_0)$}
\label{tab:empirical_results}
\begin{threeparttable}
\vspace{0.2cm}
\centering 
\textbf{Panel (a): John McCain}

\begin{tabular}{ccccccc}
\toprule
\multicolumn{2}{c}{Inclusion of} & \multicolumn{5}{c}{Population} \\
\cmidrule(lr){1-2} \cmidrule(lr){3-7}
Self-selection & Asm.\ PD & \textbf{All} & Democrats & Republicans & Selected T & Selected NT \\
\midrule
No  & No
& \makecell[c]{$[0.13,0.90]$ \\ {\footnotesize $(0.08,0.93)$}}
& \makecell[c]{$[0.12,0.88]$ \\ {\footnotesize $(0.03,0.95)$}}
& \makecell[c]{$[0.11,0.92]$ \\ {\footnotesize $(0.04,0.99)$}}
& --
& -- \\

Yes & No
& \makecell[c]{$[0.21,0.81]$ \\ {\footnotesize $(0.18,0.83)$}}
& \makecell[c]{$[0.20,0.81]$ \\ {\footnotesize $(0.12,0.87)$}}
& \makecell[c]{$[0.20,0.84]$ \\ {\footnotesize $(0.14,0.89)$}}
& \makecell[c]{$[0.36,0.88]$ \\ {\footnotesize $(0.26,0.94)$}}
& \makecell[c]{$[0.11,0.76]$ \\ {\footnotesize $(0.06,0.83)$}} \\

No  & Yes
& \makecell[c]{$[0.17,0.79]$ \\ {\footnotesize $(0.09,0.85)$}}
& \makecell[c]{$[0.16,0.79]$ \\ {\footnotesize $(0.04,0.88)$}}
& \makecell[c]{$[0.15,0.80]$ \\ {\footnotesize $(0.03,0.89)$}}
& --
& -- \\

Yes & Yes
& \makecell[c]{$[0.40,0.53]$ \\ {\footnotesize $(0.33,0.64)$}}
& \makecell[c]{$[0.33,0.56]$ \\ {\footnotesize $(0.21,0.69)$}}
& \makecell[c]{$[0.45,0.60]$ \\ {\footnotesize $(0.36,0.74)$}}
& \makecell[c]{$[0.46,0.56]$ \\ {\footnotesize $(0.33,0.77)$}}
& \makecell[c]{$[0.36,0.52]$ \\ {\footnotesize $(0.23,0.62)$}} \\
\bottomrule
\end{tabular}
\vspace{0.6cm}
\centering
\textbf{Panel (b): Barack Obama}

\begin{tabular}{ccccccc}
\toprule
\multicolumn{2}{c}{Inclusion of} & \multicolumn{5}{c}{Population} \\
\cmidrule(lr){1-2} \cmidrule(lr){3-7}
Self-selection & Asm.\ PD & \textbf{All} & Democrats & Republicans & Selected T & Selected NT \\
\midrule
No  & No
& \makecell[c]{$[0.17,0.89]$ \\ {\footnotesize $(0.12,0.93)$}}
& \makecell[c]{$[0.12,0.88]$ \\ {\footnotesize $(0.01,0.96)$}}
& \makecell[c]{$[0.17,0.87]$ \\ {\footnotesize $(0.06,0.97)$}}
& --
& -- \\

Yes & No
& \makecell[c]{$[0.22,0.77]$ \\ {\footnotesize $(0.20,0.84)$}}
& \makecell[c]{$[0.13,0.73]$ \\ {\footnotesize $(0.09,0.86)$}}
& \makecell[c]{$[0.27,0.81]$ \\ {\footnotesize $(0.16,0.91)$}}
& \makecell[c]{$[0.30,0.65]$ \\ {\footnotesize $(0.17,0.84)$}}
& \makecell[c]{$[0.18,0.84]$ \\ {\footnotesize $(0.14,0.92)$}} \\

No  & Yes
& \makecell[c]{$[0.18,0.72]$ \\ {\footnotesize $(0.11,0.82)$}}
& \makecell[c]{$[0.12,0.74]$ \\ {\footnotesize $(0.01,0.87)$}}
& \makecell[c]{$[0.17,0.70]$ \\ {\footnotesize $(0.07,0.84)$}}
& --
& -- \\

Yes & Yes
& \makecell[c]{$[0.38,0.56]$ \\ {\footnotesize $(0.27,0.67)$}}
& \makecell[c]{$[0.43,0.57]$ \\ {\footnotesize $(0.37,0.72)$}}
& \makecell[c]{$[0.26,0.56]$ \\ {\footnotesize $(0.05,0.70)$}}
& \makecell[c]{$[0.35,0.43]$ \\ {\footnotesize $(0.18,0.65)$}}
& \makecell[c]{$[0.41,0.63]$ \\ {\footnotesize $(0.24,0.75)$}} \\
\bottomrule
\end{tabular}

\medskip
\begin{tablenotes}[flushleft]
\footnotesize
\item \textit{Notes:} 
Each cell reports the estimated identified set of $\BP(Y_1 < Y_0)$ in square brackets and the corresponding 95\% confidence interval in parentheses. “Self-selection” indicates whether the self-selection sample ($G=\Obs$) is included. “Asm.\ PD” indicates whether Assumption~\ref{asm:MSI} (Positive Dependence) is imposed. “Democrats” and “Republicans” refer to self-identified party affiliation. “Selected T” and “Selected NT” denote individuals who self-selected into viewing and not viewing the negative campaign advertisements, respectively. For the linear programming approach (third and fourth rows in each panel), confidence intervals are constructed using the bootstrap procedure of \cite{cho2024simple} with 500 bootstrap replications and perturbation parameter $\bar{\epsilon} = 10^{-3}$. For all other cases, pointwise valid bootstrap confidence intervals are constructed using plug-in estimators with 500 bootstrap replications.
\end{tablenotes}

\end{threeparttable}
\end{table}
\bigskip

\newpage

\begin{figure}[htbp]
    \centering
        \caption{Estimated identified sets and 95\% confidence intervals for the distribution function of treatment effects.}
        \label{fig:cdf}
    \begin{subfigure}[b]{0.48\textwidth}
        \centering
        \caption{John McCain}
        \includegraphics[width=\textwidth]{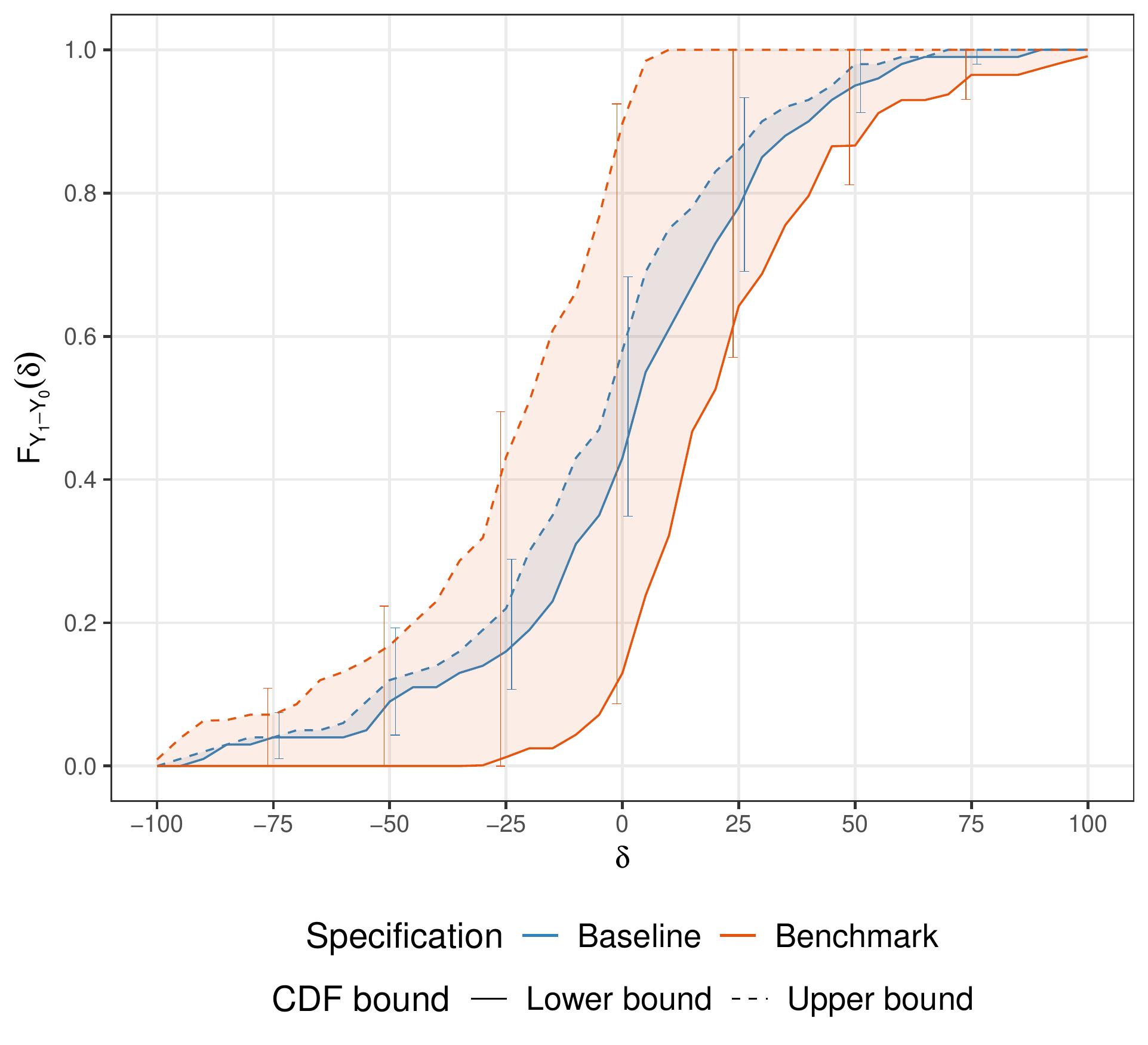}
        \label{fig:cdf_mccain}
    \end{subfigure}
    \hfill
    \begin{subfigure}[b]{0.48\textwidth}
        \centering
        \caption{Barack Obama}
        \includegraphics[width=\textwidth]{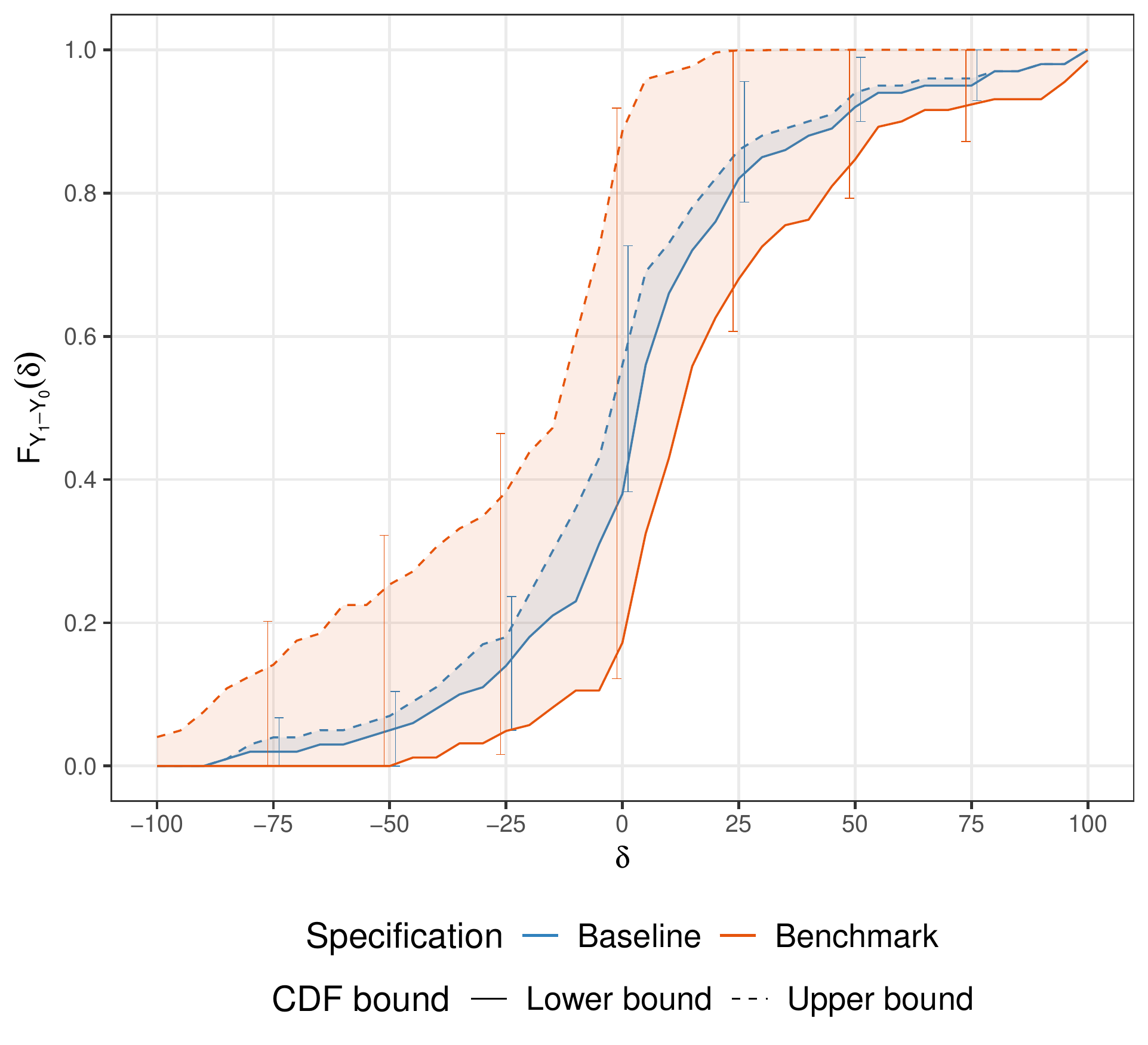}
        \label{fig:cdf_obama}
    \end{subfigure}
    \begin{tablenotes}
\footnotesize
\item 
\textit{Notes:} Each panel shows the estimated identified set for $F_{Y_1 - Y_0}^{\ast}(\delta)$ for each candidate, with the area between the estimated lower and upper bounds shaded. Blue lines represent the baseline specification, and orange lines represent the benchmark specification. Solid and dashed lines denote the lower and upper estimated bounds, respectively. Vertical error bars indicate 95\% confidence intervals, computed at selected values of $\delta \in \{-75,-50,\ldots,75\}$. For the linear-programming approach (the baseline specification), confidence intervals are constructed using the bootstrap procedure of \cite{cho2024simple} with 250 bootstrap replications and perturbation parameter $\bar{\epsilon} = 10^{-3}$. For the benchmark specification, pointwise valid bootstrap confidence intervals are constructed using plug-in estimators with 250 bootstrap replications.
\end{tablenotes}
\end{figure}

\setstretch{1.4}

\subsection*{Acknowledgments}
I thank Brian Gaines,  Marc Henry, Hidehiko Ichimura, Teppei Yamamoto, and participants at various seminars and conferences for their comments and suggestions. Masaki Suzuki provided excellent research assistance.  I gratefully acknowledge financial support from JSPS KAKENHI Grant (number 24K16342).

\bigskip
\setstretch{1.35}

\bibliographystyle{ecta}
\bibliography{ref_CEO}

%%%%%%%%%%%%%%%%%%%%%%%%%%%%%%%%%%%%%%%%%%%
%%%%%%%%%%%%%%%%%%%%%%%%%%%%%%%%%%%%%%%%%%%
\newpage
\appendix
\setstretch{1.5}

\part*{Appendix}

\setcounter{equation}{0}
\renewcommand{\theequation}{A.\arabic{equation}}

%%%%%%%%%%%%%%%%%%%%%%%%%%%%%%%%%%%%%%%%%%%%%%%%%%%%%%%%%
%%%%%%%%%%%%%%%%%%%%%%%%%%%%%%%%%%%%%%%%%%%%%%%%%%%%%%%%%

\section{Proofs and a Preliminary Lemma}\label{app:proofs}

\begin{proof}[Proof of Lemma~\ref{lem:identification_jointCDF}]
    We begin by showing equation (\ref{eq:identification_P_S|X}). By Assumption~\ref{asm:SS}, 
    \begin{align*}
        \BP(D = s \mid G = \Obs, X = x) = \BP(S = s \mid G = \Obs, X = x).
    \end{align*}
Then, by Assumption~\ref{asm:EV}, it follows that 
\begin{align*}
    \BP(S = s \mid G = \Obs, X = x) = \BP(S = s \mid X = x).
\end{align*}
Combining these results yields equation~(\ref{eq:identification_P_S|X}).

We next consider the identification of $F_{Y_d \mid S X}^{\ast}(\cdot \mid s, x)$, thereby showing equation (\ref{eq:identification_F_Yd|SX}). When $d = s$, we have
\begin{align*}
F_{Y \mid D G X}^{\ast}(\cdot \mid d, \Obs, x)
= F_{Y_d \mid D G X}^{\ast}(\cdot \mid d, \Obs, x) 
= F_{Y_d \mid S G X}^{\ast}(\cdot \mid s, \Obs, x) 
= F_{Y_d \mid S X}^{\ast}(\cdot \mid s, x),
\end{align*}
where the second equality follows from Assumption~\ref{asm:SS}, and the third from Assumption~\ref{asm:EV}. Hence, equation (\ref{eq:identification_F_Yd|SX}) holds for $d=s$.

For the case $d \neq s$, we begin with the decomposition:
\begin{align*}
F_{Y_d \mid X}^{\ast}(\cdot \mid x)
&= \BP(S = d \mid X = x) \cdot F_{Y_d \mid S X}^{\ast}(\cdot \mid d, x)
+ \BP(S = s \mid X = x) \cdot F_{Y_d \mid S X}^{\ast}(\cdot \mid s, x).
\end{align*}
Rearranging yields
\begin{align*}
F_{Y_d \mid S X}^{\ast}(\cdot \mid s, x)
= \frac{F_{Y_d \mid X}^{\ast}(\cdot \mid x)
- \BP(S = d \mid X = x) \cdot F_{Y_d \mid S X}^{\ast}(\cdot \mid d, x)}{\BP(S = s \mid X = x)}.
\end{align*}
In this expression, $F_{Y_d|X}^{\ast}(\cdot|x)$ is identified as $F_{Y_d|X}^{\ast}(\cdot|x) = F_{Y|DGX}^{\ast}(\cdot|d,\Exp,x)$ under Assumptions \ref{asm:RA} and \ref{asm:EV}. Moreover,  $\BP(S=\cdot|X=x)$ and $F_{Y_d|SX}^{\ast}(\cdot|d,x)$ are identified as $\BP(S=\cdot|X=x) = \BP(D=\cdot|G=\Obs,X=x)$ and $F_{Y_d|SX}^{\ast}(\cdot| d,x) = F_{Y|DGX}^{\ast}(\cdot| d,\Obs,x)$, as shown above. Therefore, equation (\ref{eq:identification_F_Yd|SX}) holds for $d \neq s$.
\end{proof}
\bigskip
%%%%%%%%%%%%%%%%%%%%%%%%%%%%%%%%%%%%%%%%%%%%%%%%%%%%%%%%%%%%%%%%%%%%%%%%%%

\begin{lemma}\label{lem:appendix_lemma_1}
    For any distribution function $F \in \MF^{\ast}$, the following hold:
    \begin{itemize}
        \item[(i)] $F_{Y_1Y_0|SX}(\cdot,\cdot|s,x) = F_{Y_1Y_0|DGX}(\cdot,\cdot|s,\Obs,x)$ for almost all $x$ and any $s = 0,1$;
        \item [(ii)] $F_{S|X}(\cdot|x) = F_{D|GX}(\cdot|G=\Obs,x)$ for almost all $x$.
    \end{itemize}
\end{lemma}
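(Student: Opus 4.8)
The plan is to derive both identities directly from the $F$-versions of Assumption~\ref{asm:SS} (self-selection) and Assumption~\ref{asm:EV} (external validity), both of which every $F \in \MF^{\ast} \subseteq \MF^{\dagger}$ satisfies; the data-consistency condition $F_{YDGX} = F_{YDGX}^{\ast}$ plays no role in this lemma. Assumption~\ref{asm:OL} enters only to guarantee that the conditioning events appearing below — in particular $\{S = s, X = x\}$ and $\{D = s, G = \Obs, X = x\}$ — have positive probability for almost every $x$, so that the conditional CDFs are well defined; indeed, $\BP_{F}(S = s \mid X = x) = \BP_{F}(D = s \mid G = \Obs, X = x) > 0$ for a.e. $x$ follows from Assumption~\ref{asm:OL} together with part~(ii).

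For part~(ii), I would first invoke Assumption~\ref{asm:SS} under $F$, i.e. $\BP_{F}(S = D \mid G = \Obs, X) = 1$ a.s., which gives $F_{S|GX}(\cdot \mid \Obs, x) = F_{D|GX}(\cdot \mid \Obs, x)$ for almost all $x$. Next, Assumption~\ref{asm:EV} under $F$, namely $(Y_1, Y_0, S) \indep G \mid X$, implies in particular $S \indep G \mid X$, so $F_{S|GX}(\cdot \mid \Obs, x) = F_{S|X}(\cdot \mid x)$ for almost all $x$. Chaining the two equalities yields $F_{S|X}(\cdot \mid x) = F_{D|GX}(\cdot \mid \Obs, x)$.

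For part~(i), I would start from the right-hand side and condition throughout on $G = \Obs$. On this event, Assumption~\ref{asm:SS} gives $D = S$ a.s.\ conditional on $(G = \Obs, X = x)$ for a.e.\ $x$, so $\{D = s\}$ and $\{S = s\}$ coincide up to a null set and hence $F_{Y_1 Y_0 | DGX}(\cdot, \cdot \mid s, \Obs, x) = F_{Y_1 Y_0 | SGX}(\cdot, \cdot \mid s, \Obs, x)$ for almost all $x$. Then Assumption~\ref{asm:EV} gives $(Y_1, Y_0, S) \indep G \mid X$; by the weak-union property of conditional independence this yields $(Y_1, Y_0) \indep G \mid (S, X)$, hence $F_{Y_1 Y_0 | SGX}(\cdot, \cdot \mid s, \Obs, x) = F_{Y_1 Y_0 | SX}(\cdot, \cdot \mid s, x)$ for almost all $x$. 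Combining the two displays gives the claim.

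The only real subtlety — and the step I would treat most carefully — is the passage from the joint conditional independence $(Y_1, Y_0, S) \indep G \mid X$ to $(Y_1, Y_0) \indep G \mid (S, X)$, together with the bookkeeping of the exceptional null sets of $x$. The weak-union step is a standard graphoid property, but it should either be invoked explicitly or verified directly from the definition of conditional independence as a statement about conditional distributions. One then notes that only finitely many $\BP_{F}(X \in \cdot)$-null exceptional sets arise — one for each $s \in \{0,1\}$ and for each invoked assumption — and their union is again null, so the ``for almost all $x$'' qualifier is preserved throughout.
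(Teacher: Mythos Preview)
Your proposal is correct and follows essentially the same route as the paper: for both parts you chain Assumption~\ref{asm:SS} (to replace $D$ by $S$ on $\{G=\Obs\}$) with Assumption~\ref{asm:EV} (to drop $G$ from the conditioning set), which is exactly what the paper does. Your treatment is in fact more careful than the paper's, which simply asserts the two equalities without spelling out the weak-union step or the role of Assumption~\ref{asm:OL} in ensuring the conditioning events are non-null.
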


\begin{proof}
    Fix any $F \in \MF^{\ast}$. The result (i) follows from 
\begin{align*}
F_{Y_1 Y_0 \mid D G X}(\cdot,\cdot \mid d, \Obs, x)
= F_{Y_1Y_0 \mid S G X}(\cdot \mid d, \Obs, x) \
= F_{Y_1Y_0 \mid S X}(\cdot \mid d, x),
\end{align*}
where the first equality follows from Assumption~\ref{asm:SS} and the second from Assumption~\ref{asm:EV}. 
The result (ii) follows by the same argument as in the first part of the proof of Lemma \ref{lem:identification_jointCDF}.
\end{proof}
\bigskip
%%%%%%%%%%%%%%%%%%%%%%%%%%%%%%%%%%%%%%%%%%%%%%%%%%%%%%%%%%%%%%%%%%%%%%%%%%

\begin{proof}[Proof of Proposition \ref{prop:characterization_Theta_I}]
We begin by showing that $\Theta_{I} \subseteq \widetilde{\Theta}_{I}$. Fix any $\theta \in \Theta_{I}$. By definition, there exists a distribution function $F \in \MF_{\Exp}^{\ast}$ such that $\theta = \E_{F}[\psi(Y_1,Y_0)]$. Since $F$ satisfies Assumptions \ref{asm:PO}--\ref{asm:OL} (with $F^{\ast}$ replaced by $F$) and $F_{YDX|G}(\cdot,\cdot,\cdot|\Exp) = F_{YDX|G}^{\ast}(\cdot,\cdot,\cdot|\Exp)$, it follows that for each $d \in \{0,1\}$ and all $x$,
\begin{align}
    F_{Y_d|X}(\cdot|x) 
    =  F_{Y|DGX}(\cdot|d,\Exp,x)
    = F_{Y|DGX}^{\ast}(\cdot|d,\Exp,x)  
    = F_{Y_d|X}^{\ast}(\cdot|x), \label{eq:F_{Y_dX}}
\end{align}
where the first and third equalities follow from Assumptions~\ref{asm:PO}, \ref{asm:RA}, and \ref{asm:EV}, and the second from $F_{YDX \mid G}(\cdot, \cdot, \cdot \mid \Exp) = F_{YDX \mid G}^{\ast}(\cdot, \cdot, \cdot \mid \Exp)$.

By Sklar’s Theorem (e.g., \citeauthor{nelsen2006introduction}, \citeyear{nelsen2006introduction}, Theorem 2.3.3), there exists a conditional copula function $C(\cdot, \cdot \mid x) \in \MC$ such that, for almost all $x$, the conditional joint distribution $F_{Y_1 Y_0 \mid X}$ can be written as
\begin{align*}
    F_{Y_1Y_0|X}(\cdot,\cdot|x) = C(F_{Y_1|X}(\cdot|x),F_{Y_0|X}(\cdot|x)|x).
\end{align*} 
It then follows that 
\begin{align*}
            \theta &= \E_{F_{Y_1Y_0}}\left[\psi(Y_1,Y_0) \right]\\
            &= \E_{F_{X}}\left[\E_{F_{Y_1Y_0|X}}\left[\psi(Y_1,Y_0) \middle| X \right]\right]\\
            & = \E_{F_{X}}\left[\int\int \psi(y_{1},y_{0}) d F_{Y_1Y_0|X}
            (y_1,y_0|X)\right]\\
             & = \E_{F_{X}}\left[\int\int \psi(y_{1},y_{0}) d C\bigl(F_{Y_1|X}(y_1|X),F_{Y_0|X}(y_0|X)\big|X\bigr)\right]\\
            & = \E_{F_{X}^{\ast}}\left[\int\int \psi(y_{1},y_{0}) d C\bigl(F_{Y_1|X}^{\ast}(y_1|X),F_{Y_0|X}^{\ast}(y_0|X)\big|X\bigr)\right]\\
            & \in \widetilde{\Theta}_{I},
\end{align*}
where the fifth line follows from (\ref{eq:F_{Y_dX}}) and the condition $F_{X} = F_{X}^{\ast}$ in $\MF_{\Exp}^{\ast}$, and the last line follows from the definition of $\widetilde{\Theta}_{I}$.
Hence, $\theta \in \widetilde{\Theta}_{I}$. Since this argument holds for any $\theta \in \Theta_{I}$, we conclude that $\Theta_{I} \subseteq \widetilde{\Theta}_{I}$.

We next show that $\widetilde{\Theta}_{I} \subseteq \Theta_{I}$. Fix any $\theta \in \widetilde{\Theta}_{I}$. By the definition of $\widetilde{\Theta}_{I}$, there exists a conditional copula function $C(\cdot,\cdot|x) \in \MC$ such that
\begin{align}
    \theta = \E_{F_{X}^{\ast}}\left[\int\int \psi(y_{1},y_{0}) d C\bigl(F_{Y_1|X}^{\ast}(y_1|X),F_{Y_0|X}^{\ast}(y_0|X)\big|X\bigr)\right]. \label{eq:theta_copula_tmp}
\end{align}

We now show that there exists a distribution function $F \in \MF_{\Exp}^{\ast}$ that reproduces $\theta$ as $\theta = \E_{F}\left[\psi(Y_1,Y_0)\right]$.
Specifically, we construct such a distribution function $F$ of $(Y_{1},Y_{0},Y,D,S,G,X)$ hierarchically, defined by
\begin{align}
  &F_{DGX} = F_{DGX}^{\ast};  \label{eq:F_DGX_tmp}\\
    &F_{Y_{1}Y_{0}|DGX}(y_1,y_0 |D,G,X) = C\left(F_{Y_{1}|X}^{\ast}(y_1|X),F_{Y_{0}|X}^{\ast}(y_0|X)\middle| X\right);\label{eq:F_Y1Y0|DGX_tmp}\\
  &F_{S|Y_1Y_0DGX}(s|Y_1,Y_0,D,\Obs,X) = \I\left\{D\leq s\right\}\label{eq:F_S|DobsX_tmp};\\
    &F_{S|Y_1Y_0DGX}(s|Y_1,Y_0,D,\Exp,X) = F_{S|Y_1Y_0GX}(s|Y_1,Y_0,\Obs,X) \label{eq:F_S|DexpX_tmp};\\
    &F_{Y|Y_1 Y_0DSGX}(y) = \I\{DY_1 + (1-D)Y_0 \leq y\}. \label{eq:F_Y|Y_1Y_0DSCH_tmp}
\end{align}

We will show that $F  \in \MF_{\Exp}^{\ast}$ and $\E_{F}[\psi(Y_1,Y_0)] = \theta$. For the former,
we will specifically show that (i) $F$ satisfies Assumptions \ref{asm:PO}--\ref{asm:OL} (with $F^{\ast}$ replaced by $F$) and (ii)  $F_{YDX|G}(\cdot,\cdot,\cdot|\Exp) = F_{YDX|G}^{\ast}(\cdot,\cdot,\cdot|\Exp)$ and $F_X = F_{X}^{\ast}$.

We begin with condition (i). From equation (\ref{eq:F_Y|Y_1Y_0DSCH_tmp}), we have $Y=DY_1 + (1-D)Y_0$ a.s., and hence $F$ satisfies Assumption~\ref{asm:PO}. Moreover, equation~(\ref{eq:F_S|DobsX_tmp}) implies $\BP_F(S=D|G=\Obs,X)=1$ a.s., and hence $F$ satisfies Assumption~\ref{asm:SS}.

Regarding Assumption~\ref{asm:RA}, equation (\ref{eq:F_Y1Y0|DGX_tmp}) implies that  $F_{Y_{1}Y_{0}|DGX}(\cdot,\cdot |d,\Exp,x)$ does not depend on $d$ (i.e., $F_{Y_{1}Y_{0}|DGX}(\cdot,\cdot |1,\Exp,x) = F_{Y_{1}Y_{0}|DGX}(\cdot,\cdot |0,\Exp,x)$). Therefore, $F$ satisfies Assumption~\ref{asm:RA}. 

As for Assumption~\ref{asm:EV}, by construction in equation~(\ref{eq:F_Y1Y0|DGX_tmp}), $F_{Y_{1}Y_{0}|DGX}(\cdot,\cdot |d,g,x)$ does not depend on $d$ or $g$. Hence, $F_{Y_{1}Y_{0}|GX}(\cdot,\cdot |\Exp,x) = F_{Y_{1}Y_{0}|GX}(\cdot,\cdot |\Obs,x)$; that is, $F_{Y_{1}Y_{0}|GX}$ is invariant to $G$.
Moreover, equations~(\ref{eq:F_S|DobsX_tmp}) and~(\ref{eq:F_S|DexpX_tmp}) imply that 
\begin{align*}
    F_{S|Y_1Y_0GX}(s|Y_1,Y_0,\Exp,X) = F_{S|Y_1Y_0GX}(s|Y_1,Y_0,\Obs,X),
\end{align*}
so $F_{S|Y_1Y_0GX}$ is also invariant to $G$. It then follows that, for almost all $(y_1,y_0,s,g,x)$,
\begin{align*}
    F_{Y_{1}Y_{0}S|GX}(y_1,y_0,s|g,x) &= \int_{-\infty}^{y_1}\int_{-\infty}^{y_0} F_{S|Y_1Y_0GX}(s|u,v,g,x)d F_{Y_1Y_0|GX} (u,v|g,x) \\
    &= \int_{-\infty}^{y_1}\int_{-\infty}^{y_0}F_{S|Y_1Y_0X}(s|u,v,x)d F_{Y_1Y_0|X} (u,v|x).
\end{align*}
Thus, $F_{Y_{1}Y_{0}S|GX}(y_1,y_0,s|g,x)$ does not depend on $g$, showing that $F$ satisfies Assumption~\ref{asm:EV}.

The distribution $F$ also satisfies Assumption~\ref{asm:OL}, since $F_{DGX} = F_{DGX}^{\ast}$ (equation~\eqref{eq:F_DGX_tmp}) and $F^{\ast}$ satisfies Assumption~\ref{asm:OL}. Overall, we have shown that $F$ satisfies condition~(i) (i.e., $F$ satisfies Assumptions \ref{asm:PO}--\ref{asm:OL} with $F^{\ast}$ replaced by $F$).

We next consider condition (ii), namely that $F_{YDX|G}(\cdot,\cdot,\cdot|\Exp) = F_{YDX|G}^{\ast}(\cdot,\cdot,\cdot|\Exp)$ and $F_{X}=F_{X}^{\ast}$.
From equation (\ref{eq:F_Y1Y0|DGX_tmp}),  Sklar’s theorem implies that $F_{Y_{1}Y_{0}|DGX}(\cdot,\cdot|d,\Exp,x)$ has conditional marginal distributions corresponding to $F_{Y_{1}|X}^{\ast}(\cdot|x)$ and $F_{Y_{0}|X}^{\ast}(\cdot|x)$. Hence, for almost all $d$ and $x$,
\begin{align*}
    F_{Y|DGX}(\cdot|d,\Exp,x) = F_{Y_{d}|DGX}(\cdot|d,\Exp,x) = F_{Y_{d}|X}^{\ast}(\cdot|x) = F_{Y|DGX}^{\ast}(\cdot|d,\Exp,x), 
\end{align*}
where the last equality follows from Assumptions~\ref{asm:PO}, \ref{asm:RA}, and \ref{asm:EV}.
Combining this with $F_{DGX} = F^{\ast}_{DGX}$ (equation~(\ref{eq:F_DGX_tmp})) yields $F_{YDX|G}(\cdot,\cdot,\cdot|\Exp) = F_{YDX|G}^{\ast}(\cdot,\cdot,\cdot|\Exp)$. Moreover, equation~(\ref{eq:F_DGX_tmp}) leads to $F_X = F_{X}^{\ast}$. Hence, $F$ satisfies condition (ii).

We subsequently show that $\E_{F}[\psi(Y_1,Y_0)] = \theta$. Note first that, by Assumptions~\ref{asm:RA} and~\ref{asm:EV}, $F_{Y_1Y_0|DGX}(\cdot,\cdot|D,\Exp,X) = F_{Y_1Y_0|X}(\cdot,\cdot|X)$.  It then follows that 
\begin{align*}
   \E_{F}[\psi(Y_1,Y_0)] & = \E_{F_{X}}\left[\int\int \psi(y_{1},y_{0}) d F_{Y_1Y_0|X}
            (y_1,y_0|X)\right]\\
        & = \E_{F_{X}}\left[\E_{F_{DG|X}}\left[\int\int \psi(y_{1},y_{0}) d F_{Y_1Y_0|DGX}
            (y_1,y_0|D,G,X)\middle|X\right]\right]\\
    & = \E_{F_{X}}\left[\int\int \psi(y_{1},y_{0}) d C\left(F_{Y_{1}|X}^{\ast}(y_1|X),F_{Y_{0}|X}^{\ast}(y_0|X)\middle| X\right)\right]\\
     & = \E_{F_{X}^{\ast}}\left[\int\int \psi(y_{1},y_{0}) d C\left(F_{Y_{1}|X}^{\ast}(y_1|X),F_{Y_{0}|X}^{\ast}(y_0|X)\middle| X\right)\right]\\
        &= \theta,
\end{align*}
where the second line uses $F_{Y_1Y_0|DGX}(\cdot,\cdot|D,\Exp,X) = F_{Y_1Y_0|X}(\cdot,\cdot|X)$, the third follows from equation~\eqref{eq:F_Y1Y0|DGX_tmp}, the fourth from equation~\eqref{eq:F_DGX_tmp}, and the last from equation~\eqref{eq:theta_copula_tmp}.

Consequently, we have shown that $F\in \MF_{\Exp}^{\ast}$ and $\E_{F}[\psi(Y_1,Y_0)] = \theta$, implying $\theta \in \Theta_{I}$. Since this argument holds for any $\theta \in \widetilde{\Theta}_{I}$, it follows that $\widetilde{\Theta}_{I} \subseteq \Theta_{I}$.
\end{proof}
\bigskip

%%%%%%%%%%%%%%%%%%%%%%%%%%%%%%%%%%%%%%%%%%%%%%%%%%%%%%%%%%%%%%%%%%%%%%%%%%

\begin{proof}[Proof of Theorem \ref{thm:characterization_Theta_IC}]

We begin by showing that $\Theta_{IC} \subseteq \widetilde{\Theta}_{IC}$. Fix any $\theta \in \Theta_{IC}$. By definition, there exists a distribution function $F \in \MF^{\ast}$ such that $\theta = \E_{F}[\psi(Y_1,Y_0)]$. Since $F_{YDGX} = F_{YDGX}^{\ast}$ holds and $F$ satisfies Assumptions \ref{asm:PO}--\ref{asm:OL}, Lemma \ref{lem:identification_jointCDF} implies that $F_{Y_dSX}= F_{Y_dSX}^{\ast}$ for $d=0,1$. 

By Sklar's theorem (e.g., \citeauthor{nelsen2006introduction}, \citeyear{nelsen2006introduction}, Theorem 2.3.3), there exists a conditional copula function $C(\cdot,\cdot|s,x) \in \MC$ such that, for almost all $(s,x)$,
\begin{align*}
    F_{Y_1Y_0|SX}(\cdot,\cdot|s,x) = C(F_{Y_1|SX}(\cdot|s,x),F_{Y_0|SX}(\cdot|s,x)|s,x).
\end{align*}
Then it follows that 
\begin{align*}
            \theta &= \E_{F_{Y_1Y_0}}\left[\psi(Y_1,Y_0) \right]\\
            %&= \E_{F_{SX}}\left[\E_{F_{Y_1Y_0|SX}}\left[\psi(Y_1,Y_0) \middle| S,X \right]\right]\\
            & = \E_{F_{SX}}\left[\int\int \psi(y_{1},y_{0}) d F_{Y_1Y_0|SX}
            (y_1,y_0|S,X)\right]\\
             & = \E_{F_{SX}}\left[\int\int \psi(y_{1},y_{0}) d C\bigl(F_{Y_1|SX}(y_1|S,X),F_{Y_0|SX}(y_0|S,X)\big|S,X\bigr)\right]\\
            & = \E_{F_{SX}^{\ast}}\left[\int\int \psi(y_{1},y_{0}) d C\bigl(F_{Y_1|SX}^{\ast}(y_1|S,X),F_{Y_0|SX}^{\ast}(y_0|S,X)\big|S,X\bigr)\right]\\
            & \in \widetilde{\Theta}_{IC},
\end{align*}
where the fourth line follows from $F_{Y_dSX} = F_{Y_dSX}^{\ast}$, and the final inclusion follows from the definition of $\widetilde{\Theta}_{IC}$. Thus, $\theta \in \widetilde{\Theta}_{IC}$. Since this holds for any $\theta \in \Theta_{IC}$, we conclude that $\Theta_{IC} \subseteq \widetilde{\Theta}_{IC}$.

We next show that $\widetilde{\Theta}_{IC} \subseteq \Theta_{IC}$. Fix any $\theta \in \widetilde{\Theta}_{IC}$. By the definition of $\widetilde{\Theta}_{IC}$, there exists a conditional copula function $C(\cdot,\cdot|s,x) \in \MC$ such that
\begin{align}
    \theta = \E_{F_{SX}^{\ast}}\left[\int\int \psi(y_{1},y_{0}) d C\bigl(F_{Y_1|SX}^{\ast}(y_1|S,X),F_{Y_0|SX}^{\ast}(y_0|S,X)\big|S,X\bigr)\right]. \label{eq:theta_copula}
\end{align}

We will show that there exists a distribution function $F \in \MF^{\ast}$ that reproduces $\theta$ as $\theta = \E_{F}\left[\psi(Y_1,Y_0)\right]$.
Specifically, we construct such a distribution function $F$ of $(Y_{1},Y_{0},Y,D,S,G,X)$ hierarchically, defined by
\begin{align}
  &F_{DGX} = F_{DGX}^{\ast};  \label{eq:F_DGX}\\
    &F_{Y_{1}Y_{0}|DGX}(y_1,y_0 |D,\Obs,X) = C\left(F_{Y_{1}|DGX}^{\ast}(y_1|D,\Obs,X),F_{Y_{0}|DGX}^{\ast}(y_0|D,\Obs,X)\middle| D,X\right);\label{eq:F_Y1Y0|DobsX}\\
    &F_{Y_{1}Y_{0}|DGX}(y_1,y_0 |D,\Exp,X) =   F_{Y_{1}Y_{0}|GX}(y_1,y_0 |\Obs,X); \label{eq:F_Y1Y0|DexpX}\\
    &F_{S|Y_1Y_0DGX}(s | Y_1,Y_0,D,\Obs,X) = \I\{D \leq s\}\label{eq:F_S|DobsX};\\
    &F_{S|Y_1Y_0DGX}(s|Y_1,Y_0,D,\Exp,X) = F_{S|Y_1Y_0GX}(s |Y_1,Y_0,\Obs,X);\label{eq:F_S|DexpX}\\
    &F_{Y|Y_1 Y_0DSGX}(y | Y_1,Y_0,D,S,G,X) = \I\{DY_1 + (1-D)Y_0 \leq y\}.\label{eq:F_Y|Y_1Y_0DSCH}
\end{align}

We will show that $F  \in \MF^{\ast}$ and that $\E_{F}[\psi(Y_1,Y_0)] = \theta$. For the former,
we will specifically show that (i) $F$ satisfies Assumptions \ref{asm:PO}--\ref{asm:OL} (with $F^{\ast}$ replaced by $F$) and (ii)  $F_{YDGX} = F_{YDGX}^{\ast}$.

We begin with condition (i). From equation~(\ref{eq:F_Y|Y_1Y_0DSCH}), we have $Y = DY_1 + (1-D)Y_0$ a.s. Hence, $F$ satisfies Assumption~\ref{asm:PO}. Moreover, equation (\ref{eq:F_S|DobsX}) implies $\BP_{F}(S=D|G=\Obs,X) =1$ a.s. Hence, $F$ satisfies Assumption~\ref{asm:SS}.

Regarding Assumption~\ref{asm:RA}, equation~(\ref{eq:F_Y1Y0|DexpX}) implies that $F_{Y_{1}Y_{0}|DGX}(\cdot,\cdot \mid d,\Exp,x)$ does not depend on $d$. Therefore, $F$ satisfies Assumption~\ref{asm:RA}. 

As for Assumption~\ref{asm:EV}, equation (\ref{eq:F_Y1Y0|DexpX}) implies that, for almost all $x$, 
\begin{align*}
    F_{Y_{1}Y_{0}|GX}(\cdot,\cdot |\Exp,x) &= \E_{F_{D}}[F_{Y_{1}Y_{0}|DGX}(\cdot,\cdot |D,\Exp,x)] \nonumber\\
    &= \E_{F_{D}}[F_{Y_{1}Y_{0}|GX}(\cdot,\cdot |\Obs,x)] = F_{Y_{1}Y_{0}|GX}(\cdot,\cdot |\Obs,x). 
\end{align*}
Thus, $F_{Y_{1}Y_{0}|GX}$ does not depend on $G$.
Moreover, equation (\ref{eq:F_S|DexpX}) leads to
\begin{align*}
    F_{S|Y_1Y_0GX}(s|Y_1,Y_0,\exp,X) &= F_{S|Y_1Y_0GX}(s|Y_1,Y_0,\Obs,X),
\end{align*}
Hence, for almost all $(y_1,y_0,s,g,x)$,
\begin{align*}
    F_{Y_{1}Y_{0}S|GX}(y_1,y_0,s|g,x) &= \int_{-\infty}^{y_1}\int_{-\infty}^{y_0} F_{S|Y_1Y_0GX}(s|u,v,g,x)d F_{Y_1Y_0|GX}(u,v|g,x) \\
    &= \int_{-\infty}^{y_1}\int_{-\infty}^{y_0}F_{S|Y_1Y_0X}(s|u,v,x)d F_{Y_1Y_0|X} (u,v|x).
\end{align*}
Therefore, $F_{Y_{1}Y_{0}S\mid GX}(y_1,y_0,s \mid g,x)$ does not depend on $g$, and $F$ satisfies Assumption~\ref{asm:EV}.

The distribution $F$ also satisfies Assumption~\ref{asm:OL}, because $F_{DG\mid X} = F_{DG\mid X}^{\ast}$ and $F^{\ast}$ satisfies Assumption~\ref{asm:OL}. To sum up, we have shown that $F$ satisfies condition (i) (i.e., $F$ satisfies Assumptions \ref{asm:PO}--\ref{asm:OL} with $F^{\ast}$ replaced by $F$).

We now turn to condition (ii), namely that $F_{YDGX} = F_{YDGX}^{\ast}$.
From equation (\ref{eq:F_Y1Y0|DobsX}),  Sklar's theorem implies that $F_{Y_{1}Y_{0}|DGX}(\cdot,\cdot|d,\Obs,x)$ has marginals equal to $F_{Y_{1}|DGX}^{\ast}(\cdot|d,\Obs,x)$ and $F_{Y_{0}|DGX}^{\ast}(\cdot|d,\Obs,x)$; that is, for almost all $d$ and $x$,
\begin{align}
    F_{Y_{d}|DGX}(\cdot|d,\Obs,x) = F_{Y_{d}|DGX}^{\ast}(\cdot|d,\Obs,x). \label{eq:observational_equivalence_Yd|obs}
\end{align}

Regarding $F_{Y_{d}|DGX}(\cdot|d,\Exp,x)$, we have for almost all $(d,x)$,
\begin{align}
F_{Y_{d}|DGX}(\cdot|d,\Exp,x)  &= F_{Y_{d}|GX}(\cdot |\Obs,x)
= F_{Y_{d}|GX}^{\ast}(\cdot |\Obs,x) \nonumber\\
&= F_{Y_{d}|GX}^{\ast}(\cdot |\Exp,x) = F_{Y_{d}|DGX}^{\ast}(\cdot |d,\Exp,x), \label{eq:observational_equivalence_Yd|exp}
\end{align}
where the first equality follows from equation~(\ref{eq:F_Y1Y0|DexpX}), the second from equation~(\ref{eq:observational_equivalence_Yd|obs}), the third from Assumption~\ref{asm:EV}, and the last from Assumption~\ref{asm:RA}.

From equation (\ref{eq:F_Y|Y_1Y_0DSCH}), we have $F_{Y|DGX}(\cdot|d,g,x) = F_{Y_d|DGX}(\cdot|d,g,x)$ for almost all $(d,g,x)$. Together with results~(\ref{eq:observational_equivalence_Yd|obs}) and~(\ref{eq:observational_equivalence_Yd|exp}), this implies $F_{Y|DGX}(\cdot|d,g,x) = F_{Y|DGX}^{\ast}(\cdot|d,g,x)$ for almost all $(d,g,x)$.  Combining this with $F_{DGX} = F_{DGX}^{\ast}$ (equation~\eqref{eq:F_DGX}) yields $F_{YDGX} = F_{YDGX}^{\ast}$. Hence, $F$ satisfies condition~(ii).

We subsequently show that $\E_{F}[\psi(Y_1,Y_0)] = \theta$. It follows that 
\begin{align*}
   \E_{F}[\psi(Y_1,Y_0)] & = \E_{F_{SX}}\left[\int\int \psi(y_{1},y_{0}) d F_{Y_1Y_0|SX}
            (y_1,y_0|S,X)\right]\\
   & = \E_{F_{SX}}\left[\int\int \psi(y_{1},y_{0}) d F_{Y_1Y_0|DGX}
            (y_1,y_0|S,\Obs,X)\right]\\
    & = \E_{F_{SX}}\left[\int\int \psi(y_{1},y_{0}) d C\left(F_{Y_{1}|DGX}^{\ast}(y_1|S,\Obs,X),F_{Y_{0}|DGX}^{\ast}(y_0|S,\Obs,X)\middle| S,X\right)\right]\\
        & = \E_{F_{SX}^{\ast}}\left[\int\int \psi(y_{1},y_{0}) d C\left(F_{Y_{1}|DGX}^{\ast}(y_1|S,\Obs,X),F_{Y_{0}|DGX}^{\ast}(y_0|S,\Obs,X)\middle| S,X \right)\right]\\
        & =  \E_{F_{SX}^{\ast}}\left[\int\int \psi(y_{1},y_{0}) d C\bigl(F_{Y_1|SX}^{\ast}(y_1|S,X),F_{Y_0|SX}^{\ast}(y_0|S,X)\big|S,X\bigr)\right]\\
        &= \theta,
\end{align*}
where the second and fifth equalities follow from Lemma~\ref{lem:appendix_lemma_1}(i), the third from equation~(\ref{eq:F_Y1Y0|DobsX}), and the last from equation~(\ref{eq:theta_copula}).
The fourth equality follows since
\begin{align*}
    F_{SX}(s,x) &= \int_{(-\infty,x]}F_{S|X}(s|u)d F_{X}(u) 
    = \int_{(-\infty,x]}F_{D|GX}(s|\Obs,u)d F_{X}(u)\\
    &= \int_{(-\infty,x]} F_{D|GX}^{\ast}(s|\Obs,u) dF_{X}^{\ast}(u) 
    = \int_{(-\infty,x]}F_{S|X}^{\ast}(s|u)dF_{X}^{\ast}(u) \\
    &= F_{SX}^{\ast}(s,x),
\end{align*}
where the second equality uses Lemma~\ref{lem:appendix_lemma_1}(ii), the third equation uses (\ref{eq:F_DGX}), and the fourth again uses Lemma~\ref{lem:appendix_lemma_1}(ii).
Thus, we have established that $\E_{F}[\psi(Y_1,Y_0)] = \theta$.

Consequently, we have shown that $F\in \MF^{\ast}$ and $\E_{F}[\psi(Y_1,Y_0)] = \theta$, implying $\theta \in \Theta_{IC}$. Since this holds for any $\theta \in \widetilde{\Theta}_{IC}$, we conclude that $\widetilde{\Theta}_{IC} \subseteq \Theta_{IC}$.
\end{proof}

\bigskip

\begin{proof}[Proof of Proposition \ref{prop:characterization_modular}]
We first prove part (i). Under the stated conditions and by the definition of $\widetilde{\Theta}_{I}$, Theorem 3.2(i) in \cite{fan2017partial} implies that $\widetilde{\Theta}_{I} = [\theta_{I}^{L}, \theta_{I}^{U}]$. Proposition \ref{prop:characterization_Theta_I} then yields $\Theta_{I} = \widetilde{\Theta}_{I} = [\theta_{I}^{L}, \theta_{I}^{U}]$.

We next prove part~(ii). Under the stated conditions and by the definition of $\widetilde{\Theta}_{IC}$, Theorem~3.2(i) in \cite{fan2017partial}, applied with $X$ replaced by $(S, X)$, implies that $\widetilde{\Theta}_{IC} = [\theta_{IC}^{L}, \theta_{IC}^{U}]$. Theorem \ref{thm:characterization_Theta_IC} then yields $\Theta_{IC} = \widetilde{\Theta}_{IC} = [\theta_{IC}^{L}, \theta_{IC}^{U}]$.
\end{proof}

\bigskip

\begin{proof}[Proof of Theorem \ref{thm:identification_power_modular}]
Recall the definitions of $F_{I}^{\ast,(-)}(y_1,y_0)$, $F_{I}^{\ast,(+)}(y_1,y_0)$, $F_{IC}^{\ast,(-)}(y_1,y_0)$, and $F_{IC}^{\ast,(+)}(y_1,y_0)$ in Section~\ref{subsec:bound_analysis_modular}. For any $(y_1,y_0)$, by Jensen's inequality, we obtain
\begin{align}
    F_{I}^{\ast,(-)}(y_1,y_0) &= \E\left[\max\left\{F_{Y_1|X}^{\ast}(y_1|X) + F_{Y_0|X}^{\ast}(y_0|X) - 1, 0\right\} \right] \nonumber\\
    &= \E\left[\max\left\{\E\left[F_{Y_1|SX}^{\ast}(y_1|S,X) + F_{Y_0|SX}^{\ast}(y_0|S,X) - 1 \middle| X\right], 0\right\} \right] \nonumber\\
    &\leq  \E\left[\max\left\{F_{Y_1|SX}^{\ast}(y_1|S,X) + F_{Y_0|SX}^{\ast}(y_0|S,X) - 1, 0\right\} \right] \nonumber \\
    &= F_{IC}^{\ast,(-)}(y_1,y_0), \label{eq:F_{(-)}_inequality}
\end{align}
and
\begin{align}
    F_{I}^{\ast,(+)}(y_1,y_0) &= \E\left[F_{Y_0|X}^{\ast}(y_0|X)+\min\left\{F_{Y_1|X}^{\ast}(y_1|X) - F_{Y_0|X}^{\ast}(y_0|X) , 0\right\} \right] \nonumber\\
    &= \E\left[\E\left[F_{Y_0|SX}^{\ast}(y_0|S,X)\middle|X\right]+\min\left\{\E\left[F_{Y_1|SX}^{\ast}(y_1|S,X) - F_{Y_0|SX}^{\ast}(y_0|S,X)\middle| X\right] , 0\right\} \right] \nonumber\\
    &\geq  \E\left[F_{Y_0|SX}^{\ast}(y_0|S,X)+\min\left\{F_{Y_1|SX}^{\ast}(y_1|S,X) - F_{Y_0|SX}^{\ast}(y_0|S,X) , 0\right\} \right] \nonumber\\
    &= F_{IC}^{\ast,(+)}(y_1,y_0). \label{eq:F_{(+)}_inequality}
\end{align}

For any $s \in \{0,1\}$ and $x \in \MX$, let
\begin{align*}
        \theta_{I}^{L}(x) &\equiv \int_{0}^{1}\psi(F_{Y_1|X}^{\ast,-1}(u|x),F_{Y_0|X}^{\ast,-1}(1-u|x))du\\
        &= \int\int \psi(y_1,y_0) dM(F_{Y_1|X}^{\ast}(y_1|x),F_{Y_0|X}^{\ast}(y_0|x)),\\
        \theta_{I}^{U}(x) &\equiv \int_{0}^{1}\psi(F_{Y_1|X}^{\ast,-1}(u|x),F_{Y_0|X}^{\ast,-1}(u|x))du\\
        &= \int\int \psi(y_1,y_0) dW(F_{Y_1|X}^{\ast}(y_1|x),F_{Y_0|X}^{\ast}(y_0|x)),\\
        \theta_{IC}^{L}(s,x) &\equiv \int_{0}^{1}\psi(F_{Y_1|SX}^{\ast,-1}(u|s,x),F_{Y_0|SX}^{\ast,-1}(1-u|s,x))du\\
        &= \int\int \psi(y_1,y_0) dM(F_{Y_1|SX}^{\ast}(y_1|s,x),F_{Y_0|SX}^{\ast}(y_0|s,x)),
\end{align*}
and
\begin{align*}
     \theta_{IC}^{U}(s,x) &\equiv \int_{0}^{1}\psi(F_{Y_1|SX}^{\ast,-1}(u|s,x),F_{Y_0|SX}^{\ast,-1}(u|s,x))du\\
        &= \int\int \psi(y_1,y_0) dW(F_{Y_1|SX}^{\ast}(y_1|s,x),F_{Y_0|SX}^{\ast}(y_0|s,x)).
\end{align*}
Then $\theta_{I}^{L} = \E[\theta_{I}^{L}(X)]$, $\theta_{I}^{U} = \E[\theta_{I}^{U}(X)]$, $\theta_{IC}^{L} = \E[\theta_{IC}^{L}(S,X)]$, and $\theta_{IC}^{U} = \E[\theta_{IC}^{U}(S,X)]$.

Under condition (a) of Proposition~\ref{prop:characterization_modular}(i), it follows from equation~(5) in \cite{cambanis_1976} that
\begin{align}
    2 \theta_{I}^{U}(X) &= \E[\psi(Y_1,Y_1)|X] + \E[\psi(Y_0,Y_0)|X] 
    - \int\int A_{W}^{\ast}(X) d\psi_{c}(y_1,y_0), \label{eq:theta^U_X}\\
    2 \theta_{IC}^{U}(S,X) &= \E[\psi(Y_1,Y_1)|S,X] + \E[\psi(Y_0,Y_0)|S,X] 
    - \int\int A_{W}^{\ast}(S,X) d\psi_{c}(y_1,y_0), \label{eq:theta^U_SX}
\end{align}
where 
\begin{align*}
   A_{W}^{\ast}(X) &\equiv F_{Y_1|X}^{\ast}(y_1 \vee y_0|X) + F_{Y_0|X}^{\ast}(y_1 \wedge y_0|X)\\
   &- W(F_{Y_1|X}^{\ast}(y_1 \vee y_0|X), F_{Y_0|X}^{\ast}(y_1 \wedge y_0|X)) \\
   &- W(F_{Y_1|X}^{\ast}(y_1 \wedge y_0|X), F_{Y_0|X}^{\ast}(y_1 \vee y_0|X)), \\
   A_{W}^{\ast}(S,X) &\equiv F_{Y_1|SX}^{\ast}(y_1 \vee y_0|S,X) + F_{Y_0|SX}^{\ast}(y_1 \wedge y_0|S,X)\\
   &- W(F_{Y_1|SX}^{\ast}(y_1 \vee y_0|S,X), F_{Y_0|SX}^{\ast}(y_1 \wedge y_0|S,X)) \\
   &- W(F_{Y_1|SX}^{\ast}(y_1 \wedge y_0|S,X), F_{Y_0|SX}^{\ast}(y_1 \vee y_0|S,X)).
\end{align*}
Taking expectations of (\ref{eq:theta^U_X}) and (\ref{eq:theta^U_SX}) yields 
\begin{align*}
     2 \theta_{I}^{U} &= \E[\psi(Y_1,Y_1)] + \E[\psi(Y_0,Y_0)] 
    - \int\int \E[A_{W}^{\ast}(X)] d\psi_{c}(y_1,y_0),\\
         2 \theta_{IC}^{U} &= \E[\psi(Y_1,Y_1)] + \E[\psi(Y_0,Y_0)] 
    - \int\int \E[A_{W}^{\ast}(S,X)] d\psi_{c}(y_1,y_0).
\end{align*}
Note that $\E[A_{W}^{\ast}(X)]$ and $\E[A_{W}^{\ast}(S,X)]$ can be expressed as
\begin{align*}
    \E[A_{W}^{\ast}(X)] &= F_{Y_1}^{\ast}(y_1 \wedge y_0) + F_{Y_0}^{\ast}(y_1 \wedge y_0)\\
    &- F_{I}^{\ast,(+)}(y_1 \vee y_0, y_1 \wedge y_0) - F_{I}^{\ast,(+)}(y_1 \wedge y_0, y_1 \vee y_0),\\
     \E[A_{W}^{\ast}(S,X)] &= F_{Y_1}^{\ast}(y_1 \wedge y_0) + F_{Y_0}^{\ast}(y_1 \wedge y_0)\\
    &- F_{IC}^{\ast,(+)}(y_1 \vee y_0, y_1 \wedge y_0) - F_{IC}^{\ast,(+)}(y_1 \wedge y_0, y_1 \vee y_0).
\end{align*}

Hence, 
\begin{align}
    \theta_{I}^{U} - \theta_{IC}^{U} &= \frac{1}{2}\int\int (\E[A_{W}^{\ast}(S,X)] - \E[A_{W}^{\ast}(X)]) d\psi_{c}(y_1,y_0) \nonumber\\
    &=\frac{1}{2}\int\int (F_{I}^{\ast,(+)}(y_1 \vee y_0, y_1 \wedge y_0) - F_{IC}^{\ast,(+)}(y_1 \vee y_0, y_1 \wedge y_0)) d\psi_{c}(y_1,y_0) \nonumber\\
    &+ \frac{1}{2}\int\int (F_{I}^{\ast,(+)}(y_1 \wedge y_0, y_1 \vee y_0) - F_{IC}^{\ast,(+)}(y_1 \wedge y_0, y_1 \vee y_0)) d\psi_{c}(y_1,y_0), \label{eq:theta^U-theta_U}
\end{align}
where $\theta_{I}^{U} - \theta_{IC}^{U} \geq 0$ holds by (\ref{eq:F_{(+)}_inequality}). 

Hence, if $\psi(\cdot,\cdot)$ is strict supermodular (implying that any rectangle in $(y_1,y_0)$-plane has a positive $\psi_c$ measure), it follows from (\ref{eq:F_{(+)}_inequality}) and (\ref{eq:theta^U-theta_U}) that $\theta_{I}^{U} = \theta_{IC}^{U}$ if and only if $F_{I}^{\ast,(+)}(y_1,y_0) = F_{IC}^{\ast,(+)}(y_1,y_0)$ for $\psi_{c}$-almost all $(y_1,y_0)$.
Moreover, from equation~(\ref{eq:F_{(+)}_inequality}), it follows that for $\psi_c$-almost every $(y_1,y_0)$, the condition $F_{I}^{\ast,(+)}(y_1,y_0) = F_{IC}^{\ast,(+)}(y_1,y_0)$ holds if and only if  $$\BP\left(F_{Y_1|SX}^{\ast}(y_1|S,X) + F_{Y_0|SX}^{\ast}(y_0|S,X) - 1 >0 \middle| X \right) \in \{0,1\} \text{  a.s.}$$

Similarly, we can show that 
\begin{align*}
     2 \theta_{I}^{L} &= \E[\psi(Y_1,Y_1)] + \E[\psi(Y_0,Y_0)] 
    - \int\int \E[A_{M}^{\ast}(X)] d\psi_{c}(y_1,y_0),\\
         2 \theta_{IC}^{L} &= \E[\psi(Y_1,Y_1)] + \E[\psi(Y_0,Y_0)] 
    - \int\int \E[A_{M}^{\ast}(S,X)] d\psi_{c}(y_1,y_0),
\end{align*}
where 
\begin{align*}
    \E[A_{M}^{\ast}(X)] &= F_{Y_1}^{\ast}(y_1 \wedge y_0) + F_{Y_0}^{\ast}(y_1 \wedge y_0)\\
    &- F_{I}^{\ast,(-)}(y_1 \vee y_0, y_1 \wedge y_0) - F_{I}^{\ast,(-)}(y_1 \wedge y_0, y_1 \vee y_0),\\
     \E[A_{M}^{\ast}(S,X)] &= F_{Y_1}^{\ast}(y_1 \wedge y_0) + F_{Y_0}^{\ast}(y_1 \wedge y_0)\\
    &- F_{IC}^{\ast,(-)}(y_1 \vee y_0, y_1 \wedge y_0) - F_{IC}^{\ast,(-)}(y_1 \wedge y_0, y_1 \vee y_0).
\end{align*}
These results lead to 
\begin{align}
    \theta_{IC}^{L} - \theta_{I}^{L} &= \frac{1}{2}\int\int (\E[A_{M}^{\ast}(X)] - \E[A_{M}^{\ast}(S,X)]) d\psi_{c}(y_1,y_0) \nonumber\\
    &=\frac{1}{2}\int\int (F_{IC}^{\ast,(-)}(y_1 \vee y_0, y_1 \wedge y_0) - F_{I}^{\ast,(-)}(y_1 \vee y_0, y_1 \wedge y_0)) d\psi_{c}(y_1,y_0) \nonumber\\
    &+ \frac{1}{2}\int\int (F_{IC}^{\ast,(-)}(y_1 \wedge y_0, y_1 \vee y_0) - F_{I}^{\ast,(-)}(y_1 \wedge y_0, y_1 \vee y_0)) d\psi_{c}(y_1,y_0), \label{eq:theta_L-theta^L}
\end{align}
where $\theta_{IC}^{L} - \theta_{I}^{L} \geq 0$ holds by (\ref{eq:F_{(-)}_inequality}). 

Hence, if $\psi(\cdot,\cdot)$ is strict supermodular (implying that any rectangle in $(y_1,y_0)$-plane has a positive $\psi_c$ measure), it follows from (\ref{eq:F_{(-)}_inequality}) and (\ref{eq:theta_L-theta^L}) that $\theta_{I}^{L} = \theta_{IC}^{L}$ if and only if $F_{I}^{\ast,(-)}(y_1,y_0) = F_{IC}^{\ast,(-)}(y_1,y_0)$ for $\psi_{c}$-almost all $(y_1,y_0)$. 
Furthermore, from equation~(\ref{eq:F_{(-)}_inequality}), it follows that for $\psi_c$-almost every $(y_1,y_0)$, the condition $F_{I}^{\ast,(-)}(y_1,y_0) = F_{IC}^{\ast,(-)}(y_1,y_0)$ holds if and only if $$\BP\left(F_{Y_1|SX}^{\ast}(y_1|S,X) - F_{Y_0|SX}^{\ast}(y_0|S,X) < 0 \middle| X\right) \in \{0,1\} \text{  a.s.}$$
We have thus established the result under condition (a).

Under condition (b) of Proposition~\ref{prop:characterization_modular}(i), it follows from equation (9) in \cite{cambanis_1976} that
\begin{align}
    \theta_{I}^{L}(X) &= \E[\psi(Y_1,\bar{y}_{0})|X] + \E[\psi(\bar{y}_{1},Y_{0})|X] - \psi(\bar{y}_{1},\bar{y}_{0}) \nonumber \\
    &+ \int\int B_{M}^{\ast}(X)d\psi_{c}(y_1,y_0), \label{eq:theta^L~X} \\
    \theta_{IC}^{L}(S,X) &= \E[\psi(Y_1,\bar{y}_{0})|S,X] + \E[\psi(\bar{y}_{1},Y_{0})|S,X] - \psi(\bar{y}_{1},\bar{y}_{0}) \nonumber\\
    &+ \int\int B_{M}^{\ast}(S,X)d\psi_{c}(y_1,y_0), \label{eq:theta^L~SX}
\end{align}
where for all $(y_1,y_0)$,
\begin{align*}
    B_{M}^{\ast}(X) &\equiv M(F_{Y_1|X}^{\ast}(y_1|X),F_{Y_0|X}^{\ast}(y_0|X)) - \I\{\bar{y}_{1} < y_1\}F_{Y_0|X}^{\ast}(y_0|X)\\
    & - \I\{\bar{y}_{0} < y_0\}F_{Y_1|X}^{\ast}(y_1|X) + \I\{\bar{y}_{1} < y_1\}\I\{\bar{y}_{0} < y_0\},\\
    B_{M}^{\ast}(S,X) &\equiv M(F_{Y_1|SX}^{\ast}(y_1|S,X),F_{Y_0|SX}^{\ast}(y_0|S,X)) - \I\{\bar{y}_{1} < y_1\}F_{Y_0|SX}^{\ast}(y_0|S,X)\\
    & - \I\{\bar{y}_{0} < y_0\}F_{Y_1|SX}^{\ast}(y_1|S,X) + \I\{\bar{y}_{1} < y_1\}\I\{\bar{y}_{0} < y_0\}.
\end{align*}
Taking expectations of (\ref{eq:theta^L~X}) and (\ref{eq:theta^L~SX}) yields 
\begin{align*}
    \theta_{I}^{L} &= \E[\psi(Y_1,\bar{y}_{0})] + \E[\psi(\bar{y}_{1},Y_{0})] - \psi(\bar{y}_{1},\bar{y}_{0}) 
    + \int\int \E[B_{M}^{\ast}(X)]d\psi_{c}(y_1,y_0),\\
    \theta_{IC}^{L} &= \E[\psi(Y_1,\bar{y}_{0})] + \E[\psi(\bar{y}_{1},Y_{0})] - \psi(\bar{y}_{1},\bar{y}_{0}) 
    + \int\int \E[B_{M}^{\ast}(S,X)]d\psi_{c}(y_1,y_0),
\end{align*}
where 
\begin{align*}
    \E[B_{M}^{\ast}(X)] &= F_{I}^{\ast,(-)}(y_1,y_0) - \I\{\bar{y}_{1} < y_1\}F_{Y_0}^{\ast}(y_0)\\
    & - \I\{\bar{y}_{0} < y_0\}F_{Y_1}^{\ast}(y_1) + \I\{\bar{y}_{1} < y_1\}\I\{\bar{y}_{0} < y_0\} \mbox{\ \ and}\\
    \E[B_{M}^{\ast}(S,X)] &= F_{IC}^{\ast,(-)}(y_1,y_0) - \I\{\bar{y}_{1} < y_1\}F_{Y_0}^{\ast}(y_0)\\
    & - \I\{\bar{y}_{0} < y_0\}F_{Y_1}^{\ast}(y_1) + \I\{\bar{y}_{1} < y_1\}\I\{\bar{y}_{0} < y_0\},
\end{align*}
for all $(y_1,y_0)$. 

Then it follows that 
\begin{align}
    \theta_{IC}^{L} - \theta_{I}^{L} &= \int\int (\E[B_{M}^{\ast}(S,X)] - \E[B_{M}^{\ast}(X)]) d\psi_{c}(y_1,y_0)\nonumber\\
     &=\int\int (F_{IC}^{\ast,(-)}(y_1 , y_0) - F_{I}^{\ast,(-)}(y_1,y_0)) d\psi_{c}(y_1,y_0),  \label{eq:theta_L-theta^L_b}
\end{align}
where $\theta_{IC}^{L} - \theta_{I}^{L} \geq 0$ holds from (\ref{eq:F_{(-)}_inequality}). 

Hence, if $\psi(\cdot,\cdot)$ is strict supermodular (implying that any rectangle in $(y_1,y_0)$-plane has a positive $\psi_c$ measure), it follows from (\ref{eq:F_{(-)}_inequality}) and (\ref{eq:theta_L-theta^L_b}) that $\theta_{I}^{L} = \theta_{IC}^{L}$ if and only if $F_{I}^{\ast,(-)}(y_1,y_0) = F_{IC}^{\ast,(-)}(y_1,y_0)$ for $\psi_{c}$-almost all $(y_1,y_0)$. 
Furthermore, for $\psi_c$-almost every $(y_1,y_0)$, the condition $F_{I}^{\ast,(-)}(y_1,y_0) = F_{IC}^{\ast,(-)}(y_1,y_0)$ holds if and only if  $$\BP\left(F_{Y_1|SX}^{\ast}(y_1|S,X) - F_{Y_0|SX}^{\ast}(y_0|S,X) < 0 \middle| X \right) \in \{0,1\} \text{  a.s.}$$

For the upper bounds under condition (b), by a similar argument, we can show that $\theta_{I}^{U} = \theta_{IC}^{U}$ if and only if $\BP\left(F_{Y_1|SX}^{\ast}(y_1|S,X) + F_{Y_0|SX}^{\ast}(y_0|S,X) - 1 > 0 \middle| X \right) \in \{0,1\}$ a.s.
Thus, the result under condition (b) has now been established.
\end{proof}
\bigskip
%%%%%%%%%%%%%%%%%%%%%%%%%%%%%%%%%%%%%%%%

\begin{proof}[Proof of Proposition \ref{prop:characterization_identified_sets_phi-indicator}]
Fix $\delta$. We first prove part (i). Under the stated conditions and by the definition of $\widetilde{\Theta}_{I}$, Theorem 3.5(i) in \cite{fan2017partial} implies $\widetilde{\Theta}_{I} = [F_{I,\varphi}^{L}(\delta), F_{I,\varphi}^{U}(\delta)]$. Proposition \ref{prop:characterization_Theta_I} then yields $\Theta_{I} = \widetilde{\Theta}_{I} = [F_{I,\varphi}^{L}(\delta), F_{I,\varphi}^{U}(\delta)]$.

For part (ii), under the stated conditions and by the definition of $\widetilde{\Theta}_{IC}$, applying Theorem 3.5(i) in \cite{fan2017partial} with $X$ replaced by $(S, X)$ gives $\widetilde{\Theta}_{IC} = [F_{IC,\varphi}^{L}(\delta), F_{IC,\varphi}^{U}(\delta)]$. Theorem~\ref{thm:characterization_Theta_IC} then yields $\Theta_{IC} = \widetilde{\Theta}_{IC} = [F_{IC,\varphi}^{L}(\delta), F_{IC,\varphi}^{U}(\delta)]$.
\end{proof}
\bigskip
%%%%%%%%%%%%%%%%%%%%%%%%%%%%%%%%%%%%%%%%

\begin{proof}[Proof of Theorem \ref{thm:identification_power_phi-indicator}]
We provide a proof for the lower bounds. The proof for the upper bounds is analogous and therefore omitted. By the definitions of $F_{I,\varphi}^{L}(\delta)$ and $F_{IC,\varphi}^{L}(\delta)$ and by Jensen's inequality, we have

\begin{align}
    F_{I,\varphi}^{L}(\delta) &= \E\left[\sup_{y \in \MY_{1}}\max\left\{F_{Y_1|X}^{\ast}(y|X) + F_{Y_0|X}^{\ast}(\tilde{\varphi}_{y}(\delta)|X) -1, 0\right\}\right] \nonumber\\
    &=  \E\left[\sup_{y \in \MY_{1}}\max\left\{\E\bigl[F_{Y_1|SX}^{\ast}(y|S,X) + F_{Y_0|SX}^{\ast}(\tilde{\varphi}_{y}(\delta)|S,X) -1|X\bigr], 0\right\}\right] \nonumber\\
    &\leq \E\left[\sup_{y \in \MY_{1}}\max\left\{F_{Y_1|SX}^{\ast}(y|S,X) + F_{Y_0|SX}^{\ast}(\tilde{\varphi}_{y}(\delta)|S,X) -1, 0\right\}\right] \nonumber\\
    &= F_{IC,\varphi}^{L}(\delta). \label{eq:inequality_above}
\end{align}

Note that $Y_d$ ($d=0,1$) are assumed to be continuous random variables. Hence $F_{Y_d|SX}^{\ast}(y|S,X)$ and $F_{Y_d|X}^{\ast}(y|X)$ are continuous, and thus $\sup_{y \in \MY_{1}} F_{Y_1|SX}^{\ast}(y|S,X) = 1$ and $\sup_{y \in \MY_{1}} F_{Y_1|X}^{\ast}(y|X) = 1$. This implies that
\begin{align*}
    \sup_{y \in \MY_{1}}\{F_{Y_1|SX}^{\ast}(y|S,X) + F_{Y_0|SX}^{\ast}(\tilde{\varphi}_{y}(\delta)|S,X) -1\} \geq 0 \mbox{ a.s.},
\end{align*}
and 
\begin{align*}
    \sup_{y \in \MY_{1}}\{F_{Y_1|X}^{\ast}(y|X) + F_{Y_0|X}^{\ast}(\tilde{\varphi}_{y}(\delta)|X) -1\} \geq 0 \mbox{ a.s.}
\end{align*}
Therefore, equation~(\ref{eq:inequality_above}) simplifies to
\begin{align}
    F_{I,\varphi}^{L}(\delta) &= \E\left[\sup_{y \in \MY_{1}}\left\{F_{Y_1|X}^{\ast}(y|X) + F_{Y_0|X}^{\ast}(\tilde{\varphi}_{y}(\delta)|X) -1\right\}\right] \nonumber\\
    &=  \E\left[\sup_{y \in \MY_{1}}\left\{\E\bigl[F_{Y_1|SX}^{\ast}(y|S,X) + F_{Y_0|SX}^{\ast}(\tilde{\varphi}_{y}(\delta)|S,X) -1|X\bigr]\right\}\right]\nonumber\\
    &\leq \E\left[\sup_{y \in \MY_{1}}\left\{F_{Y_1|SX}^{\ast}(y|S,X) + F_{Y_0|SX}^{\ast}(\tilde{\varphi}_{y}(\delta)|S,X) -1\right\}\right]\nonumber\\
    &= F_{IC,\varphi}^{L}(\delta). \label{eq:A13}
\end{align}

Let $G_{\varphi}(y,s,x) = F_{Y_1|SX}^{\ast}(y|s,x) + F_{Y_0|SX}^{\ast}(\tilde{\varphi}_{y}(\delta)|s,x) -1$. Then $\sup_{y\in \MY_{1}}\E[G_{\varphi}(y,S,X)|X=x] = \E[G_{\varphi}(\bar{y}(x),S,X)|X=x]$ and it follows from (\ref{eq:A13}) that $F_{I,\varphi}^{L}(\delta) = F_{IC,\varphi}^{L}(\delta)$ if and only if $\E[\sup_{y\in \MY_{1}}G_{\varphi}(y,S,X)|X=x] = \E[G_{\varphi}(\bar{y}(x),S,X)|X=x]$ for almost all $x \in \MX$. 

Since $\sup_{y\in \MY_{1}}G_{\varphi}(y,s,x) \geq G_{\varphi}(\bar{y}(x),s,x)$ for all $(s,x) \in \{0,1\} \times\MX$, it follows that $$\E[\sup_{y\in \MY_{1}}G_{\varphi}(y,S,X)|X=x] = \E[G_{\varphi}(\bar{y}(x),S,X)|X=x]$$ for almost all $x \in \MX$ if and only if $\sup_{y\in \MY_{1}}G_{\varphi}(y,s,x) = G_{\varphi}(\bar{y}(x),s,x)$ for almost all $(s,x) \in \{0,1\} \times\MX$; that is, $G_{\varphi}(y,0,x)$ and $G_{\varphi}(y,1,x)$ attain their maxima at the common point $\bar{y}(x)$ for almost all $x \in \MX$.
\end{proof}
\bigskip
%%%%%%%%%%%%%%%%%%%%%%%%%%%%%%%%%%%%%%%%

\begin{proof}[Proof of Proposition \ref{prop:computational_approach}]
Let $\MF$ denote the class of all distribution functions of the variables $(Y_1, Y_0, Y, D, S, G, X)$. For any generic subclass $\widetilde{\MF} \subseteq \MF$, let $\widetilde{\MF}_{Y_1Y_0SX} = \{F_{Y_1Y_0SX}: F \in \widetilde{\MF}\}$ be the corresponding set of marginal distributions of $(Y_1,Y_0,S,X)$.
Next, let $\widetilde{\MF}_{Y_1Y_0SX}^{\dagger}$ denote the set of distribution functions of $(Y_1,Y_0,S,X)$ that satisfy constraints (\ref{eq:LP_constraint_1})--(\ref{eq:LP_constraint_3}):
    \begin{align*}
        \widetilde{\MF}_{Y_1Y_0SX}^{\dagger} \equiv \left\{F_{Y_1Y_0SX}\in \MF_{Y_1Y_0SX}: \mbox{$F_{Y_1Y_0SX}$ satisfies conditions (\ref{eq:LP_constraint_1})--(\ref{eq:LP_constraint_3})}\right\}.
    \end{align*}
    Define
    \begin{align*}
        \widetilde{\Theta}_{IC}^{\dagger}\equiv \bigl\{\E_{F_{Y_1Y_0SX}}[\psi(Y_1,Y_0)]: F_{Y_1Y_0SX}\in \widetilde{\MF}_{Y_1Y_0SX}^{\dagger}\bigr\}.
    \end{align*}
    We also introduce the functional $\Gamma:\MF_{Y_1Y_0SX} \rightarrow\Real$ defined by $\Gamma(F_{Y_1Y_0SX}) \equiv \E_{F_{Y_1Y_0SX}}[\psi(Y_1,Y_0)]$. It then follows that $\widetilde{\Theta}_{IC}^{\dagger} = \bigl\{\Gamma(F_{Y_1Y_0SX}): F_{Y_1Y_0SX}\in \widetilde{\MF}_{Y_1Y_0SX}^{\dagger}\bigr\}.$

    Since all restrictions (\ref{eq:LP_constraint_1})--(\ref{eq:LP_constraint_3}) are linear and $\MF_{Y_1Y_0SX}$ is convex, $\widetilde{\MF}_{Y_1Y_0SX}^{\dagger}$ is convex as well. Moreover, because $\Gamma:\MF_{Y_1Y_0SX} \rightarrow\Real$ is linear, the convexity of $\widetilde{\MF}_{Y_1Y_0SX}^{\dagger}$ implies that $\widetilde{\Theta}_{IC}^{\dagger}$ is convex. Hence its closure is 
    \begin{align*}
        \left[\inf_{F_{Y_1Y_0SX} \in \widetilde{\MF}_{Y_1Y_0SX}^{\dagger}}\Gamma(F_{Y_1Y_0SX}),\sup_{F_{Y_1Y_0SX} \in \widetilde{\MF}_{Y_1Y_0SX}^{\dagger}}\Gamma(F_{Y_1Y_0SX})\right] = [\theta_{L}^{\ast},\theta_{U}^{\ast}].
    \end{align*}

    We next show that $\widetilde{\Theta}_{IC}^{\dagger} = \Theta_{IC}^{\dagger}$. For conditional joint distributions $F_{Y_1Y_0|SX}$, we consider the following conditions for all $(s,x)$:
    \begin{align}
        &F_{Y_d|Y_{d^\prime}\leq y, S=s, X=x}(t) \geq F_{Y_d|Y_{d^\prime} \leq y^{\prime}, S=s, X=x}(t) \nonumber\\
        &\mbox{ for all $t \in \Real$ and for almost all } (y,y^\prime,d,d^\prime) \mbox{\ with\ }y^{\prime} \geq y \mbox{ and } d\neq d^{\prime} \label{eq:MSI}
    \end{align}
    and 
    \begin{align}
        \BP_{F_{Y_1Y_0|SX}}(Y_1 - Y_0 > c \mid S = 1, X = x) \geq \BP_{F_{Y_1Y_0|SX}}(Y_1 - Y_0 > c \mid S = 0, X = x) \mbox{ for all }c\in \Real\label{eq:GRM}
    \end{align}
    Conditions~\eqref{eq:MSI} and~\eqref{eq:GRM} are equivalent to Assumptions~\ref{asm:MSI} and~\ref{asm:GRM}, respectively, conditional on $(S,X)=(s,x)$.
    
    For each $x \in \MX$, we define the following class of pairs of conditional copula functions:
\begin{align*}
    \widetilde{\MC}_{x} \equiv \bigg\{(C(\cdot,\cdot|0,x),C(\cdot,\cdot|1,x))\in \MC^2:\;
    F_{Y_1Y_0|SX}(\cdot,\cdot|s,x) := C(F_{Y_1|SX}^{\ast}(\cdot|s,x),F_{Y_0|SX}^{\ast}(\cdot|s,x)) \\
    \mbox{ satisfies conditions (\ref{eq:MSI}) and (\ref{eq:GRM})} \bigg\}.
\end{align*}
That is, $\widetilde{\MC}_{x}$ consists of all pairs of conditional copula functions that generate conditional joint distributions $F_{Y_1Y_0|SX}$ satisfying conditions (\ref{eq:MSI}) and (\ref{eq:GRM}). Note that $\widetilde{\MC}_{x}$ is nonempty, since there exists a pair $(C^{\ast}(\cdot,\cdot|0,x), C^{\ast}(\cdot,\cdot|1,x)) \in \widetilde{\MC}_{x}$ such that 
\begin{align*}
    F_{Y_1Y_0|SX}^{\ast}(\cdot,\cdot|s,x) = C^{\ast}(F_{Y_1|SX}^{\ast}(\cdot|s,x),F_{Y_0|SX}^{\ast}(\cdot|s,x)),
\end{align*}
for all $(s,x)$.

Then $\widetilde{\Theta}_{IC}^{\dagger}$ can be expressed as
\begin{align}
\widetilde{\Theta}_{IC}^{\dagger} &= \left\{\theta  : \theta = \E_{F_{SX}^{\ast}}\left[\int\int \psi(y_{1},y_{0}) d C\bigl(F_{Y_1|SX}^{\ast}(y_1|S,X),F_{Y_0|SX}^{\ast}(y_0|S,X)|S,X\bigr)\right]\right. \nonumber\\
            &\left. \phantom{\theta = \sum_{d\in \{0,1\}}\left[\BP(S=d)\int\int \psi(y_{1}a\right.} \mbox{for some }(C(\cdot,\cdot|0,X), C(\cdot,\cdot|1,X)) \in \widetilde{\MC}_{X} \mbox{\ \ a.s.}\right\}. \label{eq:widetilde{Theta}_{IC}^{dagger}}
\end{align}
    
We first show that $\Theta_{IC}^{\dagger} \subseteq \widetilde{\Theta}_{IC}^{\dagger}$. Fix any $\theta \in \Theta_{IC}^{\dagger}$. By the definition of $\Theta_{IC}^{\dagger}$, there exists a distribution function $F \in \widetilde{\MF}^{\ast}$ such that $\theta = \E_{F}[\psi(Y_1,Y_0)]$. Since $F_{YDGX} = F_{YDGX}^{\ast}$ and $F$ satisfies Assumptions \ref{asm:PO}--\ref{asm:OL}, Lemma \ref{lem:identification_jointCDF} implies that $F_{Y_dSX}= F_{Y_dSX}^{\ast}$ for $d=0,1$. 

 By Sklar's theorem and the definition of $\widetilde{\MC}_{x}$, the equalities $F_{Y_d\mid SX} = F_{Y_d\mid SX}^{\ast}$ for $d=0,1$ guarantee the existence of a pair of conditional copula functions $(C(\cdot,\cdot\mid 0,x), C(\cdot,\cdot\mid 1,x)) \in \widetilde{\MC}_{x}$ such that
\begin{align*}
    F_{Y_1Y_0|SX}(\cdot,\cdot|s,x) = C(F_{Y_1|SX}(\cdot|s,x),F_{Y_0|SX}(\cdot|s,x)|s,x),
\end{align*}
for almost all $(s,x)$. 

It then follows that 
\begin{align*}
            \theta &= \E_{F_{Y_1Y_0}}\left[\psi(Y_1,Y_0) \right]\\
            & = \E_{F_{SX}}\left[\int\int \psi(y_{1},y_{0}) d F_{Y_1Y_0|SX}
            (y_1,y_0|S,X)\right]\\
             & = \E_{F_{SX}}\left[\int\int \psi(y_{1},y_{0}) d C\bigl(F_{Y_1|SX}(y_1|S,X),F_{Y_0|SX}(y_0|S,X)\big|S,X\bigr)\right]\\
            & = \E_{F_{SX}^{\ast}}\left[\int\int \psi(y_{1},y_{0}) d C\bigl(F_{Y_1|SX}^{\ast}(y_1|S,X),F_{Y_0|SX}^{\ast}(y_0|S,X)\big|S,X\bigr)\right]\\
            & \in \widetilde{\Theta}_{IC}^{\dagger},
\end{align*}
where the fourth line uses $F_{Y_dSX} = F_{Y_dSX}^{\ast}$, and the last line follows from (\ref{eq:widetilde{Theta}_{IC}^{dagger}}). Thus, we have $\theta \in \widetilde{\Theta}_{IC}^{\dagger}$. Since this argument holds for any $\theta \in \Theta_{IC}^{\dagger}$, it follows that $\Theta_{IC}^{\dagger} \subseteq \widetilde{\Theta}_{IC}^{\dagger}$.

We next show that $\widetilde{\Theta}_{IC}^{\dagger} \subseteq \Theta_{IC}^{\dagger}$. Fix any $\theta \in \widetilde{\Theta}_{IC}^{\dagger}$. By the definition of $\widetilde{\Theta}_{IC}^{\dagger}$, there exists a pair of conditional copula functions $(C(\cdot,\cdot|0,x), C(\cdot,\cdot|1,x) )\in \widetilde{\MC}_{x}$ such that
\begin{align}
    \theta = \E_{F_{SX}^{\ast}}\left[\int\int \psi(y_{1},y_{0}) d C\bigl(F_{Y_1|SX}^{\ast}(y_1|S,X),F_{Y_0|SX}^{\ast}(y_0|S,X)\big|S,X\bigr)\right]. 
    \label{eq:theta_copula_2}
\end{align}

We now show that there exists a distribution function $F \in \widetilde{\MF}^{\ast}$ that reproduces $\theta$ as $\theta = \E_{F}\left[\psi(Y_1,Y_0)\right]$.
Specifically, we construct such a distribution function $F$ of $(Y_{1},Y_{0},Y,D,S,G,X)$ hierarchically, defined by
\begin{align}
  &F_{DGX} = F_{DGX}^{\ast};  \label{eq:F_DGX_2}\\
    &F_{Y_{1}Y_{0}|DGX}(y_1,y_0 |D,\Obs,X) = C\left(F_{Y_{1}|DGX}^{\ast}(y_1|D,\Obs,X),F_{Y_{0}|DGX}^{\ast}(y_0|D,\Obs,X)\middle|D,X\right);\label{eq:F_Y1Y0|DobsX_2}\\
    &F_{Y_{1}Y_{0}|DGX}(y_1,y_0 |D,\Exp,X) =   F_{Y_{1}Y_{0}|GX}(y_1,y_0 |\Obs,X); \label{eq:F_Y1Y0|DexpX_2}\\
   &F_{S|Y_1Y_0DGX}(s | Y_1,Y_0,D,\Obs,X) = \I\{D \leq s\}\label{eq:F_S|DobsX_2};\\
    &F_{S|Y_1Y_0DGX}(s|Y_1,Y_0,D,\Exp,X) = F_{S|Y_1Y_0GX}(s|Y_1,Y_0,\Obs,X);\label{eq:F_S|DexpX_2}\\
    &F_{Y|Y_1 Y_0DSGX}(y | Y_1,Y_0,D,S,G,X) = \I\{DY_1 + (1-D)Y_0 \leq y\}.\label{eq:F_Y|Y_1Y_0DSCH_2}
\end{align}

We will show that $F  \in \widetilde{\MF}^{\ast}$ and that $\E_{F}[\psi(Y_1,Y_0)] = \theta$. For the former,
we will specifically show that (i) $F$ satisfies Assumptions \ref{asm:PO}--\ref{asm:OL} and \ref{asm:MSI}--\ref{asm:GRM} (with $F^{\ast}$ replaced by $F$) and that (ii) $F_{YDGX} = F_{YDGX}^{\ast}$.

We begin by verifying condition (i). By the same argument as in the proof of Theorem \ref{thm:characterization_Theta_IC}, it follows that $F$ satisfies Assumptions \ref{asm:PO}--\ref{asm:OL}, or equivalently $F \in \MF^{\ast}$. Applying Lemma \ref{lem:appendix_lemma_1}(i) to (\ref{eq:F_Y1Y0|DobsX_2}), we obtain
\begin{align*}
    F_{Y_{1}Y_{0}|SX}(\cdot,\cdot |S,X) = C\left(F_{Y_{1}|SX}^{\ast}(\cdot|S,X),F_{Y_{0}|SX}^{\ast}(\cdot|S,X)\middle|S,X\right) \mbox{ a.s}.
\end{align*}
By the definition of $\widetilde{\MC}_{X}$, $F_{Y_{1}Y_{0}|SX}$ satisfies conditions \eqref{eq:MSI} and \eqref{eq:GRM}. Hence, $F$ also satisfies Assumptions~\ref{asm:MSI} and~\ref{asm:GRM}. In summary, we have shown that $F$ satisfies condition (i).

Condition (ii), namely $F_{YDGX} = F_{YDGX}^{\ast}$, follows by the same argument as in the proof of Theorem~\ref{thm:characterization_Theta_IC}.

We subsequently show that $\E_{F}[\psi(Y_1,Y_0)] = \theta$. It follows that 
\begin{align*}
   \E_{F}[\psi(Y_1,Y_0)] & = \E_{F_{SX}}\left[\int\int \psi(y_{1},y_{0}) d F_{Y_1Y_0|SX}
            (y_1,y_0|S,X)\right]\\
   & = \E_{F_{SX}}\left[\int\int \psi(y_{1},y_{0}) d F_{Y_1Y_0|DGX}
            (y_1,y_0|S,\Obs,X)\right]\\
    & = \E_{F_{SX}}\left[\int\int \psi(y_{1},y_{0}) d C\left(F_{Y_{1}|DGX}^{\ast}(y_1|S,\Obs,X),F_{Y_{0}|DGX}^{\ast}(y_0|S,\Obs,X)\middle|S,X\right)\right]\\
        & =  \E_{F_{SX}}\left[\int\int \psi(y_{1},y_{0}) d C\bigl(F_{Y_1|SX}^{\ast}(y_1|S,X),F_{Y_0|SX}^{\ast}(y_0|S,X)\big|S,X\bigr)\right]\\
        & =  \E_{F_{SX}^{\ast}}\left[\int\int \psi(y_{1},y_{0}) d C\bigl(F_{Y_1|SX}^{\ast}(y_1|S,X),F_{Y_0|SX}^{\ast}(y_0|S,X)\big|S,X\bigr)\right]\\
        &= \theta,
\end{align*}
where the second and fourth equalities follow from Lemma \ref{lem:appendix_lemma_1}(i),  the third from (\ref{eq:F_Y1Y0|DobsX_2}), and the last from (\ref{eq:theta_copula_2}). The fifth equality follows because \eqref{eq:F_DGX_2} together with Lemma \ref{lem:appendix_lemma_1}(ii) implies $F_{SX}=F_{SX}^{\ast}$.
Thus, we have established that $\E_{F}[\psi(Y_1,Y_0)] = \theta$.

Consequently, we have shown that $F\in \widetilde{\MF}^{\ast}$ and $\E_{F}[\psi(Y_1,Y_0)] = \theta$; thus, $\theta \in \Theta_{IC}^{\dagger}$. Since this argument holds for any $\theta \in \widetilde{\Theta}_{IC}^{\dagger}$, we conclude that $\widetilde{\Theta}_{IC}^{\dagger} \subseteq \Theta_{IC}^{\dagger}$.
\end{proof}

%%%%%%%%%%%%%%%%%%%%%%%%%%%%%%%%%%%%%%%%%%%%%%%%%%%%%%%%%
%%%%%%%%%%%%%%%%%%%%%%%%%%%%%%%%%%%%%%%%%%%%%%%%%%%%%%%%%

\section{Linear Programming}\label{app:linear_program}

This appendix shows that when all random variables are discrete, the optimization problems (\ref{eq:linear_program})--(\ref{eq:LP_constraint_3}) can be formulated as a finite-dimensional linear program. For a generic distribution function $F_{Y_1Y_0SX}$, let $f_{Y_1Y_0SX}$ denote the density function. Let $f_{Y_dSX}^{\ast}$ denote the true density of $(Y_d,S,X)$ induced by $F_{Y_dSX}^{\ast}$, which can be identified as in Lemma~\ref{lem:identification_jointCDF}.

The optimization problems (\ref{eq:linear_program})--(\ref{eq:LP_constraint_3}) can then be reformulated in terms of the density functions as follows:
\begin{align}
      &\underset{f_{Y_1Y_0SX}}{\min/\max}\  \sum_{(y_1,y_0,s,x)} \psi(y_1,y_0) f_{Y_1Y_0SX}(y_1,y_0,s,x) \label{eq:emp_linear_program}  \\
      \mbox{s.t.\ } & \sum_{(y_1,y_0,s,x) } f_{Y_1Y_0SX}(y_1,y_0,s,x)=1 \mbox{  and  } 0 \leq f_{Y_1Y_0SX}(y_1,y_0,s,x)\leq 1\ \forall y_1,y_0,s,x; \label{eq:empLP_constraint_1}\\
       &\sum_{y_d\leq y, y_{d^\prime}\leq +\infty, s^{\prime}\leq s,x^{\prime}\leq x} f_{Y_1Y_0SX}(y_1,y_0,s^{\prime},x^{\prime})= F_{Y_dSX}^{\ast}(y,s,x) \ \ \forall y ,s ,x , d, d^{\prime} \mbox{ with }d\neq d^{\prime}; \label{eq:empLP_constraint_2} \\
      &\frac{\sum_{y_{d} \leq t,Y_{d^\prime} \leq y}f_{Y_1Y_0SX}(y_1,y_0,s,x)}{\sum_{y_{d^\prime}\leq y}f_{Y_{d^\prime}SX}^{\ast}(y_{d^\prime},s,x)} - \frac{\sum_{y_d \leq t,y_{d^\prime} \leq y^{\prime}}f_{Y_1Y_0SX}(y_1,y_0,s,x)}{\sum_{y_{d^\prime}\leq y^{\prime}}f_{Y_{d^\prime}SX}^{\ast}(y_{d^\prime},s,x)} \geq 0 \label{eq:empLP_constraint_3}\\
      &\ \forall t, y, y^{\prime}, d, d^{\prime}, s, x \mbox{ with }y^{\prime} \geq  y \mbox{ and }d\neq d^{\prime}; \nonumber \\
       &\frac{\sum_{y_1 - y_0 > c}f_{Y_1Y_0SX}(y_1,y_0,1,x)}{f_{SX}^{\ast}(1,x)} - \frac{\sum_{y_1 - y_0 > c}f_{Y_1Y_0SX}(y_1,y_0,0,x)}{f_{SX}^{\ast}(0,x)} \geq 0 \ \ \forall c,x. \label{eq:empLP_constraint_4} 
\end{align}
The constraint in equation~(\ref{eq:empLP_constraint_1}) ensures that $f_{Y_1Y_0SX}$ is a valid probability density function, while equations~(\ref{eq:empLP_constraint_2})--(\ref{eq:empLP_constraint_4}) correspond to the original constraints in (\ref{eq:LP_constraint_1})--(\ref{eq:LP_constraint_3}), respectively. Note that the denominators in (\ref{eq:empLP_constraint_3}) and (\ref{eq:empLP_constraint_4}) involve the true densities $f_{Y_{d^\prime}SX}^{\ast}$ and $f_{SX}^{\ast}$, rather than $f_{Y_{d^\prime}SX}$ and $f_{SX}$. This substitution is valid because (\ref{eq:empLP_constraint_2}) implies $f_{Y_dSX} = f_{Y_dSX}^{\ast}$ for $d \in \{0,1\}$.

Since $f_{Y_1Y_0SX}$ takes finitely many values and all constraints in (\ref{eq:empLP_constraint_1})--(\ref{eq:empLP_constraint_4}) are linear in $f_{Y_1Y_0SX}(y_1,y_0,s,x)$, the optimization problem~(\ref{eq:emp_linear_program})--(\ref{eq:empLP_constraint_4}) is a finite-dimensional linear program.

The following remark explains how we estimate the identified sets under Assumption~\ref{asm:MSI} in the empirical application presented in Section~\ref{sec:empirical_illustration}.

\medskip

\begin{remark}[Empirical Application in Section~\ref{sec:empirical_illustration}]\label{remark:empirical_application}
In the empirical application in Section~\ref{sec:empirical_illustration}, when Assumption~\ref{asm:MSI} is imposed together with the self-selection sample, we estimate the sharp bounds by solving the linear program~(\ref{eq:emp_linear_program})--(\ref{eq:empLP_constraint_3}), setting $\psi(y_1,y_0)=\I\{y_1<y_0\}$ for Table~\ref{tab:empirical_results} and $\psi(y_1,y_0)=\I\{y_1-y_0\leq\delta\}$ for Figure~\ref{fig:cdf}, and replacing $F_{Y_dSX}^{\ast}$ and $f_{SX}^{\ast}$ with their empirical counterparts constructed according to Lemma~\ref{lem:identification_jointCDF}.\footnote{Specifically, each component in equations~(\ref{eq:identification_P_S|X}) and~(\ref{eq:identification_F_Yd|SX}) is estimated using the corresponding empirical distribution or density.}\footnote{For equation~(\ref{eq:empLP_constraint_3}), which is derived from Assumption~\ref{asm:MSI}, we exclude $t$- and $y$-values below the 2.5th percentile and above the 97.5th percentile of the estimated outcome distribution to mitigate boundary bias. To reduce memory usage and computation time, we additionally thin the grid by retaining every third $t$- and $y$-value.} We do not impose the constraint in equation~(\ref{eq:empLP_constraint_4}) because Assumption~\ref{asm:GRM} is not maintained. 

When Assumption~\ref{asm:MSI} is imposed without the self-selection sample, we estimate the sharp bounds analogously using the corresponding linear program, but excluding the self-selection variable. Specifically, we drop the self-selection variable $s$ from the linear program (\ref{eq:emp_linear_program})--(\ref{eq:empLP_constraint_3}) and replace $F_{Y_dX}^{\ast}$ and $f_{X}^{\ast}$ with their empirical counterparts based on the sample with $G=\Exp$.\footnote{That is, estimation relies solely on the experimental subsample.}

We solve the linear programs using Gurobi. In finite samples, the optimization problem may become infeasible due to conflicting constraints. In such cases, we apply Gurobi’s \textit{feasrelax} procedure to constraint~(\ref{eq:empLP_constraint_3}), which introduces slack variables and solves an auxiliary optimization problem that minimizes the total relaxation required. This approach restores feasibility while keeping deviations from the original constraints as small as possible.

When estimating $\BP(Y_1 < Y_0 \mid X=x)$, we restrict the sample to units with $X=x$. To estimate $\BP(Y_1<Y_0 \mid S=s)$, following the discussion in Section~\ref{subsec:DTE_parameters}, we define $\psi(y_1,y_0) = \I\{y_0 > y_1\}\cdot \I\{S=s\}/\BP(S=s)$, where $\BP(S=s)$ is estimated by the sample fraction of $D=s$ among units with $G=\Obs$.
\end{remark}

%%%%%%%%%%%%%%%%%%%%%%%%%%%%%%%%%%%%%%%%%%%%%%%%%%%%%%%%%
%%%%%%%%%%%%%%%%%%%%%%%%%%%%%%%%%%%%%%%%%%%%%%%%%%%%%%%%%

\end{document}